\setlist[itemize]{leftmargin=2em,labelsep=0.5em}
\renewcommand{\paragraph}[1]{
  \medskip

  {\bf #1.}}
\definecolor{RoyalBlue}{rgb}{0.0, 0.14, 0.4}
\definecolor{Emerald}{rgb}{0.31, 0.78, 0.47} 
\definecolor{RedViolet}{rgb}{0.78, 0.08, 0.52}
\definecolor{SandyBrown}{rgb}{0.96, 0.64, 0.38}
\definecolor{Gray}{rgb}{0.5, 0.5, 0.5}
\definecolor{CadmiumGreen}{rgb}{0.0, 0.42, 0.24}
\colorlet{sqBlue}{RoyalBlue}
\colorlet{sqGreen}{Emerald}
\colorlet{sqRed}{red}
\colorlet{sqViolet}{RedViolet}
\colorlet{sqBrown}{SandyBrown}
\colorlet{sqGray}{Gray}
\lstdefinelanguage{sapic}{
  morekeywords=[1]{
	out, in, if, then, else, event, insert, delete, lookup, as, in, lock, unlock, let,
        new
      },
  morekeywords		= [2]{bitstring,pkey,skey,time,channel,t_key,t_input,t_output},
  morekeywords		= [3]{snd,fst,dec,enc,pk,h},
  morekeywords		= [4]{free, type, fun, const, reduc, equation,
    end, query, lemma, theory, begin, builtins, functions, equations,
    +, lemma, process, exists-trace, compfun, axiom, pred},
  morekeywords		= [5]{All, Ex, forall, otherwise, private,
    block, evalGround},
  morekeywords		= [6]{attacker, mess,K,KU,KD,A},
  sensitive=true,
  morestring=[b]',
  morestring=[s]{`}{'},
  morecomment		= [n][\itshape]{(*}{*)},
  morecomment		= [n][\bfseries]{(**}{*)},
    comment=[l]{//},
  literate=
        {||}{{$\mid$}}1
        {==>}{{$\implies$}}2
        {<--}{{$\leftarrow$}}2
        {->}{{$\rightarrow$}}2
        {\&}{{\textsf{\&}}}1
	{:=}{{$\defeq$}}2
        {'c}{{\textsl{c}}}3
}
\lstdefinestyle{sapic}{
  language={sapic},
  basicstyle		= \small\ttfamily,
  keywordstyle		= [2]{\mdseries\color{sqGreen}},
  keywordstyle		= [4]{\bfseries\color{sqViolet}},
  keywordstyle		= [3]{\mdseries},
  keywordstyle		= [1]{\bfseries\color{sqBlue}},
  keywordstyle		= [5]{\bfseries\color{sqBrown}},
  keywordstyle		= [6]{\bfseries},
  mathescape		= false,
  columns		= fullflexible,
  keepspaces		= true,
}
\DeclareMathOperator{\verifype}{\mathsf{verify\_pe}}
\DeclareMathOperator{\verifypp}{\mathsf{verify\_pp}}
\DeclareMathOperator{\makepe}{\mathsf{pe}}
\DeclareMathOperator{\makepp}{\mathsf{pp}}
\DeclareMathOperator{\cons}{\mathsf{cons}}
\DeclareMathOperator{\encrypt}{\mathsf{encr}}
\DeclareMathOperator{\enckey}{\mathsf{enc\_key}}
\DeclareMathOperator{\decrypt}{\mathsf{decr}}
\DeclareMathOperator{\verkey}{\mathsf{ver\_key}}
\DeclareMathOperator{\sign}{\mathsf{sign}}
\DeclareMathOperator{\represents}{\mathsf{repr}}
\newcommand{\nil}{\mathrm{nil}}
\newcommand{\myright}{\mathrm{right}}
\newcommand{\myleft}{\mathrm{left}}
\declaretheorem{theorem}
\declaretheorem{lemma}
\declaretheorem{definition}
\declaretheorem{example}
\declaretheorem{invariant}
\declaretheoremstyle[spaceabove=6pt, spacebelow=6pt,  headfont=\itshape,  notefont=\mdseries, notebraces={(}{)},bodyfont=\normalfont,  postheadspace=1em,  qed=$\blacktriangleright$]{examplestyle}
\def\BibTeX{{\rm B\kern-.05em{\sc i\kern-.025em b}\kern-.08em
    T\kern-.1667em\lower.7ex\hbox{E}\kern-.125emX}}
\begin{document}

\ifthenelse{\boolean{longversion}}{
  \title{Automatic verification of transparency protocols (extended version)}
}{
  \title{Automatic verification of transparency protocols}
}

\author{\IEEEauthorblockN{Vincent Cheval}
\IEEEauthorblockA{\textit{INRIA Paris} \\
France\\
vincent.cheval@inria.fr}
\and
\IEEEauthorblockN{Jos\'e Moreira}
\IEEEauthorblockA{Valory AG \\
Switzerland\\
jose.moreira.sanchez@valory.xyz}
\and
\IEEEauthorblockN{Mark Ryan}
\IEEEauthorblockA{University of Birmingham \\
United Kingdom\\
m.d.ryan@cs.bham.ac.uk}
}

\maketitle


\newcommand{\patt}{\mathsf{att}}										
\newcommand{\pevent}{\mathsf{event}}									
\newcommand{\pmsg}{\mathsf{mess}}									
\newcommand{\fatt}[1]{\patt(#1)}										
\newcommand{\fevent}[1]{\pevent(#1)}									
\newcommand{\fmsg}[2]{\pmsg(#1,#2)}									
\newcommand{\psevent}{\mathsf{s}\text{-}\mathsf{event}}				
\newcommand{\pmevent}{\mathsf{m}\text{-}\mathsf{event}}				
\newcommand{\sevent}[1]{\psevent(#1)}								
\newcommand{\mevent}[1]{\pmevent(#1)}								
\newcommand{\blocking}[1]{\mathsf{b}\text{-}#1}					

\newcommand{\qconcl}{\psi}						

\newcommand{\formula}{\phi}						

\newcommand{\tevalp}{\Downarrow'}


\newcommand{\V}{\mathcal{V}}            			
\newcommand{\N}{\mathcal{N}}            			
\newcommand{\Fc}{\mathcal{F}_c}         			
\newcommand{\Fd}{\mathcal{F}_d}         			
\newcommand{\Fe}{\mathcal{F}_e}         			
\newcommand{\Fp}{\mathcal{F}_p}         			
\newcommand{\Fap}{\mathcal{F}_{ap}}           
\newcommand{\Fbp}{\mathcal{F}_{bp}}         	
\newcommand{\Fdata}{\mathcal{F}_{data}}			  
\newcommand{\fail}{\mathsf{fail}}       			


\newcommand{\out}{\mathsf{out}}         			
\newcommand{\inn}{\mathsf{in}}          			
\newcommand{\new}{\mathsf{new}\ }       			
\newcommand{\llet}{\mathsf{let}\ }      			
\newcommand{\inelse}{\ \mathsf{in}\ }   			
\newcommand{\iif}{\mathsf{if}\ }        			
\newcommand{\then}{\ \mathsf{then}\ }
\newcommand{\eelse}{\ \mathsf{else}\ }
\newcommand{\event}{\mathsf{event}}     			
\newcommand{\suchthat}{\ \mathsf{suchthat}\ }	


\newcommand{\Cset}{\mathbb{C}}					

\newcommand{\mgu}[1]{\mathrm{mgu}(#1)}			

\newcommand{\select}{\mathsf{sel}}				
\newcommand{\unselectable}{\mathbb{F}_{usel}}	
\newcommand{\Deriv}{\mathcal{D}}					
\newcommand{\saturated}[1]{\mathsf{sat}(#1)}		

\newcommand{\Cinit}[1]{\Cset_\mathit{init}(#1)}			
\newcommand{\Cevents}[1]{\Cset_\mathit{ev}(#1)}			
\newcommand{\Cuser}{\Cset_\mathit{user}}
\newcommand{\CblockPred}{\Cset_{bp}}
\newcommand{\CblockU}{\Cset_\mathit{b-user}}
\newcommand{\CblockT}[1]{\Cset_{\mathsf{b}}(#1)}
\newcommand{\Cstd}{\Cset_\mathit{std}}
\newcommand{\Csat}{\Cset_\mathit{sat}}

\newcommand{\satisfy}{\vdash}					

\newcommand{\ToSEv}[1]{\lceil #1 \rceil^{sure}}	
\newcommand{\ToB}[1]{\lceil #1 \rceil^{b}}		
\newcommand{\measure}[1]{\mathsf{measure}(#1)}	

\newcommand{\Lemma}{\mathcal{L}}					

\newcommand{\dom}[1]{\mathrm{dom}(#1)}			
\newcommand{\vars}[1]{\mathrm{vars}(#1)}			


\newcommand{\GenClause}[3]{[\![#1]\!]#2#3}		
\newcommand{\Ch}{\mathcal{N}_{c}}
\newcommand{\Occ}{\mathcal{O}}
\newcommand{\Hfacts}{\mathcal{H}}
\newcommand{\occ}{o}
\newcommand{\I}{\mathcal{I}}
\newcommand{\D}{\mathcal{D}}
\newcommand{\subsume}{\sqsupseteq}
\newcommand{\subsumeselect}{\sqsupseteq_s}
\newcommand{\allowedpreds}{\mathcal{S}_p}
\newcommand{\satisfyD}{\vdash}					
\newcommand{\pred}{\mathrm{pred}}
\newcommand{\Pro}{\mathcal{P}}			
\newcommand{\Att}{\mathcal{A}}
\newcommand{\E}{\mathcal{E}}
\newcommand{\with}{\;||\;}
\newcommand{\defrw}{\mathsf{def}}		

\newcommand{\satisfyIO}{\vdash_\mathit{IO}}

\newcommand{\minsize}[2]{\mathrm{min}_{#1}(#2)}

\newcommand{\Cprotocol}{\mathbb{C}_{\mathcal{P}}}
\newcommand{\Cattacker}{\mathbb{C}_{\mathcal{A}}}

\def\C{\mathcal{C}}
\newcommand{\tracestep}[1]{\mathsf{trace}(#1)}
\newcommand{\tracestepIO}[1]{\mathsf{trace}_{IO}(#1)}
\newcommand{\istep}{\rightarrow_i}
\newcommand{\sem}[1]{\mathsf{sem}(#1)}
\newcommand{\names}{\mathrm{names}}		

\newcommand{\mem}{\mathsf{mem}}

\newcommand{\satisfyI}{\vdash_i}
\newcommand{\query}{\varrho}

\newcommand{\tuplesteps}{\tilde{\tau}}
\newcommand{\orderind}{<_{ind}}
\newcommand{\orderindeq}{\leq_{ind}}
\newcommand{\indlemma}[1]{{#1}^{ind}}

\newcommand{\IndHyp}[4]{\mathcal{IH}_{#1}(#2,#3,#4)}
\newcommand{\HypLemmas}[5]{\mathcal{H}yp_{#4,#5}(#1,#2,#3)}

\newcommand{\saturate}[4]{\mathsf{saturate}^{#2}_{#3,#4}(#1)}
\newcommand{\saturateS}[5]{\mathsf{saturateS}^{#3}_{#4,#5}(#1,#2)}

\newcommand{\Rlem}[1]{\text{Lem}$(#1)$\xspace}
\newcommand{\RlemOrd}[1]{\text{Lem}$_o(#1)$\xspace}
\newcommand{\Rind}[1]{\text{Ind}$(#1)$\xspace}
\newcommand{\RindOrd}[1]{\text{Ind}$_o(#1)$\xspace}

\newcommand{\less}{{<}}
\newcommand{\lesseq}{{\leq}}

\newcommand{\multiset}[1]{\{\!\!\{ #1 \}\!\!\}}		
\newcommand{\indexPred}[1]{\mathsf{idx}(#1)}

\newcommand{\strict}[1]{#1^{<}}

\newcommand{\RresOrd}[1]{\text{Res}$_o(#1)$\xspace}

\newcommand{\RNil}{\textsc{Nil}\xspace}
\newcommand{\RPar}{\textsc{Par}\xspace}
\newcommand{\RRepl}{\textsc{Repl}\xspace}
\newcommand{\RRestr}{\textsc{Restr}\xspace}
\newcommand{\RIO}{\textsc{I/O}\xspace}
\newcommand{\RMsg}{\textsc{Msg}\xspace}
\newcommand{\RLetin}{\textsc{Let1}\xspace}
\newcommand{\RLetelse}{\textsc{Let2}\xspace}
\newcommand{\ROut}{\textsc{Out}\xspace}
\newcommand{\RIn}{\textsc{In}\xspace}
\newcommand{\RApp}{\textsc{App}\xspace}
\newcommand{\RNew}{\textsc{New}\xspace}
\newcommand{\REvent}{\textsc{Event}\xspace}
\newcommand{\RPhase}{\textsc{Phase}\xspace}
\newcommand{\RInsert}{\textsc{Insert}\xspace}
\newcommand{\RGetin}{\textsc{Get1}\xspace}
\newcommand{\RGetelse}{\textsc{Get2}\xspace}
\newcommand{\RPredin}{\textsc{Pred1}\xspace}
\newcommand{\RPredelse}{\textsc{Pred2}\xspace}
\begin{abstract}

Transparency protocols are protocols whose actions can be publicly monitored by observers (such observers may include regulators, rights advocacy groups, or the general public). The observed actions are typically usages of private keys such as decryptions, and signings. Examples of transparency protocols include certificate transparency, cryptocurrency, transparent decryption, and electronic voting. These protocols usually pose a challenge for automatic verification, because they involve sophisticated data types that have strong properties, such as Merkle trees, that allow compact proofs of data presence and tree extension.

We address this challenge by introducing new features in ProVerif, and a methodology for using them. With our methodology, it is possible to describe the data type quite abstractly, using ProVerif axioms, and prove the correctness of the protocol using those axioms as assumptions. Then, in separate steps, one can define one or more concrete implementations of the data type, and again use ProVerif to show that the implementations satisfy the assumptions that were coded as axioms. This helps make compositional proofs, splitting the proof burden into several manageable pieces. 
We illustrate the methodology and features by providing the first formal verification of the transparent decryption and certificate transparency protocols with a precise modelling of the Merkle tree data structure.
\end{abstract}

\begin{IEEEkeywords}
protocols, transparency protocols, automatic verification, ProVerif, symbolic model
\end{IEEEkeywords}

\section{Introduction}
Many security protocols assume the availability of trusted third parties, such as cloud computing operators. These are entities which are inherently trusted by definition, and whose corruption or misbehavior might be undetected and have catastrophic consequences from a security standpoint. With the advent of web3 and distributed ledger technologies, there is an ever growing interest in the concepts of transparency and accountability of misbehaviour in security protocols: whereas we cannot prevent a malicious action from happening, we can track actions and take the appropriate corrective measures on the entities that have broken the presumed trust assumptions.

The focus of this paper is on transparency. In general terms, we say that a protocol is a \emph{transparency protocol} if (some of) its actions can be publicly monitored by a collection of observers, which can, at a later stage, provide evidence that certain actions occurred. By technical measures to ensure that the details of such actions are made available to relevant parties (whether the general public, regulators, rights advocacy groups or individuals), the protocol reduces the amount of trust required of trusted third parties, and deters malicious actions.

To motivate our work, we consider two prominent use cases of transparency protocols. On one hand, we consider the Google initiative on \emph{certificate transparency} \cite{CT} for monitoring and autiting digital certificate issuance which has become an IETF standard~\cite{rfc9162}. It enables the detection of illegitimate certificates that have been produced either erroneously or deliberately. On the other hand, we also consider \emph{transparent decryption}~\cite{Ryan:17}, a protocol that ensures visibility of decryption requests, and has applications in a variety of areas such as surveillance, data sharing and location-based services, among many others.



Appropriate transparency conditions for a protocol can be defined using an
append-only ledger, or a blockchain. We assume that some party is
willing to maintain the ledger, and we demonstrate how this can be
organised in such a way that the maintainer can produce data
demonstrating that it is, indeed, maintaining the log correctly. The ledger
contents are intended to be
accessible by any party to whom transparency is being
offered. Therefore, any such party can determine that the ledger is
running correctly (or that it is not running correctly, or that their
access to it has been denied).

\paragraph{Automatic verification}
Our paper concerns how to verify systems for transparency protocols.
That is, how to prove that the system really does guarantee that 
the actions of monitored parties are made available to observers.

Several tools have been proposed for automated analysis of security
protocols. Some of these tools impose restrictions on the protocols in
order to achieve termination of the analysis.  For example, the tools
may assume a bounded number of sessions, like Avispa
\cite{armando2005avispa}, DeepSec \cite{cheval2018deepsec}, or Akiss
\cite{chadha2016automated}. These tools are efficient at finding
attacks on small protocols but quickly face state explosion for
complex protocols. Hence for large and complex protocols, tools like
Tamarin \cite{meier2013tamarin} and ProVerif \cite{BlanchetFnTPS16} are often preferred. They
both offer a flexible framework to model a protocol and its
primitives, as well as their security properties. One key feature of
Tamarin is that it offers an interactive mode when the tool fails to
prove a protocol, while ProVerif typically offers more automation.

We work within the framework of ProVerif. It supports cryptographic primitives including
symmetric and asymmetric encryption; digital signatures; hash
functions; bit-commitment; and signature proofs of knowledge. The tool
is capable of evaluating secrecy properties, authentication
properties, and indistinguishability properties. In ProVerif, protocol
analysis is considered with respect to an unbounded number of sessions
and an unbounded message space. The tool is capable of attack
reconstruction: when a property cannot be proved, an execution trace
which falsifies the desired property can often be constructed.

 The transparency ledger may be implemented as a Merkle tree, which means that the ledger
maintainer can produce proofs that a data item is in the ledger
(\emph{proof of presence}), and that the ledger is only being appended
to (\emph{proof of extension}).  To model this in ProVerif, we use
user-defined predicates whose semantics is defined with Horn clauses.
ProVerif can work with arbitrary Horn clauses, but adding them often leads
to non termination of the ProVerif resolution strategy. Recent work~\cite{blanchet:hal-03366962} has extended ProVerif with notions of
lemmas and axioms, in an effort to address this non-termination issue.
However, lemmas and axioms have strong limitations with user-defined, attacker or message
predicates. We address these limitations in this paper.

\paragraph{Our contributions}
The paper develops new ProVerif capabilities and a methodology for using them.
Our contributions are as follows:
\begin{itemize}
\item We introduce new capabilities in ProVerif; more precisely, we define semantics and algorithms that allow ProVerif to work with lemmas and axioms that involve user-defined predicates.
\item We prove soundness of the algorithms, and implement them in a new version of ProVerif.
\item We introduce a new methodology for ProVerif, in which the proof of a protocol can be given based on assumptions about the behaviour of a data type (we code these assumptions as an \emph{interface}); then, in  separate steps we formally prove the assumptions hold for one or more concrete realisations of the data type. This methodology would syntactically not be possible without our extension of ProVerif.
\item We model two transparency protocols (transparent decryption and certificate transparency), and successfully instantiate the proposed methodology.
\end{itemize}
The paper is supported by our new version of ProVerif and the ProVerif
scripts for transparent decryption, 
which can be found in \cite{supplement}. 
\ifthenelse{\boolean{longversion}}{
Our code has been reviewed by the ProVerif owners, and it will be incorporated in the next ProVerif release. Detailed definitions and proofs are provided in Appendix.
}{
Our code has been reviewed by the ProVerif owners, and it will be incorporated in the next ProVerif release. Detailed definitions and proofs are provided in~\cite{tech}.
}

\section{Background and related work}

\subsection{Transparency protocols}

Several recent protocols use a publicly-accessible append-only log
data structure to achieve a \emph{transparency property}. One of the
earliest and most widely deployed protocol in this category is
\emph{certificate transparency} \cite{CT}. The core idea of certificate
transparency is that certificates are accepted by browsers only if
they are accompanied by a proof that they are present in an
appropriate log. Insisting on certificates being in a
publicly-accessible log means that the existence of the certificate is
\emph{transparent}. It prevents situations in which corrupted CAs
issue rogue certificates without being noticed.
This idea has been generalised to define more ambitious public-key
infrastructures, such as ARPKI \cite{ARPKI} and DTKI \cite{DTKI},
which aim to make all the infrastructure parties behave transparently.

\paragraph{The log and its proof data}
As mentioned, transparency protocols rely on an append-only log. The log is not assumed to be trustworthy; rather, anyone
 can verify the data it outputs, and if this verification
succeeds, then the transparency property is upheld.  The \emph{expected
  behaviour} of the log may vary from protocol to protocol; here, we
give a generic example \cite{Ryan:ndss}. The log $L$ is
organised as an append-only Merkle tree. For our purposes, a Merkle
tree storing data $R_1,R_2,\dots,R_n$ is a binary tree whose leaves
(when considered in left-to-right order) store the data
$R_1,R_2,\dots,R_n$ and whose non-leaf nodes store $H(c_\ell, c_r)$
where $H$ is a hash function and $c_\ell$ and $c_r$ is the data stored
at the left and right child node respectively. The maintainer of $L$
runs three protocols:
\begin{itemize}
\item On request, it outputs the current value $h$ stored at $L$'s root
  (called the \emph{root tree hash} of $L$).
\item On input $R$, it outputs data which proves that $R$ is present
  in $L$ (or it outputs $\bot$ if that is not the case). This data consists of
  data stored in some of the nodes of $L$, and its size is $O(\log
  n)$.
\item On input $h_1$ and $h_2$, it outputs data which proves that $L$
  previously 
  had the root tree hash $h_1$, and subsequently the root tree hash
  $h_2$ (or it outputs $\bot$ if this is not the case).  This data also consists of
  data stored in some of the nodes of $L$, and its size is $O(\log
  n)$.
\end{itemize}

The log's behaviour is fully verifiable, and therefore there
is no  trust assumption on the log maintainer.
It  can be malicious, and still the security property is upheld.

\paragraph{Verification of transparency protocols}
Several previous papers have applied protocol verification techniques
to transparency protocols.  The papers on ARPKI and
DTKI \cite{ARPKI,DTKI} both use the Tamarin prover to prove some
security properties. However, both papers make the same huge
abstraction: they treat the log as a list. The DTKI paper acknowledges
that formalising and modeling the complex data structures of transparency protocols
is an unsolved problem. Certificate Transparency was proved using Tamarin in \cite{Kunemann:accountability} using a simplified setting, for example by modelling the log as a trusted global database shared between agents. Additionally, in that work, proofs of presence and proof of extension were not modeled; instead, they use placeholders in the shared trusted database to act as proofs. In contrast, we model proofs and their data structure precisely.

\subsection{Transparent decryption}\label{sec:transparent_decryption}

\emph{Transparent decryption} is a transparency protocol, aiming to
prevent certain decryptions from being performed stealthily; rather,
the decryption operation inevitably produces
evidence of the fact that the decryption has taken place.  This can be
used, for example, to support privacy: it can mean that a subject is
alerted to the fact that information about them has been
decrypted. Among other uses, transparent decryption has been proposed for accountable
execution of search warrants and data interception
\cite{kroll2014secure, nunez2019escrowed, Ryan:17}; data sharing between
organisations \cite{idan2020prshare}; in
vehicle and IoT data applications \cite{9627586}. A company has begun
building products using these ideas \cite{pad.tech}.
Transparent
decryption is an accountable algorithm in the sense of
\cite{kroll2015accountable}.

In transparent decryption, the decryption key is distributed among a
set of agents (called \emph{trustees}); they use their key share only
if the required transparency conditions have been satisfied.
Typically, the transparency condition can be formulated as the
presence of the decryption request in a transparency log
\cite{frankle2018practical}.

We present a minimal system for transparent decryption below.  The
system satisfies the basic security property for transparent
decryption, which we also detail below.

\paragraph{How it works} More formally, the system works as follows:
\begin{itemize}
\item \emph{Subjects} create ciphertexts using a public encryption key $ek$.
\item Shares $dk_1,\dots,dk_n$ of the decryption key are held by
  \emph{trustees} $T_1,\dots,T_n$. For example, this might be a threshold
  decryption system, so that any $m$ out of $n$ trustees are
  sufficient to decrypt.
\item A \emph{decryption requester} $G$ can request the decryption of
  a ciphertext. This involves recording the request in a \emph{log} $L$.
\item $L$ is organised as a Merkle tree. This means that the
  maintainer can issue data that demonstrates it's maintaining $L$ in an
  append-only fashion (see below).
\item \emph{Trustees} are automatic processes which accept ciphertexts and
  log data as input.  The log data attests that certain information has
  been placed in the log $L$, and that $L$ has been maintained
  append-only since it was last seen by the trustee. The trustees
  verify the log data. If (and only if)
  the data verifies correctly, a
  trustee will perform its part in decrypting the relevant ciphertext
  and output the result.
\item Subjects can try to inspect the log contents. If their attempt is successful, they will see from the log
  whether their ciphertexts have been decrypted or not. If they are
  not successful (for example, the data is inconsistent or their
  access is denied) then they should assume that their ciphertexts
  have been decrypted.
\end{itemize}

The trust assumption for transparent decryption is that the trustees
behave correctly. As mentioned, this means that they perform their
part of the decryption if, and only if, the verification of the proofs
in the input data
is successful.

\paragraph{Security property} We aim to prove the following
property:
\begin{quote}
  Suppose an honest subject encrypts a secret $s$ with $ek$, and later
  the secret $s$ becomes known by some other party (e.g., any of the
  mentioned parties, or an attacker, or anyone else). Suppose the
  subject successfully accesses the log $L$ and successfully verifies
  the log data. Then the subject sees the decryption request for $s$ in $L$.
\end{quote}

\paragraph{Trustees and their actions} 
Trustees are designed to be very simple and to have minimal
computational requirements, so that their trustworthiness can be
established as straightforwardly as possible. They do not have to
store any voluminous data; they store just three data items, and they
run two protocols (see Fig.~\ref{fig:trustee}).

\begin{figure}
  \centering
  \begin{tabular}{|c|c|}
    \hline
    \multicolumn{2}{|c|}{
    \begin{minipage}{0.85\linewidth}
      \ \\[-2ex]
      \begin{center}
        $T_i$ stores: $h$, $dk_i$, $sk_i$  \\\
      \end{center}
    \end{minipage}
    }
    \\\hline
    \begin{minipage}[t]{0.48\linewidth}
      \ \\
      \mbox{}\ $\bullet$\quad Input: $R, h', \pi, \rho$\\
      \mbox{}\ $\bullet$\quad Compute:\\
      \mbox{}\ \quad --  Verify $\pi$: $R$ in $h'$\\
      \mbox{}\ \quad --  Verify $\rho$: $h'$ extends $h$\\
      \mbox{}\ \quad --  result := dec($dk_i, R$)\\
      \mbox{}\ \quad --  $h$ := $h'$\\
      \mbox{}\ $\bullet$\quad Output: result\\\
    \end{minipage} &
    \begin{minipage}[t]{0.45\linewidth}
      \ \\
       \mbox{}\ $\bullet$\quad  Input: $v$\\
       \mbox{}\ $\bullet$\quad  Compute:\\
       \mbox{}\hspace{1.7em}r := sign$(sk_i, (v,h))$\\
      \mbox{}\ $\bullet$\quad  Output: r
    \end{minipage}\\\hline
  \end{tabular}
  \caption{Protocols run by trustee $T_i$. The trustee stores the most recent
    root tree hash $h$ of the log that it has seen, and a
    decryption key $dk_i$ and a signing key $sk_i$ share. The protocol on
    the left inputs a request $R$ and some other parameters, and
    outputs a decrypted result. The protocol on the right inputs a
    nonce $v$, and outputs a signature on $(v, h)$.}
  \label{fig:trustee}
\end{figure}

Trustees can be implemented in a variety of ways. For example, they
may be cloud-based software processes run by organisations with a high reputation such as
charities and foundations. These organisations can use hardware-based
attestation to give further confidence about the binary code trustees
are running, and the secure storage and use of their
keys. Alternatively, trustees could be implemented on dedicated
hardware modules, such as the TPM \cite{proudler2014trusted} or Google
Titan chip \cite{johnson2018titan}, or a RISC-V
chip like Open Titan \cite{opentitan,moller2021preliminary}.

The system we have described is a minimal one that provides decryption
transparency. It could readily be extended to have some additional
properties, such as trustee obvliviousness
(namely, the inability of a trustee to obtain any information about
the decryption request or its result), and proper authentication of
the decryption request (see, e.g. \cite{kroll2014secure}).

\paragraph{Applications} Transparent decryption can be applied in many
areas to enhance privacy. We give some examples.
\begin{itemize}
\item Alex can choose to share her location in encrypted form with
  some nominated friends and family, called \emph{angels} by the app
  that implements this idea \cite{pad-places-app}. No-one except her
  angels can view her location; and she can monitor whether and when
  they do so.
\item Suppose Alex is being investigated by the police. In an effort
  to establish her innocence, she may choose to hand over her phone.
   With transparent decryption, Alex can upload her phone contents in encrypted form.
   Then Alex gets evidence of what part of this
   uploaded material is decrypted.
\item Alex provides \emph{know-your-customer} (KYC) information to her bank, so that if necessary
   later, it can carry out anti-money laundering procedures.
   Recent proposals \cite{parra2017kyc} suggest centralising KYC
   registers, to make the procedure more efficient. With transparent
   decryption, the ill effects of such centralisation can be mitigated
   by making money laundering investigations more transparent.
\end{itemize}

\subsection{ProVerif} 
ProVerif is a software tool for automated reasoning about the security
properties of cryptographic protocols.  It was first released in 2002,
and has been continuously developed for the last 20 years. 
It has been
used to analyze hundreds of protocols, including major
deployed protocols such as TLS \cite{c4}, Signal \cite{c25}, Noise \cite{c26},
avionic protocols \cite{c9}, and the Neuch\^atel
voting protocol \cite{c17}. 

   A cryptographic protocol in ProVerif is specified as follows:
\begin{itemize}
\item Cryptographic primitives (such as symmetric and asymmetric encryption or digital signatures) are specified typically as \emph{reduction rules}, such as this one for public key encryption and decryption: $\decrypt(k, \encrypt(\enckey(k), m)) = m$.
\item The behaviour of the protocol participants is described using the process calculus syntax (see below).
\item The properties which are to be checked are specified as queries. ProVerif supports different kinds of properties; in this paper, we restrict our attention to reachability and correspondence properties. For example, the correspondence property $\fevent{ev(x)} \Rightarrow \fatt{x}$ says that if the event $ev$ occurs with a parameter value $x$, then the value $x$ was previously known by the attacker.  
\end{itemize}

\paragraph{Syntax} A simplified syntax for the process calculus \emph{terms}, \emph{expressions}, \emph{events}, \emph{predicates} and \emph{processes} is displayed in Fig.~\ref{fig:syntax}. ProVerif's calculus also supports additional constructs, e.g. for tables, phases, extended terms, \ldots, but we omit them for simplicity as our results can be easily generalized to these constructs. 

Terms $M, N, \dots$ are built over variables, names and application of \emph{constructor} function symbols from a finite set $\Fc$. \emph{Destructor} function symbols, from a finite set $\Fd$, can manipulate terms and must be evaluated in the \emph{assignment} construct. Unlike constructor function symbols, the evaluation of a destructor may \emph{fail}, or in other words, may evaluate to the special constant $\fail$. Typically, in the assignment construct $\llet x=D \inelse P \eelse Q$, the expression $D$ will be evaluated; if its evaluation fails then the process $Q$ will be executed, otherwise the variable will be instantiated by its result and $P$ will be executed. The exact behavior of a destructor function symbol is defined by a list of rewrite rules given by the user (see~\cite{blanchet:hal-03366962} for the complete definition of the evaluation of an expression).
For example,
\[
\inn(c, y); \llet x = \decrypt(k, y) \inelse \out(c, x) \eelse \out(c, 0)
\]
outputs the plaintext $m$ if a ciphertext of the form $\encrypt(\enckey(k), m)$ is given
as input, otherwise it outputs~0.
\begin{figure}[ht]
\begin{center}
\begin{minipage}{\columnwidth}
  \begin{tabbing}
    $P,Q$  \= ::= \hspace{3.5cm}\=   pro\=cesses\\
    \>$0$\>\> nil\\
    \>$\out(N,M);P$\>\>output\\
    \>$\inn(N,x);P$\>\>input\\
    \>$P \mid Q$\>\>parallel composition\\
    \>$!P$\>\>replication\\
    \>$\new a; P$\>\>restriction\\
    \>$\llet x=D \inelse P \eelse Q$\>\>assignment\\
    \>$\llet x_1,\ldots, x_n \suchthat p(M_1,\ldots,M_k) \inelse P \eelse Q$\\
    \> \>\>predicate evaluation\\
    \>$\event(ev(M_1,\ldots,M_k)); P$\>\>event
  \end{tabbing}
  \end{minipage}
\end{center}
\caption{Syntax of the core language of ProVerif.}
\label{fig:syntax}
\end{figure}

A substitution $\sigma$ is an assignment of terms to some variables;
for example, $\{x\mapsto \encrypt(k, m)\}$ is a
substitution.  If $M$ is a term, then $M\sigma$ is the term obtained
by replacing any $x$ mapped by the substitution with the term that it
maps to. For processes $P$ and facts $F$, applying the substitution to
obtain $P\sigma$ and $F\sigma$ is
defined similarly (taking care not to substitute bound variables).

The process calculus also contains standard constructs
$\out(N,M); P$ (representing the output of a term $M$ on a channel $N$),
$\inn(N,x); P$ (the input on channel $N$ of a message which gets bound to a variable $x$),
$\new a; P$ (the generation of a fresh name $a$),
$P \mid Q$ (the concurrent execution of processes),
$\event(ev(M_1,\ldots,M_k)); P$ (the recording of event execution), and
$!P$ (the concurrent execution of an unbounded number of copies of a process).

Less common is the construct for predicate evaluation, that is, $\llet x_1,\ldots, x_n \suchthat p(M_1,\ldots,\allowbreak M_k) \inelse P \eelse Q$. In this construct, the variables $x_1,\ldots, x_n$ must occur in the predicate $p(M_1,\ldots,M_k)$. If $x_1,\ldots, x_n$ can be instantiated, say by a substitution $\sigma$, such that $p(M_1,\ldots,M_k)\sigma$ holds then $P\sigma$ is executed; otherwise $Q$ is executed. Note that predicate evaluations are most commonly used with a classical if-then-else conditional, corresponding in fact to the evaluation without variables ($n = 0$).

\paragraph{User-defined predicates and clauses}
ProVerif allows users to define predicates, and to give their
semantics by means of Horn clauses.  This is useful for defining predicates on data types. For example, the list data structure can be represented by a constant $\nil$ and a constructor $\cons$. One can define a membership predicate $\mem$ using the Horn clauses:
\begin{align*}
\forall x,\ell.\ &\mem(x,\cons(x,\ell))\\
\forall x,y,\ell.\ &\mem(x,\ell) \rightarrow \mem(x,\cons(y,\ell)).
\end{align*}
As variables in Horn clauses are always universally quantified, we will omit writing the quantifier in the rest of this paper. 


\begin{definition}[derivation]
A derivation $\Deriv$ of a fact $F$ from a set of clause $\Cuser$ is a tree whose nodes are labeled by Horn clauses in $\Cuser$ and edges are labeled by ground facts such that the incoming edge of the root is labeled by $F$.  For all nodes $\eta$ in $\Deriv$ labelled by a clause $F_1 \wedge \ldots \wedge F_n \wedge \phi \rightarrow C$, there exists a substitution $\sigma$ such that:
\begin{inparaenum}[(i)]
	\item the incoming edge of $\eta$ is labeled by $C\sigma$;
	\item $\eta$ has $n$ outgoing edges labeled by $F_1\sigma, \ldots, F_n\sigma$ respectively;
	\item $\phi\sigma$ is true.
\end{inparaenum}
\end{definition}

The derivability of facts allows us to define the true statements of a predicate $p$, denoted $\sem{p}$, as the set of facts $F = p(M_1,\ldots, M_n)$ derivable from $\Cuser$.  Note that only user-defined predicates, equalities and disequalities on terms can occur in the clauses from $\Cuser$. As such, the semantics of a user-defined predicate is independent from any protocol.

\begin{example}
$\mem(a,\cons(b,\cons(a,\nil))) \in \sem{\mem}$ as it is derivable by the following derivation with $\sigma_1 = \{ x \mapsto a; y \mapsto b; \ell \mapsto \cons(a,\nil)\}$ and $\sigma_2 = \{ x \mapsto a; \ell \mapsto \nil\}$.
\begin{center}
\vspace{-0.7cm}
\begin{tikzpicture}[
	derivation/.style = {draw,regular polygon, regular polygon sides=3,inner sep = -0.1cm },
	rule/.style = {draw, rounded corners},
	auto
	]
	
	\node (root) {};
	\node[rule] (R1) [below=0.5cm of root,label=right:{\ with $\sigma_1$}] {$\mem(x,\ell) \rightarrow \mem(x,\cons(y,\ell))$};
	\node[rule] (R2) [below=0.5cm of R1,label=right:{\ with $\sigma_2$}] {$ \mem(x,\cons(x,\ell))$};
		
	\path[<-] 
		(R1) edge node[auto] {$\mem(a,\cons(b,\cons(a,\nil)))$} (root)
		(R2) edge node[auto] {$\mem(a,\cons(a,\nil))$} (R1);
\end{tikzpicture}
\end{center}
\end{example}

Optionally, a predicate can be declared as a ``blocking'' predicate, meaning that there are no clauses containing the predicate in the conclusion of the clause. In this case, ProVerif proves properties that hold for \emph{any} definition of the
considered blocking predicate.

\paragraph{Semantics of processes}
\newcommand{\lstep}[1]{\xrightarrow{#1}}
The semantics of processes is defined by the means of a reduction relation $\lstep{\ell}$ between \emph{configurations} which express the current state of the the execution of the processes interacting with the attacker. Formally, a configuration is a triple $\E,\Pro,\Att$ where $\E$ is the set of names used in the configuration, $\Pro$ is a multiset of processes, and $\Att$ is a set of terms representing the knowledge of the attacker.

\ifthenelse{\boolean{longversion}}{
The full set of rules defining the relation $\lstep{\ell}$ is provided in Appendix (and in~\cite{ProVerifManual}) and we only show a small extract below. 
}{
The full set of rules defining the relation $\lstep{\ell}$ is provided in~\cite{tech} (and in~\cite{ProVerifManual}) and we only show a small extract below. 
}
For example, the following rule represents that an event is triggered:
\[
\E,\Pro \cup \multiset{\event(ev).P} ,\Att \lstep{\fevent{ev}} \E, \Pro \cup \multiset{P}, \Att
\]
The rule for predicate evaluation (when the predicate evaluates to true) is:
\begin{multline*}
\E, \Pro \cup \multiset{\llet x_1,\ldots, x_n \suchthat pred \inelse P \eelse Q} \\ \lstep{}  \E,\Pro \cup \{P\sigma\}, \Att
\end{multline*}
when $pred = p(M_1,\ldots,M_k)$ and there exists a substitution
$\sigma$ such that $\dom{\sigma} = \{ x_1,\ldots,x_n\}$ and
$p(M_1\sigma,\ldots,M_k\sigma) \in \sem{p}$.

An \emph{execution trace} of a process $P$ is then defined as a
sequence of applications of the relation $\lstep{\ell}$ starting from
the initial configuration $\mathcal{C}_1 =
(\emptyset,\multiset{P},\emptyset)$, i.e. $T = \mathcal{C}_1
\lstep{\ell_1} \ldots \lstep{\ell_n}
\mathcal{C}_{n+1}$. 

In addition to user-defined predicates, ProVerif considers several native predicates: $\fatt{M}$ indicating that the attacker knows $M$; $\fmsg{M}{N}$ indicating that a message $N$ has been sent on the channel $M$ and $\fevent{ev}$ indicating that an event $ev$ has been raised. Satisfiability of a fact $F$ by a trace $T$,
denoted $T \satisfy F$, is given by the labels and configurations in
$T$ (e.g. when $F = \fevent{ev} = \ell_i$ for some $i$).
 Naturally, $T \satisfy F$ with $\pred(F) \in \Fp$ when $F \in \sem{p}$.

Finally, a correspondence query $F_1 \wedge \ldots \wedge F_n\Rightarrow \psi$ can be seen as the first order logic formula $\Psi = \forall \tilde{x}. (F_1 \wedge \ldots \wedge F_n \Rightarrow \exists \tilde{y}. \psi)$ where $\tilde{x} = \vars{F_1,\ldots,F_n}$ and $\tilde{y} = \vars{\psi} \setminus \tilde{x}$. The correspondence query holds when for all traces $T$ of $P$, $T \satisfy \Psi$.

\paragraph{Lemmas and axioms} 
The problem that ProVerif tries to solve is undecidable in general \cite{abadi06}; therefore, by design
ProVerif is not complete: it may fail to terminate, and it may yield false attacks. Much work has been done to make it more complete in practice. A significant step in this direction  introduces \emph{lemmas and axioms} \cite{blanchet:hal-03366962} as a way to guide derivations in ProVerif, and also to deal with some of the abstractions introduced by the tool.



An \emph{axiom} is an instruction to ProVerif to consider some facts as true, even if they cannot be proved by ProVerif from the protocol process. Typically, an axiom is used if one has a separate (perhaps manual) proof of the fact in question. Consider, for example, a smartcard which stores two secrets, $s_1$ and $s_2$. It allows the user to choose either one of them to be revealed, but not both. (This might be used in a lottery, for example.) We could model the smartcard with the following process:
\[
\inn(c, x); ( \iif x=1 \then \out(c, s_1) \mid \iif x=2 \then \out(c, s_2) )
\]

The user chooses to enter 1 or 2, and obtains the corresponding secret; after that, the smartcard does not accept any further input.

The intended security property is that at most one secret is revealed. Unfortunately, ProVerif is not able to prove the security of this device with the given process description. The reason is that ProVerif introduces an abstraction, which allows it to consider a derivation in which $x$ has sometimes the value 1 and sometimes the value 2, and hence the output of $s_1$ and $s_2$ can both occur. This is not a valid trace; it is a false attack introduced by ProVerif's abstraction.

Axioms allow us to rectify this situation. We write the process as follows:
\begin{align*}
    &\new st; \inn(c, x); \event(\mathrm{Uniq}(st, x)); \\
    &\quad( \iif x=1 \then \out(c, s_1) \mid \iif x=2 \then \out(c, s_2) )
\end{align*}
and we add the axiom
\[
\mathrm{Uniq}(st, x_1) \wedge \mathrm{Uniq}(st, x_2) \Rightarrow x_1 = x_2.
\]
The axiom asserts that only one value of $x$ is allowed. This axiom is valid, since for a given $st$ there can be only a single input of $x$. ProVerif is able to use this axiom to prove the security of the device. This kind of axiom, stating that only one value of an input is allowed, is very useful in increasing the precision of ProVerif.

A \emph{lemma} is similar to an axiom as ProVerif uses it in proofs to establish the desired property but it must be able to prove it first
(while it does not try to prove axioms). Lemmas are therefore a useful way of decomposing a verification into smaller pieces.

\section{A methodology to model protocols with complex data structures}
An intuitive, first attempt to model the transparent decryption protocol in ProVerif
requires one to define: 
\begin{itemize}
	\item The required equational theories (e.g. public key encryption).
	\item The predicates and clauses defining the data structure for the log maintainer.
	\item The predicates and clauses that represent true statements about the proof of presence and extension that the data structure must satisfy.
	\item The process defining the protocol.
	\item The security properties that we are interested to test.
\end{itemize}

Unfortunately, this ``monolithic'' attempt to prove the security
properties does not always work when the protocol has to deal with
complex or recursive data structures, and it will often result in
ProVerif failing to terminate. In fact, we have encountered this issue even when we try this approach by substituting Merkle trees with a simpler data structure like a hash list.

\begin{figure}
	\centering
	\includegraphics[width=0.9\linewidth]{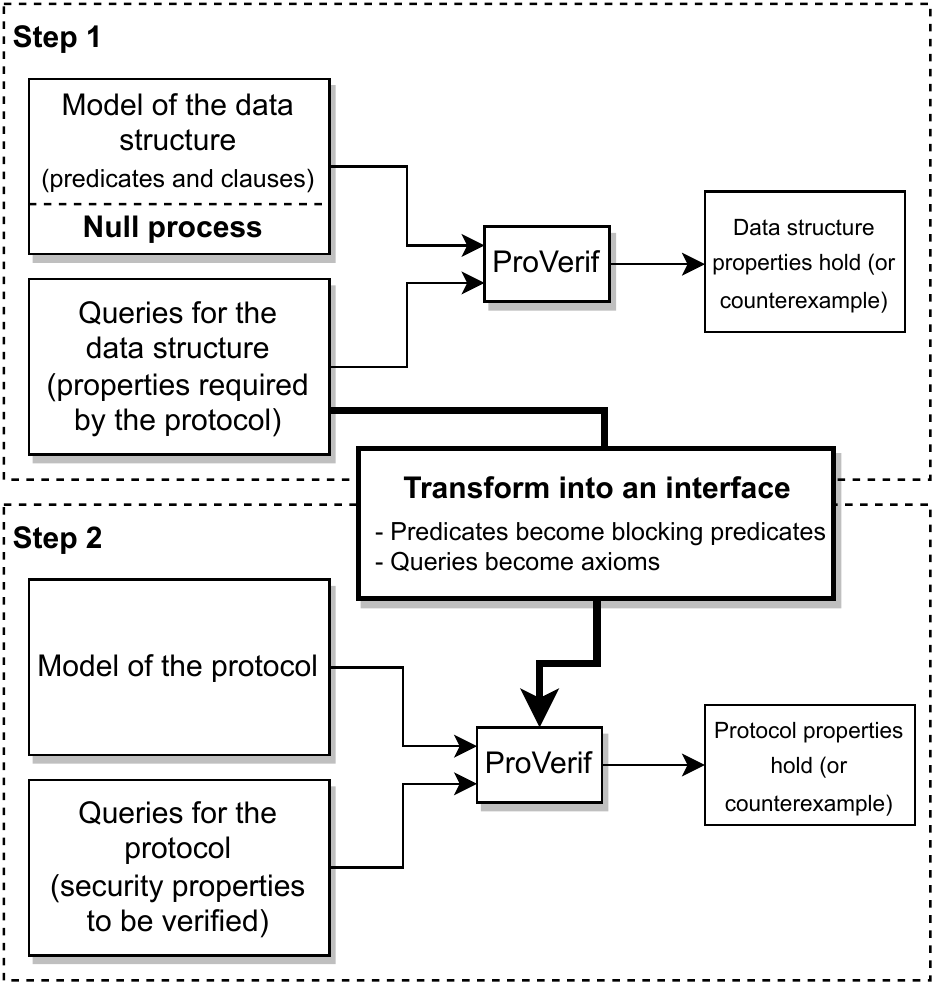}
	\caption{A methodology to prove security in protocols involving complex data structures}
	\label{fig:methodology}
\end{figure}

One way to avoid the monolithic proof is to make an appropriate abstraction of the Merkle tree data structure. 
This could be achieved by defining the properties the data structure is expected to satisfy, and proving the protocol based on assuming those properties. Later, as a separate step, one can look at whether a particular implementation satisfies the assumed properties.

This observation suggests the following generic methodology to approach proving security when a protocol requires usage of complex data structures. The key idea is to decompose the proof, separately proving security for the data structure and for the protocol separately, using these steps:
\begin{enumerate}[label={S\arabic*:}, ref={S\arabic*}] 
	\item \label{step:method1} Identify and extract the properties of the data structure that are required by the protocol, and prove that there is an implementation of a data structure that satisfies these properties. As the semantics of the data structure are protocol-independent, so are these properties.
	\item \label{step:method2} Assuming that we have a data structure with such properties, prove security for the protocol itself.
\end{enumerate}

We depict this methodology in Fig.~\ref{fig:methodology}. For the case of accountable decryption it works as follows. In Step~\ref{step:method1}, we propose a suitable model for the data structures employed by the ledger, which we instantiate either as a hash list or as a Merkle tree. We also define the desired properties of the data structure from the point of view of the protocol. The reason to consider two data structures is to show that the identified ``interface of security properties'' and the methodology are sufficiently generic to allocate several data structures and be considered for several protocols. Next, we use ProVerif to prove that the data structure satisfies the properties in the null process.

On the other hand, in Step~\ref{step:method2}, we adopt the properties for the data structure as an interface. This requires to take two actions: transform these properties as ProVerif axioms, and transform any related predicate to a blocking predicate. This will allow us to prove the security property of the protocol disregarding the particular implementation of the data structure. Observe that this step is independent from Step~\ref{step:method1} above, i.e. they both can be executed in parallel. 

By decomposing the problem into two parts, ProVerif has a better chance of avoiding nontermination, without losing soundness of the proof. Moreover the ``interface'' approach allows for proving security for more complex data structures than Merkle trees, and allows proving security of the protocol for any underlying data structure satisfying the data structure properties.

The remainder of this section describes in detail the methodology, by detailing the implementation of the two data structures commented above, the interface of security properties of the data structure, and the modelling and analysis of the accountable decryption protocol.

\subsection{Modeling the ledger data structure} \label{sec:datastructure}
Recall that the ledger is an untrusted party that maintains an append-only log $L$.
From the point of view of proving security properties for a protocol, it does not matter which is the particular data structure employed by the log maintainer, as long as it is append-only, and provides an interface to construct and verify proofs of presence and proofs of extension. Whichever data structure we use must define the clauses from which valid proofs of presence and the proofs of extension can be derived. We discuss two common data structures below, namely hash lists and Merkle trees.

Even though the most  interesting case is indeed the latter one, the main ideas of our approach are more clearly seen using hash lists. For this reason, we provide a more detailed explanation on how to model the proofs on hash lists, whereas we provide a brief example on Merkle trees, avoiding routine technicalities that might hinder the main points we want to state. The complete models, both for hash lists and Merkle trees are available at~\cite{supplement}.

For convenience, we parameterize the hash function $H$ with two values, e.g. $H(x,h)$, where $h$ is an output of the hash function. This can be implemented simply as concatenating the values $x$ and $h$ in a regular hash function. Hashes of single values $x$ are interpreted as $H(x, h_0)$, where $h_0$ denotes the null hash.

\subsubsection{Hash lists}
First, consider the case where $L$ is a hash list. That is, $L$ is represented by $h$, where
\[
h= H(R_n, \dots, H(R_2, H(R_1, h_0))\cdots).
\]
In this case, a proof of presence of $R_i$ in the list represented by $h$ simply consists of the elements inserted after $R_i$, 
plus the hash of the list before inserting $R_i$, i.e.
\begin{align*}
\pi = \makepp\bigl(
	&(R_n, \dots, R_{i+1}),h'\bigr),
\end{align*}
where  $h'=H(R_{i-1},\dots, H(R_1, h_0)\ldots)$ and $\makepp$ is  the constructor for the proof of presence.
The predicate $\verifypp(\pi, R_i, h)$ states that $\pi$ is a valid proof of presence of $R_i$ in the list represented by $h$. For a hash list,
$\verifypp$ will  check that the following equation holds:
\begin{equation}
h = H(R_n, \dots, H(R_{i-1}, H(R_i, h'))\cdots).\label{eq:verifypp_impl}
\end{equation}

In order for ProVerif to handle predicates in the resolution algorithm, we need to provide the
set of Horn clauses from which all valid predicates can be derived inductively. In the case of the proof of presence, these clauses
are
\begin{align}
	&\verifypp(\makepp(\nil, h'), R, H(R, h')),\label{eq:ppclause1}\\
	&\verifypp(\makepp(\ell, h'), R, h) \rightarrow\notag\\
	&\quad\verifypp(\makepp(\cons(Q, \ell), h'), R, H(Q,h)). \label{eq:ppclause2}
\end{align}

Clause~\eqref{eq:ppclause1} states the ``base case'': a proof of presence to verify the last entry on a hash list is  an empty list of elements and the immediate hash $h'$ before inserting $R$. It can be readily seen that the verification from Eq.~\eqref{eq:verifypp_impl} holds.
Clause~\eqref{eq:ppclause2} states the recursive nature of the proof of presence: if a list contains $R$, an extension of this list with an element $Q$ also contains $R$, and a proof of presence can be easily derived by prepending $Q$ into the list of elements of the original proof of presence.

For the proof of extension, let $L_1$ and $L_2$ be two hash lists represented by $h_1$ and $h_2$, respectively, and with lengths
$n_1\leq n_2$.
A proof of extension $\rho$ that $L_2$ extends $L_1$ simply consists of the list of elements inserted into $L_2$ after the
last element $R_{n_1}$ inserted into $L_1$, that is, 
\[
\rho = \makepe(R_{n_2},\dots, R_{n_1+1}),
\]
where $\makepe$ is the constructor for proofs of extension. The implementation of predicate $\verifype(\rho, h_1, h_2)$,
which verifies that $\rho$ is a valid proof of extension for the lists $L_1, L_2$, will check that the equation below holds:
\begin{equation}
	h_2 = H(R_{n_2}, \dots, H(R_{n_1+1}, h_1)\cdots).\label{eq:verifype_impl}
\end{equation}

The predicate $\verifype$ is defined by the following Horn clauses:
\begin{align}
&\verifype(\makepe(\nil), h, h),\label{eq:peclause1}\\
&\verifype(\makepe(\ell),h_1, h_2) \rightarrow\notag\\
&\quad\verifype(\makepe(\cons(R,\ell)), h_1, H(R, h_2)).\label{eq:peclause2}
\end{align}
Indeed, Clause~\eqref{eq:peclause1} above is the ``base case,'' and indicates that a hash list $L$ represented by $h$ extends itself trivially, hence the proof of extension is the empty list. Clause~\eqref{eq:peclause2} states that given a valid proof of extension indicting that a list $L_2$ represented by $h_2$ extends $L_1$ represented by $h_2$, then $\makepe(\cons(R,\ell))$ is a valid proof of extension stating that $R$ appended to $L_2$ also extends $L_1$.

Note that both $\pi$ and $\rho$ are data structures of size $O(n)$, and the verification of these proofs (Eq.~\eqref{eq:verifypp_impl} and~\eqref{eq:verifype_impl}) also takes time $O(n)$.
 
Finally, we also require a predicate $\represents(\ell, h)$
to state the fact that $h$ represents a hash list data structure containing the elements in $\ell=(R_n,\dots, R_1)$, inserted in reverse order. Clearly, the clauses defining this predicate are
\begin{align*}
  &\represents(\nil, h_0),\\
  &\represents(\ell, h) \rightarrow \represents(\cons(R, \ell), H(R, h)).
\end{align*}

\subsubsection{Merkle trees}
\begin{figure}
\begin{center}
\begin{tikzpicture}[
	derivation/.style = {draw,regular polygon, regular polygon sides=3,inner sep = -0.15cm },
	rule/.style = {draw, rounded corners},
	PP/.style = {draw=blue, dashed,rounded corners},
	PE/.style = {draw=red,rounded corners},
	auto
	]

	\node[rule] (root) {$h_{17}$};
	
	\node (titleT) [above=0.2cm of root] {Merkle tree $T_2$};
	
	\node[rule] (R11) [below left=0.5cm and 0.9cm of root] {$h_{14}$};
	\node[rule] (R12) [below right=0.5cm and 0.9cm of root] {$h_{57}$};
	\node[PP] (R21PP) [inner sep = 0.02cm,fit=(R12)] {};
	\node[PE] (R21PE) [inner sep = 0.05cm,fit=(R12)] {};
	
	\node[rule] (R21) [below left=0.5cm and 0cm of R11] {$h_{12}$};
	\node[PP] (R21PP) [inner sep = 0.02cm,fit=(R21)] {};
	\node[PE] (R21PE) [inner sep = 0.05cm,fit=(R21)] {};
	
	\node[rule] (R22) [below right=0.5cm and 0cm of R11] {$h_{34}$};
	
	\node[rule] (R31) [below left=0.5cm and -0.3cm of R21,label=below:{$R_1$}] {$h_1$};
	\node[rule] (R32) [below right=0.5cm and -0.3cm of R21,label=below:{$R_2$}] {$h_2$};
	
	\node[rule] (R41) [below left=0.5cm and -0.3cm of R22,label=below:{$R_3$}] {$h_3$};
	\node[PP] (R41PP) [inner sep = 0.02cm,fit=(R41)] {};
	
	\node[rule] (R42) [below right=0.5cm and -0.3cm of R22,label=below:{$R_4$}] {$h_4$};
	\node[PE] (R41PE) [inner sep = 0.02cm,fit=(R42)] {};
	
	\node[rule] (R51) [below left=0.5cm and -0.3cm of R12] {$h_{56}$};
	\node[rule] (R52) [below right=0.5cm and -0.3cm of R12,label=below:{$R_7$}] {$h_7$};
	
	\node[rule] (R61) [below left=0.5cm and -0.3cm of R51,label=below:{$R_5$}] {$h_5$};
	\node[rule] (R62) [below right=0.5cm and -0.3cm of R51,label=below:{$R_6$}] {$h_6$};
	
	\path[-] 
		(root) edge node[auto,text=blue,swap,pos=0.8] {$\myleft$} (R11)
		(root) edge node[auto,text=red,swap,pos=0.2] {$\myleft$} (R11)
		(root) edge node[auto] {} (R12)
		(R11) edge node[auto] {} (R21)
		(R11) edge node[auto,text=blue,pos=1] {$\myright$} (R22)
		(R11) edge node[auto,text=red,pos=0.4] {$\myright$} (R22)
		(R21) edge node[auto] {} (R31)
		(R21) edge node[auto] {} (R32)
		(R22) edge node[auto,text=red,pos=0.8,swap] {$\myleft$} (R41)
		(R22) edge node[auto,text=blue,pos=0.8] {$\myright$} (R42)
		(R12) edge node[auto] {} (R51)
		(R12) edge node[auto] {} (R52)
		(R51) edge node[auto] {} (R61)
		(R51) edge node[auto] {} (R62)
		;

	\node[rule] (root') [left=3.3cm of root] {$h_{13}$};
	
	\node (titleT') [above=0.2cm of root'] {Merkle tree $T_1$};
	
	\node[rule] (R11') [below left=0.5cm and -0.1cm of root'] {$h_{12}$};
	\node[rule] (R12') [below right=0.5cm and -0.1cm of root',label=below:{$R_3$}] {$h_3$};
	
	\node[rule] (R21') [below left=0.5cm and -0.25cm of R11',label=below:{$R_1$}] {$h_1$};
	\node[rule] (R22') [below right=0.5cm and -0.25cm of R11',label=below:{$R_2$}] {$h_2$};
	
	\path[-] 
		(root') edge node[auto,text=blue,swap] {} (R11')
		(root') edge node[auto] {} (R12')
		(R11') edge node[auto] {} (R21')
		(R11') edge node[auto,text=blue] {} (R22')
		;
\end{tikzpicture}

\end{center}

\raggedright Label $h_{ij}$ is the digest containing $R_i,\ldots,R_j$.\\ The proof of presence of $R_4$ in $T_2$ is  $\bigl((\myleft,h_{57}), (\myright,h_{12}), (\myright,h_{3}))\bigr)$, shown in blue. The proof
of extension of $T_1$ to $T_2$ is $R_3, \bigl((\myleft,h_{57}), (\myright,h_{12}), (\myleft,h_{4}))\bigr)$, shown in red.

\caption{Examples of Merkle trees}
\label{fig:merkletree}
\end{figure}

We assume that the reader has some familiarity with Merkle trees.
It is a more efficient data structure to construct the proofs: the memory and time requirements are reduced to $O(\log n)$ (compared with $O(n)$ for hash lists).
We omit most of the formalities and consider the example of Merkle trees depicted in Fig.~\ref{fig:merkletree}. Recall that a Merkle tree is a binary tree that assigns a hash value for each node, computed as the hash 
of its child nodes, e.g. $h_{12} = H(h_{1}, h_{2})$. By convention, the hashes on the leaf nodes are defined as the hash of the associated element, i.e.  $h_i=H(R_i,h_0)$. Hence, the root tree hash that represents the tree $T_2$ in Fig.~\ref{fig:merkletree} is $h=h_{17}=H(h_{14}, h_{57})$.

A proof of presence that $R_i$ is present in the tree is constructed by providing
the complementary node at each tree level. For example, the proof of presence $\pi$ of $R_4$ in $T_2$ is the list of values
\[
\pi = \makepp\bigl((\myleft,h_{57}), (\myright,h_{12}), (\myright,h_{3})\bigr).
\]
The constants ``$\myleft$'' and ``$\myright$'' are appended to each element to record the relative position of the path leading to the root three hash. Then, $\verifypp(\pi, R_4, h_{17})$ will check that
\[
h = H(H(h_{12}, H(h_{3}, H(R_4, h_0))), h_{57}).
\]

The predicate $\verifypp$ is defined by the following Horn clauses:
\begin{align}
&\verifypp(\makepp(\nil), R, H(R, h_0)),\label{eq:mt_ppclause1}\\
&\verifypp(\makepp(\ell), R, h_\ell) \rightarrow \notag\\
&\quad \verifypp\bigl(\makepp\bigl(\cons( (\myleft, h_r), \ell)\bigr), R, H(h_\ell, h_r)\bigr),\label{eq:mt_ppclause2}\\
&\verifypp(\makepp(\ell), R, h_r) \rightarrow \notag \\
&\quad\verifypp\bigl(\makepp\bigl(\cons((\myright, h_\ell), \ell)\bigr), R, H(h_\ell, h_r)\bigr).\label{eq:mt_ppclause3}
\end{align}
It is not difficult to see that these clauses represent inductively the proof of presence, starting from the base case in Clause~\eqref{eq:mt_ppclause1}, and defining recursively the proofs whose next step are left or right paths in Clauses~\eqref{eq:mt_ppclause2} and~\eqref{eq:mt_ppclause3}, respectively.

A proof of extension  for Merkle trees can be seen as proofs that the last element of the smaller tree is present in the smaller and the larger tree, and a relationship between the two proofs.
For example, to prove that $T_2$ extends $T_1$ in  Fig.~\ref{fig:merkletree}, the proof of extension $\rho$ must include the following values:
\[
\rho = \makepe\bigl(R_3,\bigl((\myleft, h_{57}), (\myright, h_{12}), (\myleft, h_{4} ) \bigr) \bigr).
\]

The verification of the proof of extension consists of (i) verifying that the list $\bigl((\myleft, h_{57}), \allowbreak (\myright, h_{12}), \allowbreak (\myleft, h_{4} ) \bigr)$ is a proof of presence of $R_3$ in $T_2$; and (ii) that this  list filtered by only keeping the elements $(\myright, x)$ is a proof of presence of $R_3$ in $T_1$.


We refer the reader to the repository of the models~\cite{supplement} for the Horn clause definitions of the predicates $\represents$ and $\verifype$.

\subsection{Properties for the data structure}
Regardless of what data structure is used to model the ledger, the protocol expects that a number of
properties hold, which can be abstracted away from the particular data structure. 
In order to express these properties, we consider the same three predicates we presented above ($\verifypp$, $\verifype$ and $\represents$) and we axiomatize their intuitive semantics, i.e. $\verifypp$ validates proofs of presence, $\verifype$ validates proofs of extension and $\represents$ validates that a digest represents the contents of the data structure. 
We present the properties as first-order logical formulas as follows:

\medskip

\noindent\textbf{Proof for the empty list.} A data structure $h_0$
represents the situation where the list is empty:
\begin{align}
	&\forall h.\ \represents(\nil, h) \implies h = h_0.\tag{P1}\label{eq:p1}
\end{align}

\noindent\textbf{Correctness of the proof of presence.} $\mem(R,\ell)$ iff
its presence can be proved:
\begin{align}
	&\forall R, \ell, h, \pi.\notag\\
	&\ \ \represents(\ell, h) \wedge \mem(R,\ell) \implies \verifypp(\pi, R, h),\tag{P2}\\
	&\ \ \represents(\ell,h) \wedge \verifypp(\pi, R, h) \implies \mem(R,\ell).\tag{P3}
\end{align}

\noindent\textbf{Correctness of the proof of extension.} A data
structure for a list $\ell$ extends the data structure for
any of its suffixes:
\begin{align}
	\forall R, \ell, h, \rho. \
	&\represents(\cons(R, \ell), h) \implies\notag\\
	&\quad\exists h'. \represents(\ell, h') \wedge \verifype(\rho, h', h).\tag{P4}
\end{align}

\noindent\textbf{Transitivity of the proof of extension.} If $h_3$ extends $h_2$, which in turn extends $h_1$, then $h_3$ extends $h_1$:
\begin{align}
	&\forall \rho_1, \rho_2, h_1, h_2, h_3.\notag\\
	&\quad\verifype(\rho_1, h_1, h_2)\wedge \verifype(\rho_2, h_2, h_3) \implies\notag\\
	&\quad\quad \exists \rho_3. \verifype(\rho_3, h_1, h_3).\tag{P5}
\end{align}

\noindent\textbf{Compatibility of the proofs of extension and
  presence.} If an element is present in a list, then it remains present after
further elements have been added to it:
\begin{align}
	&\forall R, \pi_1, \rho, h_1, h_2.\notag\\
	&\quad\verifypp(\pi_1, R, h_1)\wedge \verifype(\rho, h_1, h_2) \implies\notag\\
	&\quad\quad \exists \pi_2. \verifypp(\pi_2, R, h_2).\tag{P6}\label{eq:p6}
\end{align}

\noindent\textbf{Consistency of digest representation.} If two lists are represented by the same digest, then they are equal:
\begin{align}
	&\forall \ell_1, \ell_2, h.\ \represents(\ell_1, h)\wedge \represents(\ell_2, h) \implies \ell_1 = \ell_2.\tag{P7}\label{eq:p7}
\end{align}

As described above, \eqref{eq:p1}-\eqref{eq:p7} can be regarded as an ``interface'' of security properties, which is instantiated as ProVerif queries
when we prove them for the data structure in Step~\ref{step:method1} of the methodology, and it is instantiated as ProVerif axioms with blocking predicates when they are
used to prove security in the protocol, in Step~\ref{step:method2}.


\subsection{Modelling the transparent-decryption protocol and its security properties}
\label{sec:modelling protocol}
\begin{figure}[!t]
\centering
\begin{lstlisting}
process
  ! 
  new $dk$; // Decryption key
  new $sk$; // Signing key
  out(c, $(\enckey(dk),\verkey(sk))$); // Output public keys
  new cell;    // Memory cell of the trustee
  new monitor; // Priv. chan. to store generated ciphertexts
  ( out(cell, ($0, h_0$)) // Initialize trustee memory cell
  	| ( ! // Trustee decrypt protocol
      in(cell, $(i, h)$);  	  
      in(c, $(R, \pi, \rho, h')$);
      if $\verifype(\rho, h, h')$ then
      if $\verifypp(\pi, R, h')$ then
      event Decrypted(cell,$i, R$);
      out(c, $\decrypt(dk, R)$);
      out(cell, $(i+1, h')$)
  	) | ( ! // Trustee sign hash protocol
      in(cell, $(i, h)$);
      in(c, $v$);
      event Signature(cell,$i, h, \sign(sk, (v, h))$);
      out(c, $\sign(sk, (v, h))$);
      out(cell, $(i, h)$)
    ) | ( ! // Subject ciphertext generation
      new $s$;
      event Secret($s$);   
      out(c, $\encrypt$($ek$, $s$));
      out(monitor, ($s$, $\encrypt$($ek$, $s$)))
    ) | ( ! // Monitor ciphertext
      in(monitor, ($s$, $R$));
      in(c, =$s$);
      new $v$;
      event Name($v$,$s$,$\enckey(dk)$,$\verkey(sk)$);
      out(c, $v$);
      in(c, $\sigma$);
      let $(=v, h) = \checksign(vk, \sigma)$ in
      event AfterSeeingSecret($R, h$)
    ))
\end{lstlisting}
\caption{Transparent decryption model (sketch)}
\label{fig:model}
\end{figure}
In this section, we provide the most relevant details of the model of the protocol for transparent decryption. A sketch of the model is depicted in Fig.~\ref{fig:model}. We omit the definitions of the required equational theories for hashing, public key encryption, signatures, types and events.
The model essentially consists of four sub-processes running in parallel.

The main process starts by creating private and public keys and initializing the public channel and two private channels. The requirement of private channels is because, whereas private constant values such as private keys can simply be defined within the scope of the trusted process as new values, the standard way to store mutable values in applied-pi calculus is through private channels. Thus, the channels used by our model are:
\begin{itemize}
\item Channel \lstinline|c|: public channel used for communication across the parties.
\item Channel \lstinline|cell|: private channel used as a memory cell for the trustee, in order to store the last seen hash value.
\item Channel \lstinline|monitor|: private channel used to pass secret values and ciphertexts from the subject party to a ``monitor process'' that will be discussed below.
\end{itemize}

Below we relate how the four sub-processes map to the different parties presented in Sec.~\ref{sec:transparent_decryption}. However, we remark that, as our trust assumptions are very weak, we do not require to model all the parties so that ProVerif can reason about the security of the protocol:

\begin{itemize}
    \item Subjects are represented by a sub-process that creates new secrets $s$ and output ciphertexts generated using the trustee's public encryption key $ek$. An event \lstinline|Secret($s$)| is required to be used in the definition of the security properties.
    \item Trustees run the two protocols from Fig.~\ref{fig:trustee}, and thus are modelled using two sub-processes. We do not consider threshold decryption schemes for this proof of concept. Note that the first and last step in each of these protocols consists in reading and writing to the trustee memory cell, respectively. Again, the events \lstinline|Decrypted(cell,$i, R$)| and \lstinline|Signature(cell,$i, h, \sigma$)| are used to define the security properties. The former is executed when the trustee accepts to decrypt $R$, and the latter is used when the trustee casts a signature of its last seen hash.
    \item The decryption requester and the log $L$ are untrusted entities, as discussed in Sec.~\ref{sec:transparent_decryption}. Hence, there is no need to model them, as the Dolev-Yao adversary will emulate their behaviour.
    \item Finally, a monitor process is required to ensure that the protocol works as intended and satisfies the claimed security properties. Its main task consists in interacting with the trustee through its second protocol (right column on Fig.~\ref{fig:trustee}) by generating a random nonce $v$ and obtaining the signature of the trustee's last seen hash. Once this happens, the event \lstinline|AfterSeeingSecret($R, h$)| is declared, stating that the monitor has a proof that the trustee has stored the hash value $h$ after the subject has created the ciphertext. 
\end{itemize}

The sub-processes are accordingly replicated to model arbitrary executions of the protocols, and arbitrary number of independent trustees that might monitor different logs.

%
%
%

The formalization of the main security property of the accountable decryption protocol, presented in Sec.~\ref{sec:transparent_decryption}, is
\begin{align}
  \forall R, h.\ &\mathrm{AfterSeeingSecret}(R, h) \nonumber\implies \\
                 &\quad\exists \pi.\ \verifypp(\pi, R, h). \label{eq:mainprop}
\end{align} 

Indeed, the location of \lstinline|AfterSeeingSecret($R, h$)| in the monitor process captures
the fact that a certain secret $s$ associated to ciphertext (decryption request) $R$ has been observed in the public channel, and the last observed hash value by the trustee is $h$. Therefore, any occurrence of this event means that there must exist a proof of presence $\pi$ stating that $R$ is in the data structure represented by $h$.

In order to prove this query, we consider some additional lemmas such as the following one:
\begin{align*}
  &\forall v, s, i, h,cell,sk,dk.\\
  &\ \mathrm{Signature}(cell,i, h, \sign(sk, (v, h))) \wedge\\ 
  &\ \ \mathrm{Name}(v, s,\enckey(dk),\verkey(sk)) \implies \\
  &\ \ \ \exists j.\ \mathrm{Decrypted}(cell,j, \encrypt(\enckey(dk), s)) \wedge j < i.
\end{align*}
Notice that the memory cell of the trustee stores two pieces of information: the number of times it decrypted a ciphertext, and the latest hash value it received. Hence this lemma states that when the trustee signed the name $v$ generated by the monitor (which monitors the trustee with encryption key $\enckey(dk)$ and verification key $\verkey(sk)$) after receiving a secret $s$, then the trustee must have decrypted it \emph{strictly before} (i.e. $j < i$).

This lemma allows us to help ProVerif by linking the content of the memory cell (i.e. the number of times it decrypted a encryption) with the order of events that were emitted in the trace. Such links are usually abstracted away by ProVerif during the saturation procedure hence the need for us to provide it within a lemma. Note that we also considered some additional lemmas and axioms, specific to the management of memory cells, in the vein of~\cite{cheval18}.

Our ProVerif models only take couple of seconds to execute on a standard laptop.

\section{Extending ProVerif to support arbitrary predicates in lemmas and axioms}

\newcommand{\Decrypted}{\mathrm{Decrypted}}
\newcommand{\Secret}{\mathrm{Secret}}
\newcommand{\ek}{{ek}}
\newcommand{\s}{{s}}
\newcommand{\celldevice}{{d}}
\newcommand{\ProVerif}{ProVerif\xspace}
\newcommand{\minstep}[2]{\mathsf{min}_{#1}(#2)}
\newcommand{\minstepU}[3]{\mathsf{min}_{#1,#2}(#3)}
\newcommand{\invDeriv}[2]{\mathsf{Inv}_{#1}(#2)}
\newcommand{\invDerivU}[3]{\mathsf{Inv}_{#1,#2}(#3)}
\newcommand{\verifypeSimple}{\mathsf{v}_\mathsf{pe}}

To complete the methodology depicted in Fig.~\ref{fig:methodology} on our running example, we need to 
\begin{inparaenum}[(1)]
\item prove the properties~\ref{eq:p1} to~\ref{eq:p7}, and
\item prove the main protocol while expressing the properties~\ref{eq:p1} to~\ref{eq:p7} as axioms.
\end{inparaenum}
All the properties in our interface are in fact correspondence properties that are within the scope of ProVerif. For example, the query corresponding to Property~\ref{eq:p6} would be expressed as follows:
\begin{lstlisting}
query pe1,pe2,pe3:proof_of_extension,
      d1,d2,d3:digest; 
verify_pe(pe1,d1,d2)&&verify_pe(pe2,d2,d3)
      ==> verify_pe(pe3,d1,d3) 
\end{lstlisting}

However, in its current version, ProVerif imposes a syntactic restriction on axioms and lemmas: the conclusion of a lemma can only contain events, equalities, disequalities and blocking user-defined predicates. Specifically, the native facts $\fatt{M}$ and $\fmsg{M}{N}$ as well as the clause-based user-defined predicates cannot be used in the conclusion of a lemma. 

This restriction of facts prevents us from achieving both steps of our methodology. The second step is unattainable as ProVerif will directly reject such a query if it is written as an axiom. For the first step, the query will be accepted by ProVerif but it will fail to prove it. Such a query requires a proof by induction, which internally corresponds to transforming a query into an \emph{inductive lemma} that has the same syntactic limitation as declared axioms and lemmas.
By extending ProVerif to allow any predicates in the conclusion of axioms and lemmas, we are able to complete our methodology.

In this section, we provide a high-level description of ProVerif's procedure and how we extended it.


\subsection{Description of ProVerif's procedure}

Horn clauses are used to describe the semantics of user-defined predicates as previously described but they are also the building blocks of \ProVerif's internal procedure to prove a secrecy property and more generally a correspondence property. Specifically, \ProVerif first translates the protocol given as input into a set $\Cset$ of Horn clauses. It then proceeds to \emph{saturate} this set $\Cset$ and the clauses that define the user-defined predicates, yielding a simpler set of clauses that derives the same facts. The procedure completes by verifying that the saturated clauses satisfy the security property. 



\paragraph{Translation into Horn clauses}
In addition to the user-defined predicates, \ProVerif considers natively four additional predicates over terms representing the interactions between the attacks and the processes: $\fatt{M}$, $\fmsg{M}{N}$, and two predicates for events $\sevent{ev}$ and $\fevent{ev}$. \emph{Sure-events}, i.e. $\sevent{ev}$, will only appear in hypotheses of Horn clauses, whereas events $\fevent{ev}$ will only appear in their conclusion.
This separation ensures that events are not \emph{resolved} during the saturation procedure and so their occurrence in a Horn clause is preserved through resolution rule.

Using these predicates, \ProVerif generates a set of clauses representing the capabilities of the attacker, which include, for example:
\begin{align}
\fatt{x} \wedge \fatt{y} &\rightarrow \fatt{\encrypt(x,y)} \label{clause-build}\\
\fatt{\encrypt(x,y)} \wedge \fatt{x} &\rightarrow \fatt{y} \label{clause-decrypt}\\
\fatt{x} \wedge \fmsg{x}{y} &\rightarrow \fatt{y}\label{clause-mess}\\
\fatt{x} \wedge \fatt{y} &\rightarrow \fmsg{x}{y}
\end{align}
The first two clauses model that the attacker can encrypt and decrypt provided that it knows the secret key. The last two clauses model that the attacker can read and write on a channel that it knows.

The formal description of processes into Horn clauses is out of scope of this paper (see~\cite{blanchet:hal-03366962} for more details), but we provide some intuition in the following example.

\begin{example}
The translation of the \lstinline|Ciphertext Generator| process would yield at least the clause 
\begin{equation}
\sevent{\Secret(\s)} \rightarrow \fatt{\encrypt(\ek, \s)}	\label{clause-generator}
\end{equation}
indicating that the attacker can obtain the encryption of the secret $\s$ by the key $\ek$. The Horn clause also indicates that the event $\Secret(\s)$ is triggered before the encryption is sent to the attacker.   

Similarly, the translation of process modelling the accountable decryption device will generate, in particular, the clause:
\begin{multline}\label{clause-accountable}
\fmsg{\celldevice}{(i,H_0)} \wedge \fatt{\encrypt(\ek, x)} \wedge \fatt{(pi,r,H_1)} \wedge\\
\verifype(r,H_0,H_1) \wedge \verifypp(pi,\encrypt(\ek, x),H_1) \wedge\\
\sevent{\Decrypted(i,\encrypt(\ek, x))} \rightarrow \fatt{x}
\end{multline}
This clause represents the informal statement: Assuming that the cell of the device contains the $i$-th digest received $H_0$, if the attacker can provide a proof of extension $r$ from $H_0$ to a new digest $H_1$, a proof of presence $pi$ of some ciphertext $\encrypt(\ek, x)$ in the digest $H_1$ then the attacker can obtain the plain text $x$. As in the previous clause, the event $\Decrypted(i,\encrypt(\ek, x))$ will be triggered before the attacker obtains $x$.
\end{example}


\paragraph{Saturation}
The core step in the saturation procedure consists of taking two existing Horn clauses and combining them into a new one, hopefully simpler. This process is called the \emph{resolution step}. For example, the hypothesis of clause~\ref{clause-accountable} contains in hypothesis the fact $F = \fatt{\encrypt(\ek, x)}$ representing that the attacker must know the ciphertext $\encrypt(\ek, x)$. In order to deduce all the possible ways the attacker may deduce this ciphertext, \ProVerif resolves this fact by combining it with Horn clauses whose conclusion can be unified with $F$. This is the case with clause~\ref{clause-generator}, which results in the following clause:
\begin{multline*}
\fmsg{\celldevice}{(i,H_0)} \wedge \sevent{\Secret(\s)} \wedge \fatt{(pi,r,H_1)} \wedge\\
\quad \verifype(r,H_0,H_1) \wedge \verifypp(pi,\encrypt(\ek, \s),H_1) \wedge\\
\quad \sevent{\Decrypted(i,\encrypt(\ek, \s))} \rightarrow \fatt{\s}.
\end{multline*}
Note that the hypothesis $\sevent{\Secret(\s)}$ of clause~\ref{clause-generator} has replaced $F$ in the new clause and the unification of the two facts resulted in $x$ being instantiated by $\s$. 

In our example, the fact $F$ in clause~\ref{clause-accountable} could also have been resolved with clause~\ref{clause-decrypt}, which would result in
a new rule like clause~\ref{clause-accountable} but with $F$ replaced by
the two facts $\fatt{\encrypt(x', \encrypt(\ek, x))}$ and $\fatt{x'}$.
This would directly lead to a loop during the saturation procedure as the former fact could once again be resolved by the same clause~\ref{clause-decrypt}. To avoid this problem, ProVerif uses a \emph{selection function} on Horn clauses that returns the set of facts from the hypothesis of a clause that can be resolved, i.e. $\select(H \rightarrow C) = S$ with $S \subseteq H$. In particular, the facts $\fatt{x}$ with $x$ a variable are never in $S$. 

The resolution process repeats until a fixed point is reached, i.e. until no resolution can produce a clause that is not redundant with an existing one. Once a fixed point is reached, ProVerif only keeps the set of \emph{saturated clauses}, which are the clauses $R$ from $\Cset$ such that $\select(R) = \emptyset$.
A schematic summary of \ProVerif's saturation procedure is given in Fig.~\ref{fig:saturation}.


\begin{figure}
\begin{tikzpicture}[
	state/.style={draw,rounded corners,align=center}
	]
	\node[state, text width = 1.2cm] (set)	{Current set $\Cset$};
	\node[state,above right=0.7cm and 0.4cm of set,text width=4cm] (resol) {Resolution of two clauses of $\Cset$ producing $R$};
	\node[state,right=4.4cm of set,text width=2.2cm] (simpl) {Simplification of $R$ into $\Cset_R$};
	\node[state,below right=0.7cm and 0.4cm of set,text width=4cm] (subsumption) {Filtering of $\Cset \cup \Cset_R$};
	
	\node[state,right=1.3 cm of set,text width=2.7cm] (sat) {Saturated set $\Csat = \{ R \in \Cset \mid \select(R) = \emptyset\}$};
	
	\draw[->] (set) edge node[auto] {Fix point} (sat);
	\draw[->] (set) edge node[auto,below] {reached} (sat);
	\draw[->] (set) to [in=180,out=90] (resol) {};
	\draw[->] (resol) to [in=90,out=0] (simpl) {};
	\draw[->] (simpl) to [in=0,out=270] (subsumption) {};
	\draw[->] (subsumption) to [in=270,out=180] (set) {};
\end{tikzpicture}
\caption{A schematic summary of ProVerif's saturation procedure}
\label{fig:saturation}
\end{figure}


\paragraph{Generalizing the events and sure-events separation} We previously mentioned that ProVerif considers two different predicates for events: the sure-event predicate $\psevent$ and the event predicate $\pevent$. During the translation of processes into Horn clauses, the former type of event only occurs in the hypotheses of the clauses whereas the latter only occurs in the conclusion of the clauses. It ensures that an application of the resolution rule never resolves a sure-event $\sevent{ev}$ from a clause. This is critical for the verification of correspondence queries. For example, to verify the query $\fevent{A} \Rightarrow \fevent{B}$, ProVerif will check that for all saturated clauses in $\Csat$ of the form $H \rightarrow \fevent{A}$, the sure-event $\sevent{B}$ occurs in $H$. In this case, ProVerif concludes that the query holds. 
Note that to ensure soundness, the selection function will never consider sure-events, i.e. $\sevent{ev} \not\in \select(R)$, which guarantees that the resolution rule does not try to resolve a fact that cannot be resolved by design.

Since sure-events are never selected by the selection function, they can be seen as the \emph{blocking} counterpart of the predicate $\event$.
To add clause-based, user-defined predicates in the conclusion of lemmas, as well as the predicates $\fatt{M}$ and $\fmsg{M}{N}$, we generalize this concept by associating to all predicates $p$ a blocking predicate denoted $\blocking{p}$. Therefore, we consider natively the predicates $\blocking{\patt}$ and $\blocking{\pmsg}$. Moreover, for all user-defined predicates $p \in \Fp$, we consider the blocking predicate $\blocking{p}$ and denote by $\Fbp$ their set. Finally, we rename the predicate $\psevent$ as $\blocking{\pevent}$. 

We also amend the selection function by additionally requiring that no blocking fact can be selected: For all $F$, $\blocking{F} \not\in \select(H \rightarrow C)$. Thus, the resolution rule will never attempt to resolve a blocking predicate.


\paragraph{Application of lemmas} ProVerif axioms and proved lemmas are applied during the simplification phase of the saturation procedure. Consider a lemma $\bigwedge_{i=1}^n F_i \Rightarrow \bigvee_{j=1}^m \qconcl_j$, where $\qconcl_j$ are conjunctions of facts. In the current ProVerif, each $\qconcl_j$ could only be composed of events, dis\-equalities and inequalities. The lemma would be applied on a clause $H \rightarrow C$ when there exists a substitution $\sigma$ such that $F_i\sigma$ is in $H$ for all $i \in \{1,\ldots,n\}$. The application of the lemma would then produce a set of $m$ clauses $\{ H \wedge \qconcl_j'\sigma \rightarrow C\}_{j=1}^m$, where each $\qconcl_j'$ is the disjunct $\qconcl_j$ with events replaced by their sure-event counterpart. 

Thanks to our extension, the disjunct $\qconcl_j$ may now contain any type of predicate. We update the simplification rule by first defining the transformation  $\ToB{\qconcl}$ built from $\qconcl$ where all facts $p(M_1,\ldots,M_n)$ in $\qconcl$ are replaced by $\blocking{p}(M_1,\ldots,M_n)$. The rule for applying lemmas is then defined as follows:
\begin{prooftree}
	\AxiomC{
	\begin{tabular}{@{}c@{}}
	$\Cset \cup \{ R=(H \rightarrow C)\}$\qquad
	$(\bigwedge_{i=1}^n F_i \Rightarrow \bigvee_{j=1}^m \qconcl_j) \in \Lemma$\\
	$\forall i,\blocking{F_i\sigma} \in H\text{ or }F_i\sigma \in H$
	\end{tabular}
	}
	\UnaryInfC{$\Cset \cup \{ H \wedge \ToB{\qconcl_j\sigma} \rightarrow C \}_{j=1}^m$}
\end{prooftree}

Note that the application condition requires that either $\blocking{F_i\sigma} \in H$ or $F_i\sigma \in H$. Indeed, the lemma can be applied on facts introduced by a previous application of another lemma or more commonly on an event, hence the condition $\blocking{F_i\sigma} \in H$. Since a lemma may have an attacker fact, a message fact or a fact using a user predicate in its premisse, we also need to match the non-blocking form of the fact, i.e. $F_i\sigma \in H$. Note that all facts added in the clauses, i.e. facts in $\ToB{\qconcl_j\sigma}$, are blocking.

\paragraph{Improving other simplification rules} The purpose of applying lemmas is to increase ProVerif precision or to help it terminate. As such, we can amend other simplification rules used by ProVerif to benefit from the blocking facts. For example, ProVerif employs the following simplification rule to remove tautologies:

\begin{prooftree}
	\AxiomC{$\Cset \cup \{ F \wedge H \rightarrow F\}$}
	\UnaryInfC{$\Cset$}
	\RightLabel{Taut}
\end{prooftree}

Now that we may have blocking predicates in the hypotheses of the clause, we can consider an additional tautology simplification rules defined as follows:

\begin{prooftree}
	\AxiomC{$\Cset \cup \{ \blocking{F} \wedge H \rightarrow F\}$}
	\AxiomC{$\pred(p) \not\in \Fp$}
	\BinaryInfC{$\Cset$}
\end{prooftree}

We also improve the simplification rule that removes the redundant facts from the hypotheses of a clause, which is critical to avoid termination issues:


\begin{prooftree}
	\AxiomC{
	\begin{tabular}{@{}c@{}}
	$\Cset \cup \{ H' \wedge H \wedge \phi \rightarrow C\}$\qquad
	$\ToB{H'\sigma} \subseteq \ToB{H}$\\
	$\phi \models \phi\sigma$\qquad
	$\dom{\sigma} \cap \vars{H,C} = \emptyset$
	\end{tabular}
	}
	\UnaryInfC{$\Cset \cup \{ H \wedge \phi\sigma \rightarrow C\}$}
\end{prooftree}
Intuitively, this rule states that to derive $C$, it suffices to know the derivations of $H$, as a derivation for $H'$ can be build from the derivations of $H'\sigma \subseteq H$. Note that by requiring $\ToB{H'\sigma} \subseteq \ToB{H}$, a clause $\fatt{M} \wedge \blocking{\patt}(M) \wedge G \rightarrow C$ can be either simplified into $\fatt{M} \wedge G \rightarrow C$ or into $\blocking{\patt}(M) \wedge G \rightarrow C$. Both variants are correct and their application may be parametrized depending on one's needs. The former could be used to discard the application of a lemma that is redundant with the hypothesis of the clause whereas the latter could help terminate, as the fact $\blocking{\patt}(M)$ will not be resolved.


\paragraph{Verification of the query} Once the saturation process ends, ProVerif still needs to verify the queries. Intuitively, on a query $F_1 \wedge \dots \wedge F_n \Rightarrow \qconcl$ with $F_i = p_i(t_1^i,\ldots,t^i_{m_i})$, ProVerif starts by generating a clause $F_1 \wedge \dots \wedge F_n \rightarrow C$ with $C = \mathsf{q}(t^1_1,\ldots,t^n_{m_n})$ and $\mathsf{q}$ a special predicate only used in the verification of query. Typically, $C$ represents the conjunction of facts $F_1,\ldots,F_n$. ProVerif then applies once again a saturation of $\Csat \cup \{ F_1 \wedge \dots \wedge F_n \rightarrow C \}$ and checks the validity of $\qconcl$ on the obtained saturated set. Note that this second saturation is slightly different from the first saturation in the sense that it is \emph{order preserving}, i.e. in a clause $F \wedge H \rightarrow \mathsf{q}(t^1_1,\ldots,t^n_{m_n})\sigma$, Proverif can indicate, for any $i$, whether the fact $F$ was generated when resolving $F_i$. This property is particularly important for proving queries by induction. The details of how the orders are preserved can be found in~\cite{blanchet:hal-03366962}.


\begin{example}
  Let us show how we are able to prove by induction the transitivity of proof of extension in the hash list data structure:
  \begin{lstlisting}
query pe1,pe2,pe3:proof_of_extension,
      d1,d2,d3:digest;
verify_pe(pe1,d1,d2)&&verify_pe(pe2,d2,d3)
     ==> verify_pe(pe3,d1,d3) [induction].
  \end{lstlisting}
  To ease the reading, let us denote by $\verifypeSimple$ the predicate $\verifype$. Recall that the proof of extension is defined by the following two clauses. 
  \begin{align}
  & \verifypeSimple(\makepe(\nil), d, d), \label{clause-VPE-empty}\\
  & \verifypeSimple(\makepe(\ell),d_1,d_2) \rightarrow \verifypeSimple(\makepe(\cons(x,\ell)),d_1,H(x,d_2)).\label{clause-VPE-step}
  \end{align}
 These clauses will be left unchanged through the saturation procedure, i.e. if $\Cset = \{ (\ref{clause-VPE-empty}),(\ref{clause-VPE-step})\}$ then $\saturated{\Cset} = \Cset$.

  To verify the query, ProVerif first considers the query clause $RQ = F_1 \wedge F_2 \rightarrow C$ where $F_1 = \verifypeSimple(pe_1,d_1,d_2)$, $F_2 = \verifypeSimple(pe_2,d_2,d_3)$ and $C = \mathsf{q}(pe_1,d_1,d_2,pe_2,d_2,d_3)$. ProVerif then applies the saturation procedure on $\Cset \cup \{ RQ \}$.
  
  We illustrate the application of the inductive lemma on one of the resolutions that ProVerif will apply during this second saturation procedure: the resolution of $\verifypeSimple(pe_2,d_2,d_3)$ from $RQ$ with the clause~(\ref{clause-VPE-step}) which yields the following clause
  \[
  RQ' = F_1\sigma \wedge \verifypeSimple(\makepe(\ell),d_2,d_3') \rightarrow C\sigma 
  \] 
  with $\sigma = \{ pe_2 \mapsto \makepe(\cons(x,\ell)), d_3 \mapsto H(x,d_3')\}$. 
  
  Let us denote $F_3 = \verifypeSimple(\makepe(\ell),d_2,d_3')$. Note that $C\sigma$ represents the conjunction $F_1\sigma \wedge F_2\sigma$ with:
  \[F_2\sigma = \verifypeSimple(\makepe(\cons(x,\ell)),d_2, H(x,d_3'))\]
  
  Since $F_3$ was obtained while resolving $F_2\sigma$, any instantiation $F_3\alpha$ would be \emph{satisfied strictly before} $F_2\sigma\alpha$ (we detail this notion in the next section). Thus, ProVerif can apply our inductive hypothesis on $F_1\sigma,F_3$ and so it will add the blocking fact $\blocking{\verifypeSimple}(pe',d_1,d_3')$ in the hypothesis of the clause, yielding:
  \[
  RQ_2 = F_1\sigma \wedge \verifypeSimple(\makepe(\ell),d_2,d_3') \wedge \blocking{\verifypeSimple}(pe',d_1,d_3') \rightarrow C\sigma
  \]
  On the clause $RQ_2$, ProVerif will be able prove the query, i.e. finding a derivation of $\verifypeSimple(pe_3,d_1,d_3)\sigma = \verifypeSimple(pe_3,d_1,H(x,d_3'))$ for some $pe_3$, by using the blocking fact $\blocking{\verifypeSimple}(pe',d_1,d_3')$ and the clause~(\ref{clause-VPE-step}).
  \end{example}


\subsection{Soundness}

The soundness of the saturation procedure comes in three steps. 

\paragraph{Derivation of satisfiable facts} Given a process $P$ and a trace $T$ of $P$, one can show that that for all satisfiable facts $F$ in $T$, i.e. $T \satisfy F$, there exists a derivation $\Deriv$ of $F$ in the set of initial clauses $\Cinit{P}$ translated from $P$ and the set of clauses $\CblockT{T} = \{ \ToB{F} \mid T \satisfy F \}$. The original soundness result~\cite[Theorem 1]{blanchet:hal-03366962} was only considering the sure-events, i.e. $\{ \sevent{ev} \mid T \satisfy \fevent{ev}\}$. 

Since blocking predicates are an artifice to prevent resolution within the saturation procedure, they should not affect their satisfiability in a trace $T$. Hence, we augment the satisfaction relation $\satisfy$ by requiring that for all blocking facts $\blocking{F}$, $T \satisfy \blocking{F}$ if and only if $T \satisfy F$.

In the current ProVerif, the derivation $\Deriv$ is shown to satisfy an invariant on the Horn clauses labeling its nodes that relates satisfaction of facts with the size of the trace. Consider a ground fact $F$ such that $T \satisfy F$. We define $\minstep{T}{F} = \min\{ | T'| \mid T' \text{ prefix of } T \wedge T' \satisfy F\}$ which represents the size of the smallest prefix of $T$ that satisfies $F$. The invariant intuitively indicates that in a derivation, all instantiated facts are satisfied by $T$ and if a clause $H \rightarrow C$ is used to derive $C$ then the facts in $H$ must be satisfied strictly before $C$ in the trace $T$. 
We extend this invariant to take into account the user-defined predicates and blocking predicates 

However, when a fact $F$ corresponds to a user-defined predicates, e.g. $F = p(M_1,\ldots, M_n)$, we always have $\minstep{T}{F} = 0$ as the satisfiability of this predicate do not depend on the protocol but only on the value of the terms $M_1,\ldots, M_n$. Hence, $T \satisfy F$ if and only if $F$ is satisfied in the empty trace. 
This prevents us from \emph{ordering} the user-defined predicates within a clause. Hence, we strengthen the minimality function on user-defined predicates by considering the size of the derivation of $p(M_1,\ldots, M_n)$ in $\Cuser$ instead of its satisfiability in $T$. Formally, we consider $\minstepU{T}{\Cuser}{F}$ defined as:
\begin{itemize}
	\item $\minstepU{T}{\Cuser}{F} = \minstep{T}{F}$ when $\pred(F) \not\in \Fp \cup \Fbp$
	\item $\minstepU{T}{\Cuser}{F} = \min\{ |\Deriv| \mid \Deriv$ derives $F'$ from $\Cuser\}$ when $F = F'$ or $F = \blocking{F'}$ with $\pred(F') \in \Fp$.
\end{itemize}

We can show state the main invariant on derivations.

\begin{invariant}
  \label{inv:derivation2}
  Let $T$ be a trace. Let $\Cuser$ a set of clauses defining predicates in $\Fp$. Let $\Deriv$ a derivation. We say that the invariant $\invDerivU{T}{\Cuser}{\Deriv}$ holds when for all nodes of $\Deriv$ labeled by a Horn clause $R = H\rightarrow C$ and a substitution $\sigma$, $T \satisfy C$ and for all $F \in H$, $T\satisfy F$. Moreover, 
    \begin{itemize}
      \item if $R$ is the attacker rule (\ref{clause-mess}) then $\minstep{T}{\fatt{x}\sigma} < \minstep{T}{\fatt{y}\sigma}$ and $\minstep{T}{\fmsg {x}{y}\sigma} = \minstep{T}{\fatt{y}\sigma}$;
      \item otherwise for all $F \in H$, if ($\pred(C) \not\in \Fp$ and $\pred(F) \not\in \Fp \cup \Fbp$) or ($\pred(C) \in \Fp$ and $\pred(F) \in \Fp$) then $\minstepU{T}{\Cuser}{F\sigma} < \minstepU{T}{\Cuser}{C\sigma}$
    \end{itemize}
  \end{invariant}
 As previously mentioned, the invariant intuitively states that all facts in the derivation are satisfied in $T$. The first bullet point indicates that in the case of the attacker rule (\ref{clause-mess}), representing that the attacker may listen on a channel if it can deduce it, the two facts $\fmsg{x}{y}$ and $\fatt{y}$ are satisfied at the same moment on $T$. The second bullet point indicates that for all other rules, the hypohteses are satisfied strictly before the conclusion, in the sense of $\minstepU{T}{\Cuser}{\cdot}$ and if they belong to the same category, i.e. standard or user-defined predicates.

\paragraph{Preservation of the invariant during the saturation}
In the second step of the soundness proof, one can show that one round \emph{resolution-simplification-filtering} preserves the derivability of facts, i.e. if $\Cset'$ is the new set of clauses after a round of \emph{resolution-simplification-filtering} on $\Cset$, then $F$ derivable in $\Cset$ implies that $F$ is derivable in $\Cset'$, with both derivations satisfying Invariant~\ref{inv:derivation2}.

The soundness of the new tautology rule is directly given by Invariant~\ref{inv:derivation2}. Indeed, if $\pred(F) \not\in \Fp$ then $\pred(\blocking{F}) \not\in \Fp \cup \Fbp$ and so $\minstepU{T}{\Cuser}{\blocking{F}} < \minstepU{T}{\Cuser}{F}$ which contradicts minimality.

Note that the soundness of the new rule for redundant facts is direct from the fact that invariant~\ref{inv:derivation2} is stable by removal of hypotheses. 

Finally, we show that the application of lemmas is also sound. This is achieved thanks again to Invariant~\ref{inv:derivation2}. In particular, when a lemma $\bigwedge_{i=1}^n F_i \Rightarrow \bigvee_{j=1}^m \qconcl_j$ matches $H$, i.e. $F_i\sigma$ is in $H$ for all $i \in \{1,\ldots, n\}$, it implies that each $F_i\sigma$ are satisfied in $T$. Since the lemma holds on $T$, at least one of the $\qconcl_j\sigma$ also holds in $T$ and so all facts in $\ToB{\qconcl_j\sigma}$ are derivable from $\CblockT{T}$.


\paragraph{Application of inductive lemmas} In~\cite{blanchet:hal-03366962}, a query $F_1 \wedge \dots \wedge F_n \Rightarrow \qconcl$ may be proved by induction on the multiset $\multiset{\minstep{T}{F_1\sigma},\ldots,\minstep{T}{F_n\sigma}}$ when $T \satisfy F_i\sigma$ for all $i = 1,\ldots, n$. In practice, ProVerif transforms the query into an \emph{inductive lemma} which is typically a lemma implied by the query. This lemma will be applied as a normal lemma during the saturation procedure except on the attacker rule~(\ref{clause-mess}). 

The main soundness argument for applying inductive lemma during the saturation procedure is as follows. As we prove the query by induction on $\mathcal{M} = \multiset{\minstep{T}{F_1\sigma},\ldots,\minstep{T}{F_n\sigma}}$, we can assume that the query holds for any instance of the premise $F_1\sigma',\ldots, F_n\sigma'$ such that $\multiset{\minstep{T}{F_1\sigma'},\ldots,\minstep{T}{F_n\sigma'}} < \mathcal{M}$. Now consider a rule $H \rightarrow C$ used in a derivation of $F_1\sigma,\ldots,F_n\sigma$, we know from Invariant~\ref{inv:derivation2} that all facts $F$ in $H$ are satisfied strictly before $C$ which is itself satisfied strictly before at least one of the $F_i\sigma$, i.e. $\minstep{T}{F} < \minstep{T}{F_i\sigma}$. Therefore, if $F_1\sigma',\ldots,F_n\sigma' \in H$ then we deduce that $\multiset{\minstep{T}{F_1\sigma'},\ldots,\minstep{T}{F_n\sigma'}} < \mathcal{M}$ and so we know that the conclusion of the inductive lemma holds, which allows us to apply it.

With user-defined predicates, we cannot use $\minstep{T}{\cdot}$ because the satisfiability of a user-defined predicate does not depend on the trace but on the set of Horn clauses $\Cuser$ given as input. Instead, we base the induction on the minimal size of the derivation of $F_i$ in $\Cuser$, i.e. $\minstepU{T}{\Cuser}{F_i\sigma}$.
More specifically, if for all $i = 1,\ldots, n$, $\pred(F_i)$ is a user-defined predicate, we prove the query by induction on the multiset $\multiset{\minstepU{T}{\Cuser}{F_i\sigma}}_{i=1}^n$.

To prove a query by induction whose premises contain both user-defined predicates and standard predicates, i.e. $\patt$ and $\pmsg$, we combine the two inductive measures into a single one: without lost of generality, consider the query  $F_1 \wedge \dots \wedge F_n \Rightarrow \qconcl$ where $\pred(F_i),\dots,\pred(F_{k-1})$ are standard predicates and 
$\pred(F_k),\dots,\pred(F_{n})$ are user-defined predicates. We prove the query for all traces $T$ and all substitutions $\sigma$ with $T \satisfy F_i\sigma$, for all $i=1,\ldots, n$, by induction using the lexicographic order on $(\multiset{\minstepU{T}{\Cuser}{F_i\sigma}}_{i=1}^{k-1}, \multiset{\minstepU{T}{\Cuser}{F_i\sigma}}_{i=1}^n)$.

\paragraph{Restriction to $\Csat$} The third and final step consists of showing that when the fix point is reached, restricting the $\Cset$ to $\Csat$ also preserves derivability. This step remains unchanged from~\cite[Theorem 2]{blanchet:hal-03366962}.

The general soundness of the saturation procedure is given by the following property.

 
\begin{theorem}
  \label{th:saturation soundness_simplified}
  Let $P$ a process and $T$ a trace of $P$.  Let $\Lemma$ a set of lemmas that hold on $P$. Let $\Lemma_i$ be a set of inductive lemmas. We denote by $\Cinit{P}$ the set of initial clauses translated from $P$. Let $F$ be a fact such that $T \satisfy F$. Let $\mathcal{M} = (\emptyset,\minstepU{T}{\Cuser}{F})$ when $\pred(F) \in \Fp$ and $\mathcal{M} = (\minstepU{T}{\Cuser}{F},\emptyset)$ otherwise. 
  
  If the lemmas in $\Lemma_i$ hold up to $\mathcal{M}$ excluded then there exists a derivation $\Deriv$ of $F$ from $\saturated{\Cinit{P}} \cup \CblockT{T}$ such that $\invDerivU{T}{\Cuser}{\Deriv}$.
\end{theorem}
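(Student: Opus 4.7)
The plan is to follow the three-step template of~\cite{blanchet:hal-03366962}, adapting each step to accommodate user-defined and blocking predicates together with the refined measure $\minstepU{T}{\Cuser}{\cdot}$. I would first establish the initial-derivability result: for every fact $F$ with $T \satisfy F$ there exists a derivation $\Deriv$ of $F$ from $\Cinit{P} \cup \CblockT{T}$ satisfying the extended Invariant~\ref{inv:derivation2}. The construction proceeds by induction on $T$ as in the original proof, with two extensions. When $F$ is a user-defined fact, I take $\Deriv$ to be a minimal derivation of $F$ from $\Cuser$; the strict decrease of $\minstepU{T}{\Cuser}{\cdot}$ from conclusion to hypotheses is then immediate. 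When $F$ is a blocking fact $\blocking{G}$, the extended satisfaction relation gives $T \satisfy G$, so $\blocking{G}\in\CblockT{T}$ supplies a one-node derivation. In every other case the original argument applies unchanged, since $\minstepU{T}{\Cuser}{F}=\minstep{T}{F}$ whenever $\pred(F)\notin \Fp\cup\Fbp$.

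Next I would show that one resolution--simplification--filtering round preserves both derivability and Invariant~\ref{inv:derivation2}. Ordinary resolution on a selected (necessarily non-blocking) fact splices subderivations, and the invariant composes because the selection function never picks blocking facts. The new tautology rule removes clauses of shape $\blocking{F}\wedge H \rightarrow F$ with $\pred(F)\notin \Fp$: any derivation using such a clause would have to satisfy $\minstepU{T}{\Cuser}{\blocking{F}\sigma} < \minstepU{T}{\Cuser}{F\sigma}$, whereas for non-user $F$ these two values are equal by definition. The extended redundancy rule is sound because a derivation of the facts in $H$ already yields derivations for $H'\sigma \subseteq H$ up to the choice of blocking or non-blocking form, and the disjointness condition $\dom{\sigma}\cap\vars{H,C}=\emptyset$ prevents interference with the rest of the clause. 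For an ordinary lemma application, the matching condition combined with the invariant gives $T \satisfy F_i\sigma$ for every premise, hence a disjunct $\qconcl_j\sigma$ is satisfied in $T$, and every fact in $\ToB{\qconcl_j\sigma}$ belongs to $\CblockT{T}$ and can be appended as a one-node derivation.

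The delicate part is the application of inductive lemmas from $\Lemma_i$. The argument is a well-founded induction on $\mathcal{M}$, using the lexicographic order on the pair of multisets separating standard from user-defined premises described just before the theorem. When an inductive lemma is applied while resolving a non-attacker clause $H\rightarrow C$ contributing to a derivation of $F$, Invariant~\ref{inv:derivation2} forces the instantiated premises of that lemma to have measure strictly lex-smaller than $\mathcal{M}$; the assumption ``$\Lemma_i$ holds up to $\mathcal{M}$ excluded'' then supplies derivations of the required blocking facts. The exclusion of the attacker rule~(\ref{clause-mess}) is preserved verbatim: there $\minstep{T}{\fmsg{x}{y}\sigma}=\minstep{T}{\fatt{y}\sigma}$, so no strict decrease is available. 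The third step, restricting the fixed-point set to $\Csat$, is inherited from~\cite[Theorem~2]{blanchet:hal-03366962} because the redundancy of clauses with non-empty selection depends only on the selection function.

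The main obstacle is verifying Invariant~\ref{inv:derivation2} on translated clauses that mix both predicate categories, such as~(\ref{clause-accountable}), which combine $\fatt{\cdot}$, $\fmsg{\cdot}{\cdot}$ and clause-based user-defined hypotheses within a single clause. One has to check that the partitioned minimality condition (standard-with-standard, user-with-user) is preserved by every simplification rule and in particular by the order-preserving second saturation used for query verification, so that a lemma application adding both kinds of blocking facts at once cannot break the lexicographic well-foundedness of $\mathcal{M}$. The case analysis required to make this water-tight across all translation rules and across mixed clauses is the step I expect to be the most tedious.
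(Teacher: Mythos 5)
Your proposal is correct and follows essentially the same route as the paper: the same three-step decomposition (initial derivability into $\Cinit{P}\cup\CblockT{T}$ under the extended Invariant~\ref{inv:derivation2}, preservation through each resolution--simplification--filtering round with the same contradiction-of-minimality argument for the new tautology rule and the same $\CblockT{T}$-membership argument for lemma application, and the unchanged restriction to $\Csat$), together with the same well-founded induction on the lexicographic pair of multisets for inductive lemmas with the attacker rule excluded. The mixed-category ordering issue you flag at the end is precisely what the paper's partitioned condition in Invariant~\ref{inv:derivation2} (and the relaxed ordering in the appendix's derivation-satisfaction definition) is designed to absorb, so your assessment of where the tedium lies matches the paper's own treatment.
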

  
This theorem is a simplified version of \cite[Theorem 1 and 2, simplified]{blanchet:hal-03366962} adapted to user-defined predicates and generalised lemmas. In particular, the key difference between these two soundness results is that we consider in the set $\CblockT{T}$ the blocking counter part of all satisfiable facts by the trace $T$. Note that $\CblockT{T}$ also includes the blocking counter part of all true instances of predicates $p(M_1,\ldots,M_n)$ where $p$ is a user-defined predicate.
  

\subsection{Applications}

As we previously showed, generalizing lemmas, axioms and inductive proofs is paramount for our new methodology. Thanks to the inductive proofs, we are able to show with ProVerif all the properties~\ref{eq:p1} to~\ref{eq:p7} for both hash list and Merkle tree data structures. Moreover, as illustrated in Section~\ref{sec:modelling protocol}, we are also able to prove the security of certificate transparency and transparent decryption with all properties~\ref{eq:p1} to~\ref{eq:p7} declared as axioms. 
All the proofs of the properties in the interface are proved in a single file for hash list in less than a second. For Merkle trees, the proofs are separated in five different files, each taking less than a second to be verified. The proofs of the protocols themselves with the interface are also done in less than two seconds. 

Note that the generalisation of lemmas and axioms can also be of use outside of our methodology. For instance, an earlier version of our work has already been used to prove the Encrypted Client Hello extension of the TLS protocol~\cite{BCW-ccs22}. Specifically, the saturation procedure was entering into a loop when trying to resolve a clause of the form $psk(id,k,c,s) \wedge H \rightarrow \patt(id)$ where $psk(id,k,c,s)$ intuitively represented some shared key $k$ with identity $id$ between a server $s$ and a client $c$. To prevent the loop, they added a lemma indicating that when $psk(id,k,c,s)$ holds then the identity was already known to the attacker or else was honestly generated:

\begin{lstlisting}
lemma id,k,c,s:bitstring;
psk(id,k,c,s) ==> attacker(id) ||
      id = honest_id(s,c,k) [induction].
\end{lstlisting}

The application of the inductive lemma on $psk(id,k,c,s) \wedge H \rightarrow \patt(id)$ would yield two clauses, one with $\blocking{\patt(id)}$ in the hypothesis and one where $id$ is instantiated by $honest_{id}(s,c,k)$. The former would be removed by our new tautology rule, whereas the instantiated second clause would avoid the loop during the saturation.

\section{Conclusion}
Transparency in security protocols plays a fundamental role in minimizing
the trust conditions for the parties involved. A clear example is that of certificate transparency, where strong security and trust assumptions on certificate authorities can be relaxed by requiring that they publish certificate issuances on a public ledger.
In many cases, transparency might well constitute a required building block expected by users of a certain service, especially when it involves incursions into their privacy for a variety of reasons, a situation which is handled by transparent decryption protocols.

%

For these reasons, it is important to properly verify the core
security property of transparency protocols, e.g. that only legitimate certificates
are produced (certificate transparency) or that decryptions
only take place if the requests for them are entered in a public
ledger, i.e. visible to users (transparent decryption). 

Because of the complex data structures that transparency protocols rely on, verifying
their properties has led to designing a proof-decomposition
methodology for ProVerif, as well as adding new features to the way
lemmas and axioms are handled. We expect that our methodology and
ProVerif enhancements can be applied to other kind of protocols
involving tree-based data structures (binary trees, radix trees) and
perhaps also other kinds of data structures (e.g. Bloom filters).

\section*{Acknowledgements}
This work received funding from EPSRC projects \emph{CAP-TEE: Capability Architectures for Trusted Execution};  \emph{SIPP: Secure IoT Processor Platform with Remote Attestation}, and \emph{User-controlled hardware security anchors: evaluation and designs}. It also received funding from the France 2030 program managed by the French National Research Agency under grant agreement No. ANR-22-PECY-0006.

\bibliographystyle{IEEEtran}
\bibliography{biblio}

\begin{thebibliography}{10}
\providecommand{\url}[1]{#1}
\csname url@samestyle\endcsname
\providecommand{\newblock}{\relax}
\providecommand{\bibinfo}[2]{#2}
\providecommand{\BIBentrySTDinterwordspacing}{\spaceskip=0pt\relax}
\providecommand{\BIBentryALTinterwordstretchfactor}{4}
\providecommand{\BIBentryALTinterwordspacing}{\spaceskip=\fontdimen2\font plus
\BIBentryALTinterwordstretchfactor\fontdimen3\font minus
  \fontdimen4\font\relax}
\providecommand{\BIBforeignlanguage}[2]{{%
\expandafter\ifx\csname l@#1\endcsname\relax
\typeout{** WARNING: IEEEtran.bst: No hyphenation pattern has been}%
\typeout{** loaded for the language `#1'. Using the pattern for}%
\typeout{** the default language instead.}%
\else
\language=\csname l@#1\endcsname
\fi
#2}}
\providecommand{\BIBdecl}{\relax}
\BIBdecl

\bibitem{CT}
B.~Laurie, ``Certificate transparency,'' \emph{Communications of the ACM},
  vol.~57, no.~10, pp. 40--46, 2014.

\bibitem{rfc9162}
\BIBentryALTinterwordspacing
B.~Laurie, E.~Messeri, and R.~Stradling, ``Certificate transparency version
  2.0,'' Internet Requests for Comments, {RFC Editor}, {RFC} 9162, Dec. 2021.
  [Online]. Available: \url{http://www.rfc-editor.org/rfc/rfc1654.txt}
\BIBentrySTDinterwordspacing

\bibitem{Ryan:17}
M.~D. Ryan, ``Making decryption accountable,'' in \emph{Cambridge International
  Workshop on Security Protocols}.\hskip 1em plus 0.5em minus 0.4em\relax
  Springer, 2017, pp. 93--98.

\bibitem{armando2005avispa}
A.~Armando, D.~Basin, Y.~Boichut, Y.~Chevalier, L.~Compagna, J.~Cu{\'e}llar,
  P.~H. Drielsma, P.-C. H{\'e}am, O.~Kouchnarenko, J.~Mantovani \emph{et~al.},
  ``The avispa tool for the automated validation of internet security protocols
  and applications,'' in \emph{International conference on computer aided
  verification}.\hskip 1em plus 0.5em minus 0.4em\relax Springer, 2005, pp.
  281--285.

\bibitem{cheval2018deepsec}
V.~Cheval, S.~Kremer, and I.~Rakotonirina, ``Deepsec: deciding equivalence
  properties in security protocols theory and practice,'' in \emph{2018 IEEE
  Symposium on Security and Privacy (SP)}.\hskip 1em plus 0.5em minus
  0.4em\relax IEEE, 2018, pp. 529--546.

\bibitem{chadha2016automated}
R.~Chadha, V.~Cheval, {\c{S}}.~Ciob{\^a}c{\u{a}}, and S.~Kremer, ``Automated
  verification of equivalence properties of cryptographic protocols,''
  \emph{ACM Transactions on Computational Logic (TOCL)}, vol.~17, no.~4, pp.
  1--32, 2016.

\bibitem{meier2013tamarin}
S.~Meier, B.~Schmidt, C.~Cremers, and D.~Basin, ``The tamarin prover for the
  symbolic analysis of security protocols,'' in \emph{International conference
  on computer aided verification}.\hskip 1em plus 0.5em minus 0.4em\relax
  Springer, 2013, pp. 696--701.

\bibitem{BlanchetFnTPS16}
B.~Blanchet, ``Modeling and verifying security protocols with the applied pi
  calculus and {P}ro{V}erif,'' \emph{Foundations and Trends in Privacy and
  Security}, vol.~1, no. 1--2, pp. 1--135, Oct. 2016.

\bibitem{blanchet:hal-03366962}
B.~Blanchet, V.~Cheval, and V.~Cortier, ``Proverif with lemmas, induction, fast
  subsumption, and much more,'' in \emph{43rd {IEEE} Symposium on Security and
  Privacy, {SP} 2022, San Francisco, CA, USA, May 22-26, 2022}.\hskip 1em plus
  0.5em minus 0.4em\relax {IEEE}, 2022, pp. 69--86.

\bibitem{supplement}
``Proverif models and proverif source code,''
  \url{https://www.dropbox.com/sh/gbn5dy0amz1106f/AACbcILzg8o1Bhf5D3nMFa2Wa?dl=0},
  2022.

\bibitem{ARPKI}
D.~Basin, C.~Cremers, T.~H.-J. Kim, A.~Perrig, R.~Sasse, and P.~Szalachowski,
  ``{ARPKI}: Attack resilient public-key infrastructure,'' in \emph{Proceedings
  of the 2014 ACM SIGSAC Conference on Computer and Communications Security},
  2014, pp. 382--393.

\bibitem{DTKI}
J.~Yu, V.~Cheval, and M.~Ryan, ``{DTKI}: A new formalized pki with verifiable
  trusted parties,'' \emph{The Computer Journal}, vol.~59, no.~11, pp.
  1695--1713, 2016.

\bibitem{Ryan:ndss}
M.~D. Ryan, ``Enhanced certificate transparency and end-to-end encrypted
  mail,'' in \emph{NDSS Symposium}, 2014.

\bibitem{Kunemann:accountability}
R.~K{\"u}nnemann, I.~Esiyok, and M.~Backes, ``Automated verification of
  accountability in security protocols,'' in \emph{2019 IEEE 32nd Computer
  Security Foundations Symposium (CSF)}.\hskip 1em plus 0.5em minus 0.4em\relax
  IEEE, 2019, pp. 397--39\,716.

\bibitem{kroll2014secure}
J.~Kroll, E.~Felten, and D.~Boneh, ``Secure protocols for accountable warrant
  execution,'' 2014, \url{https://www.jkroll.com/papers/warrant_paper.pdf}.

\bibitem{nunez2019escrowed}
D.~Nu{\~n}ez, I.~Agudo, and J.~Lopez, ``Escrowed decryption protocols for
  lawful interception of encrypted data,'' \emph{IET Information Security},
  vol.~13, no.~5, pp. 498--507, 2019.

\bibitem{idan2020prshare}
L.~Idan and J.~Feigenbaum, ``Prshare: A framework for privacy-preserving,
  interorganizational data sharing,'' in \emph{Proceedings of the 19th Workshop
  on Privacy in the Electronic Society}, 2020, pp. 137--149.

\bibitem{9627586}
M.~Li, Y.~Chen, C.~Lal, M.~Conti, M.~Alazab, and D.~Hu, ``Eunomia: Anonymous
  and secure vehicular digital forensics based on blockchain,'' \emph{IEEE
  Transactions on Dependable and Secure Computing}, 2021.

\bibitem{pad.tech}
``Privacy-preserving accountable decryption,'' \url{https://pad.tech}, 2022.

\bibitem{kroll2015accountable}
J.~A. Kroll, ``Accountable algorithms,'' Ph.D. dissertation, Princeton
  University, 2015.

\bibitem{frankle2018practical}
J.~Frankle, S.~Park, D.~Shaar, S.~Goldwasser, and D.~Weitzner, ``Practical
  accountability of secret processes,'' in \emph{27th USENIX Security Symposium
  (USENIX Security 18)}, 2018, pp. 657--674.

\bibitem{proudler2014trusted}
G.~Proudler, L.~Chen, and C.~Dalton, ``Trusted platform architecture,'' in
  \emph{Trusted Computing Platforms}.\hskip 1em plus 0.5em minus 0.4em\relax
  Springer, 2014, pp. 109--129.

\bibitem{johnson2018titan}
S.~Johnson and D.~Rizzo, ``Titan silicon root of trust for google cloud,'' in
  \emph{Secure Enclaves Workshop}, 2018,
  \url{https://keystone-enclave.org/workshop-website-2018/slides/Scott_Google_Titan.pdf}.

\bibitem{opentitan}
``{OpenTitan}: an open source, transparent, high-quality reference design and
  integration guidelines for silicon root of trust (rot) chips,''
  \url{https://opentitan.org/}.

\bibitem{moller2021preliminary}
B.~H. M{\o}ller, J.~G. S{\o}ndergaard, K.~S. Jensen, M.~W. Pedersen, T.~W.
  B{\o}gedal, A.~Christensen, D.~B. Poulsen, K.~G. Larsen, R.~R. Hansen, T.~R.
  Jensen \emph{et~al.}, ``Preliminary security analysis, formalisation, and
  verification of opentitan secure boot code,'' in \emph{Nordic Conference on
  Secure IT Systems}.\hskip 1em plus 0.5em minus 0.4em\relax Springer, 2021,
  pp. 192--211.

\bibitem{pad-places-app}
``{PAD Places} - a location sharing app on ios and google implementing
  privacy-preserving accountable decryption,''
  \url{https://www.pad.tech/pad-places}, 2022.

\bibitem{parra2017kyc}
J.~Parra~Moyano and O.~Ross, ``{KYC} optimization using distributed ledger
  technology,'' \emph{Business \& Information Systems Engineering}, vol.~59,
  no.~6, pp. 411--423, 2017.

\bibitem{c4}
K.~Bhargavan, B.~Blanchet, and N.~Kobeissi, ``Verified models and reference
  implementations for the tls 1.3 standard candidate,'' in \emph{2017 IEEE
  Symposium on Security and Privacy (SP)}.\hskip 1em plus 0.5em minus
  0.4em\relax IEEE, 2017, pp. 483--502.

\bibitem{c25}
N.~Kobeissi, K.~Bhargavan, and B.~Blanchet, ``Automated verification for secure
  messaging protocols and their implementations: A symbolic and computational
  approach,'' in \emph{2017 IEEE European symposium on security and privacy
  (EuroS\&P)}.\hskip 1em plus 0.5em minus 0.4em\relax IEEE, 2017, pp. 435--450.

\bibitem{c26}
N.~Kobeissi, G.~Nicolas, and K.~Bhargavan, ``Noise explorer: Fully automated
  modeling and verification for arbitrary noise protocols,'' in \emph{2019 IEEE
  European Symposium on Security and Privacy (EuroS\&P)}.\hskip 1em plus 0.5em
  minus 0.4em\relax IEEE, 2019, pp. 356--370.

\bibitem{c9}
B.~Blanchet, ``Symbolic and computational mechanized verification of the
  arinc823 avionic protocols,'' in \emph{2017 IEEE 30th Computer Security
  Foundations Symposium (CSF)}.\hskip 1em plus 0.5em minus 0.4em\relax IEEE,
  2017, pp. 68--82.

\bibitem{c17}
V.~Cortier, D.~Galindo, and M.~Turuani, ``A formal analysis of the
  neuch{\^a}tel e-voting protocol,'' in \emph{2018 IEEE European Symposium on
  Security and Privacy (EuroS\&P)}.\hskip 1em plus 0.5em minus 0.4em\relax
  IEEE, 2018, pp. 430--442.

\bibitem{ProVerifManual}
B.~Blanchet, B.~Smyth, V.~Cheval, and M.~Sylvestre, ``Proverif 2.04: automatic
  cryptographic protocol verifier, user manual and tutorial,'' \emph{Version
  from}, 2021.

\bibitem{abadi06}
M.~Abadi and V.~Cortier, ``Deciding knowledge in security protocols under
  equational theories,'' \emph{Theoretical Computer Science}, vol. 367, no.~1,
  pp. 2--32, 2006.

\bibitem{cheval18}
V.~Cheval, V.~Cortier, and M.~Turuani, ``A little more conversation, a little
  less action, a lot more satisfaction: Global states in {ProVerif},'' in
  \emph{IEEE Computer Security Foundations Symposium (CSF)}, Jul. 2018, pp.
  344--358.

\bibitem{BCW-ccs22}
K.~Bhargavan, V.~Cheval, and C.~Wood, ``A symbolic analysis of privacy for tls
  1.3 with encrypted client hello,'' in \emph{{P}roceedings of the 29th {ACM}
  {C}onference on {C}omputer and {C}ommunications {S}ecurity ({CCS}'22)}.\hskip
  1em plus 0.5em minus 0.4em\relax Los Angeles, USA: ACM Press, Nov. 2022.

\bibitem{BCC-sn22-long}
B.~Blanchet, V.~Cheval, and V.~Cortier, ``Proverif with lemmas, induction, fast
  subsumption, and much more (long version),''
  \url{https://chevalvi.gitlabpages.inria.fr/chevalvi/files/BCC-snp22-long.pdf},
  Inria Paris, CNRS, LORIA, Tech. Rep., 2021.

\end{thebibliography}

\ifthenelse{\boolean{longversion}}{
\cleardoublepage
\appendices

\section{Detailed syntax \& semantics}

We reuse most of the notation given in~\cite{BCC-sn22-long}. We simplify however the definitions as we do not consider here equivalence properties, phases or tables.
Recall that $\Cuser$ is the set of Horn clauses given in the input file describing the semantics of the predicates in $\Fp$ and that for all $p \in \Fp$, the semantics of $p$, i.e. $\sem{p}$, is defined by the derivability within $\Cuser$. Additionally, we consider a set of user-given abstract predicates (i.e. predicates declared with the option ``blocking'') $\Fap$ such that for all $p\in \Fap$, the set of $\sem{p}$ is assumed to exist. As mentioned in the body of the paper, we will consider the set $\Fbp$ of $\blocking{p}$ with $p \in \Fp$. Note that given $p \in \Fap$, we define $\blocking{p} = p$ (the predicate is already blocking).


\subsection{Semantics}

\newcommand{\teval}{\Downarrow}

Let $\V$ be an infinite set of variables and $\N$ be an infinite set of names.
The semantics of a destructor function symbol $g$ is given by a sequence of rewrite rules $\defrw(g) = [g(U_{i,1},\ldots,U_{i,n})\rightarrow U_i]_{i=1}^k$ where $U_{i,j}$ and $U_i$ can be either terms or the special constant $\fail$. The result of the evaluation of a term $g(M_1, \ldots,M_n)$ is typically the result of the first rewrite rule in $\defrw(g)$ that matches the arguments $M_1,\ldots, M_n$. Formally, the evaluation of an expression \(D\) to \(U\), denoted \(D \teval U\), is given by:
    \begin{itemize}
      \item if \(D \in \V \cup \N \cup \{\fail\}\) then \(D \teval D\);
      \item if \(D = f(D_1, \ldots, D_n)\) and \(D_1 \teval V_1, \ldots, D_n \teval V_n\):
      \begin{itemize}
        \item if \(f \in \Fc\) and \(\fail \in \{V_1, \ldots, V_n\}\) then \(D \teval \fail\);
        \item otherwise if \(f \in \Fc\) then \(D \teval f(V_1, \ldots, V_n)\);
        \item otherwise \(f \in \Fd\) and we let \(D' = g(V_1, \ldots, V_n)\). 
        Then either \(D'\) cannot be rewritten by any rule of \(\defrw(g)\) and \(D \teval \fail\), or otherwise \(D \teval V\) where \(D' \to V\) by the first rule of \(\defrw(g)\) applicable to \(D'\).
      \end{itemize}
    \end{itemize}

The semantics of processes is given in Fig.~\ref{fig:semantics}.

\begin{figure*}[!ht]
\begin{center}
\[
\begin{array}{@{}lr}
	\E,\Pro \cup \multiset{0},\Att \lstep{}  \E, \Pro,\Att &(\RNil)\\[1mm]
	\E,\Pro \cup \multiset{P \mid Q},\Att \lstep{} \E,\Pro \cup \multiset{P,Q},\Att &(\RPar)\\[1mm]
	 \E, \Pro \cup \multiset{!P},\Att \lstep{} \E, \Pro \cup \multiset{P,!P},\Att &(\RRepl)\\[1mm]
	 \E, \Pro \cup \multiset{\new a;P},\Att \lstep{}  \E \cup \{a'\}, \Pro \cup \multiset{P\{^{a'}/_a\}},\Att\hfill\text{if $a' \not\in \E$}&(\RRestr)\\[1mm]
  \E,\Pro \cup \multiset{\out(N,M);P, \inn(N,x);Q},\Att \lstep{\fmsg{N}{M}} \E, \Pro \cup \multiset{P, Q\{^M/_x\}},\Att&(\RIO)\\[1mm]
	 \E, \Pro,  \Att \lstep{\fmsg{N}{M}}  \E, \Pro,\Att \hfill\text{if $N,M \in \Att$}&(\RMsg)\\[1mm]
	 \E,\Pro \cup \multiset{\llet x=D \inelse P \eelse Q}, \Att \lstep{}  \E,\Pro\cup\multiset{P\{^M/_x\}}, \Att\quad\hfill\text{if $D \teval M$}&(\RLetin)\\[1mm]
	 \E,\Pro \cup \multiset{\llet x=D \inelse P \eelse Q}, \Att \lstep{}  \E,\Pro\cup\multiset{Q}, \Att\quad\hfill\text{if $D \teval \fail$}&(\RLetelse)\\[1mm]
	 \E,\Pro \cup \multiset{\out(N,M);P},\Att \lstep{\fmsg{N}{M}} \E, \Pro \cup \multiset{P},\Att \cup \{M\}\quad\hfill\text{if $N \in \Att$} &(\ROut)\\[1mm]
	 \E,\Pro \cup \multiset{\inn(N,x);Q},\Att \lstep{\fmsg{N}{M}} \E, \Pro \cup \multiset{Q\{^M/_x\}},\Att \quad\hfill\text{if $N,M \in \Att$}&(\RIn)\\[1mm]
	 \E,\Pro,\Att \lstep{} \E, \Pro,\Att \cup \{M\}&(\RApp)\\
	\hfill\text{if $M_1,\ldots, M_n \in \Att$, $f/n \in \Fc \cup \Fd$ and $f(M_1,\ldots,M_n) \teval M$}&\\[1mm]
	 \E,\Pro,\Att \lstep{}  \E \cup \{a'\},\Pro,\Att \cup \{a'\}\quad\hfill\text{if $a' \not\in \E$}&(\RNew)\\[1mm]
	 \E,\Pro \cup \multiset{\event(ev); P},\Att \lstep{\fevent{ev}}  \E,  \Pro \cup \multiset{P},\Att &(\REvent)\\[1mm]
	\E, \Pro \cup \multiset{\llet x_1,\ldots, x_n \suchthat p(M_1,\ldots,M_k) \inelse P \eelse Q} \lstep{} \E,\Pro \cup \{P\sigma\}, \Att &(\RPredin)\\
\hfill\text{if $\dom{\sigma} = \{ x_1,\ldots,x_n\}$ and $p(M_1\sigma,\ldots,M_k\sigma) \in \sem{p}$}&\\[1mm]
	\E, \Pro \cup \multiset{\llet x_1,\ldots, x_n \suchthat p(M_1,\ldots,M_k) \inelse P \eelse Q} \lstep{} \E,\Pro \cup \{Q\}, \Att &(\RPredelse)\\
\hfill\text{if for all $\sigma$ with $\dom{\sigma} = \{ x_1,\ldots,x_n\}$, $p(M_1\sigma,\ldots,M_k\sigma) \not\in \sem{p}$}&\\[1mm]
\end{array}
\]
\end{center}
\caption{Transitions between configurations.}
\label{fig:semantics}
\end{figure*}


\subsection{Satisfaction of facts and queries}

To express secrecy and correspondence properties, ProVerif consider facts as follows:
\[
\begin{array}{rl@{\qquad}l}
F ::= & \fevent{ev(M_1,\ldots,M_n)}&ev\text{ event}\\
      & \fmsg{M}{N}&\\
      & \fatt{M}&\\
      & p(M_1,\ldots,M_n)& p \in \Fp\\
      & M = N&\\
      & M \neq N&\\
\end{array}
\]

The satisfaction relation $\satisfy$ is defined over a trace $T$, a step $\tau$ and a ground fact $F$, denoted $T,\tau \satisfy F$, as follows:
\begin{itemize}
\item $T,\tau \satisfy \fatt{M}$ iff $T[\tau] \lstep{}^* \E, \Pro, \Att$  by only the rules \RApp and \RNew such that $M \in \Att$.
\item $T, \tau \satisfy \fmsg{M}{N}$ iff $T[\tau-1] \lstep{\fmsg{M}{N}} T[\tau]$.
\item $T, \tau \satisfy \fevent{ev}$ iff $T[\tau-1] \lstep{\fevent{ev}} T[\tau]$.
\item $T, \tau \satisfy p(M_1,\ldots,M_n)$ iff $p(M_1,\ldots,M_n) \in \sem{p}$ .with $p \in \Fp \cup \Fap$.
\item $T, \tau \satisfy M = N$ iff $M = N$.
\item $T, \tau \satisfy M \neq N$ iff $M \neq N$.
\end{itemize}
Note that only facts $\fmsg{M}{N}, \fevent{ev}$ and $\fatt{M}$ actually depend on the step $\tau$ and the trace $T$.  Thus, we also write $\satisfy p(M_1,\ldots,M_n)$, $\satisfy M = N$ and $\satisfy M \neq N$. Finally, when the exact value of the step is unnecessary, we can write $T \satisfy F$ instead of $\exists \tau. T,\tau \satisfy F$.
The satisfaction of first order logic formula over facts is naturally defined based on $\satisfy$. 

As mentioned in the body of the paper, by seen a correspondence query $F_1 \wedge \dots \wedge F_n \Rightarrow \psi$ as a first order logic formula $\Psi = \forall \tilde{x}. (F_1 \wedge \dots \wedge F_n \Rightarrow \exists \tilde{y}. \psi)$ with $\tilde{x} = \vars{F_1,\ldots,F_n}$ and $\tilde{y} = \vars{\psi} \setminus \tilde{x}$, its semantics is defined as: for all traces $T$ of $P$, $T \satisfy \Psi$.


\subsection{From operational to instrumented semantics}

Freshness of names from rules \RNew and \RRestr have always been problematic to handle when it comes to automatic verification. In~\cite{BCC-sn22-long}, they deal with the problem by \emph{instrumenting} their syntax and semantics. Typically, replication processes $!P$ will be associated with a session variable $i$, i.e. $!^i P$, and processes $\new k; P$ will be replaced by $P\{ ^{k[\tilde{x}]}/k\}$ where $k[\tilde{x}]$ represents a pattern and $\tilde{x}$ are the input and session variables in the scope of $\new k$. In other words, the name becomes a function that is applied on its environnement at the time its creation. Such instrumentation is shown to be correct \cite[Lemma 3]{BCC-sn22-long} where the satisfaction relation on instrumented facts is denoted $\satisfyI$.
We naturally extend the satisfaction relation to user-defined predicates as follows: For all $p \in \Fp \cup \Fap$, for all $T, \tau$, $T,\tau \satisfyI p(M_1,\ldots,M_n)$ iff $T,\tau \satisfyI \blocking{p}(M_1,\ldots,M_n)$ iff $p(M_1,\ldots, M_n) \in \sem{p}$.

It is in fact shown that we can also restrict the space of traces when verifying a correspondence query, that is by only considering IO compliant traces~\cite[Definition 13]{BCC-sn22-long}. The exact details of the definition is not relevant to this paper but it intuitively restricts the satisfaction of attacker facts to what is exactly known at a current step (and not what the attacker can learn from its knowledge). Similarly, it forces the satisfaction of message fact to go through the attacker and not through an internal communication when the query does not request to prove message facts.
Note that the IO compliance of a trace and the IO compliance of a query is normally defined w.r.t. a phase but as previously mentioned, we simplified the framework by removing tables and phases. 

We consider the same satisfaction relation $\satisfyIO$ that we extend to user-defined predicates as follows:
\begin{itemize}
	\item $T,\tau \satisfyIO p(M_1,\ldots,M_n)$ iff $p(M_1,\ldots,M_n) \in \sem{p}$,  $\tau = \minsize{\Cuser \cup \CblockPred}{p(M_1,\ldots,M_n)}$, and $p \in \Fp$
	\item $T,\tau \satisfyIO \blocking{p}(M_1,\ldots,M_n)$ with $p \in \Fp \cup \Fap$ iff $\tau = 0$ and $p(M_1,\ldots,M_n) \in \sem{p}$.
\end{itemize}
where $\CblockPred = \{ \rightarrow F \mid \pred(F) = p \in \Fbp \cup \Fap \wedge F \in \sem{p}\}$ and $\minsize{\Cuser \cup \CblockPred}{p(M_1,\allowbreak\ldots,M_n)}$ is the size (in term of number of nodes) of the smallest derivation of $p(M_1,\ldots,M_n)$ in $\Cuser \cup \CblockPred$.

Note that contrary to the satisfaction relation $\satisfy$, in $T,\tau \satisfyIO p(M_1,\ldots,M_n)$, the integer $\tau$ provides some information on the deduced predicate, namely that it corresponds to the size of the smallest derivation capable of deriving the fact. Thus, $\tau$ does not have the \emph{step} signification as it is for attacker, message and event facts. Note that the satisfaction is still independent from the trace $T$.

 
The satisfaction of a correspondence query is the same as in~\cite[Definition 10]{BCC-sn22-long}. We recall it in the simplified version where we do not consider injective events nor nested queries, as they are not the main goal of this paper.



\begin{definition}
\label{def:satisfaction query}
Given two satisfaction relations $\vdash_{p}$ and $\vdash_{c}$, given a set of traces $\mathcal{T}$ of some initial instrumented configuration $\C_I$, we say that $\mathcal{T}$ satisfies $\query$ w.r.t. to $\vdash_{p}$ and $\vdash_{c}$, denoted $(\vdash_p,\vdash_c,\mathcal{T})$, when:

For all traces $T \in \mathcal{T}$, for all tuples of integers $\tilde{\tau} = (\tau_1,\ldots,\tau_n)$, for all substitutions $\sigma$, if $T, \tau_i \vdash_{p} F_i\sigma$ for $i = 1,\ldots, n$ then there exists a substitution $\sigma'$ such that $F_i\sigma = F_i\sigma'$ for $i = 1,\ldots, n$  and $T \vdash_c \qconcl\sigma'$.
\end{definition}

We distinguish two satisfaction relations, one for the premises of the query, i.e. $\vdash_p$, and the other for the conclusion of the query, i.e. $\vdash_c$. This separation originally occurred in~\cite{BCC-sn22-long} as a proof technique. But in this case, we will in fact use to show the different way we need to prove lemmas for them to be used correctly.



We obtain the same result linking the restrictions to the main satisfiability relation:


\begin{lemma}[\hspace{1sp}{\cite[Lemma 8]{BCC-sn22-long}}]
\label{lem:instrumented_to_IO_compliant}
Let $\C_I = \rho, P,\Att$ be an initial instrumented configuration. Let $\query$ be an IO-compliant correspondence query such that $\names(\query) \subseteq \dom{\rho}$. We have:
\[
\begin{array}{c}
(\satisfyI,\satisfyI,\tracestep{\C_I}) \models \query \\
\text{if and only if}\\
(\satisfyIO,\satisfyI,\tracestepIO{\C_I}) \models \query
\end{array}
\]
\end{lemma}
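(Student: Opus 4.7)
The plan is to follow the structure of the proof of the original \cite[Lemma 8]{BCC-sn22-long}, extending each case with a treatment of user-defined predicates, and observing that user-defined predicates are essentially transparent for the transformation between $\tracestep{\C_I}$ and $\tracestepIO{\C_I}$ since their satisfaction depends only on the terms, not on the trace.

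First, I would unfold the definition of query satisfaction (Def.~\ref{def:satisfaction query}) on both sides. For the forward direction ($\Rightarrow$), consider $T \in \tracestepIO{\C_I}$, a tuple $\tilde\tau$ and a substitution $\sigma$ such that $T,\tau_i \satisfyIO F_i\sigma$ for all $i$. Since $\tracestepIO{\C_I} \subseteq \tracestep{\C_I}$ (IO-compliance is a restriction), and since $T,\tau \satisfyIO F$ implies $T,\tau' \satisfyI F$ for a suitable $\tau'$ (for attacker/message/event facts this is part of the original Lemma 8; for user-defined and blocking predicates $p(\vec M)$, we have $p(\vec M)\in\sem p$ on both sides, and in $\satisfyI$ the index is unconstrained so we simply pick any witness), we obtain the premise in $(\satisfyI,\satisfyI,\tracestep{\C_I})$ and conclude by the hypothesis; the conclusion is already stated under $\satisfyI$ so it carries over unchanged.

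For the reverse direction ($\Leftarrow$), consider a trace $T \in \tracestep{\C_I}$ together with $\tilde\tau$ and $\sigma$ such that $T,\tau_i \satisfyI F_i\sigma$ for all $i$. The core argument, as in \cite{BCC-sn22-long}, is to transform $T$ into an IO-compliant trace $T' \in \tracestepIO{\C_I}$ such that the premises become satisfied under $\satisfyIO$ at possibly re-indexed steps $\tilde\tau'$, while preserving satisfaction of the conclusion under $\satisfyI$. For the cases $F_i = \fatt{\cdot},\fmsg{\cdot}{\cdot},\fevent{\cdot}$, this transformation is exactly the one of the original lemma. For the new case where $F_i = p(\vec M)\sigma$ with $p \in \Fp \cup \Fap$, the satisfaction under $\satisfyI$ reduces to $p(\vec M)\sigma \in \sem p$, which is independent of the trace; so we simply set $\tau'_i = \minsize{\Cuser \cup \CblockPred}{p(\vec M)\sigma}$ when $p \in \Fp$, and $\tau'_i = 0$ when $p \in \Fap$ (or when the fact is blocking), obtaining $T',\tau'_i \satisfyIO F_i\sigma$ by definition. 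Applying the hypothesis to $T'$ and $\tilde\tau'$ then yields the required $\sigma'$ such that $T' \satisfyI \qconcl\sigma'$; transferring the conclusion back to $T$ follows the same bookkeeping as in the original proof, noting that $\qconcl$ being IO-compliant means its satisfaction under $\satisfyI$ is preserved by the transformation.

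The main technical obstacle is not the user-defined predicates themselves — they behave well because their semantics is trace-independent — but rather making sure that the index re-indexing $\tilde\tau \mapsto \tilde\tau'$ used in the reverse direction interacts cleanly with the heterogeneous nature of $\satisfyIO$: some indices now refer to positions in the trace (for $\patt,\pmsg,\pevent$) and others refer to minimal derivation sizes (for user-defined predicates). I would handle this by treating each $F_i$ independently when constructing $\tau'_i$, since Def.~\ref{def:satisfaction query} quantifies over tuples componentwise, so no global consistency between trace-positional and derivation-size indices is required. The rest of the proof is then a routine adaptation of the case analysis of \cite[Lemma 8]{BCC-sn22-long}.
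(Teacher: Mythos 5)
The paper states this lemma without proof, importing it directly from \cite[Lemma 8]{BCC-sn22-long} after the ``natural'' extension of $\satisfyI$ and $\satisfyIO$ to user-defined and blocking predicates, and your argument is exactly the justification that extension requires: the new cases are trace-independent (satisfaction reduces to $p(\vec M)\in\sem{p}$), so the trace transformation of the original proof leaves them untouched and only the step indices need to be fixed ($\minsize{\Cuser\cup\CblockPred}{\cdot}$ for $\Fp$, $0$ for blocking facts), componentwise as Definition~\ref{def:satisfaction query} permits. Your proposal is correct and coincides with the paper's (implicit) approach.
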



\subsection{Proof by induction}

In~\cite{BCC-sn22-long}, a query is proved by induction typically on the size of the trace and the multiset of steps $(\tau_1,\ldots,\tau_n)$ from \Cref{def:satisfaction query}. Moreover, as the proof by induction typically consists on generating a special \emph{inductive lemma}, representing the inductive hypothesis, which would be applied during the saturation procedure, ProVerif currently simplifies the conclusion of the query to remove non-blocking events when generating such inductive lemma. In our case, thanks to our generalization of blocking predicate, we do not need to simplify the conclusion query but we will replace all predicates in the conclusion of the query with their blocking counterpart. However, our inductive hypothesis had to be extended as stated below.


\begin{definition}
\label{def:inductive_hypothesis}
Let $\query$ be a IO-compliant query of the form $\bigwedge_{i=1}^n F_i \Rightarrow \qconcl$. For all traces $T$, for all tuples of steps $\tuplesteps = (\tau_1,\ldots,\tau_k)$ and $\tuplesteps_p = (\tau_{k+1},\ldots,\tau_m)$, $\IndHyp{\query}{T}{\tuplesteps}{\tuplesteps_p}$ holds, if and only if, if $n = m$ and for all $j \leq n$, $\pred(F_j) \in \Fp \Leftrightarrow j\geq k+1$ then 
\begin{quote}
for all substitutions $\sigma$, if $T,\tau_i \satisfyIO F_i\sigma$ for $i = 1, \ldots, n$ then there exists $\sigma'$ such that $F_i\sigma = F_i\sigma'$ for $i = 1, \ldots, n$ and $T \satisfyIO \ToB{\qconcl\sigma'}$. 
\end{quote}
\end{definition}

There are two key differences with \cite[Definition 15]{BCC-sn22-long}:
\begin{inparaenum}[(i)]
\item we separated the steps from standard facts, $\patt$, $\pmsg$ and $\pevent$, with the ones corresponding to predicate from $\Fp$.
\item the satisfaction of the conclusion of the query relies on $\satisfyIO$ instead of $\satisfyI$, the main difference being related to the satisfaction of $\fatt{M}$ facts. Intuitively, $T,\tau \satisfyI \fatt{M}$ holds if the attacker \emph{can} create $M$ from its knowledge at $\tau$, whereas $T,\tau \satisfyIO \fatt{M}$ ensures that the attacker already has $M$ in its knowledge. This distinction matter for the proof of induction has we need to order the facts, so that we can apply our inductive hypothesis on them.
\end{inparaenum}

To state the properties for lemmas and inductive lemmas needed for expressing the soundness of the saturation, we naturally extended the strict order relation $\orderind$ to the lexicographic ordering of triplets $(T,\tuplesteps,\tuplesteps_p)$ where comparison of the size of $T$ is used for the first element, and multiset ordering on natural number is used for the second and third element, i.e. $\tuplesteps$ and $\tuplesteps_p$ are viewed as a multiset of natural number instead of tuples in the order.

The properties on lemmas and inductive lemmas needed for the soundness can thus be summarized as follow.


\begin{definition}[\hspace{1sp}{\cite[Definition 22]{BCC-sn22-long}}]
\label{def:inductive_hypothesis_main}
Let $\C_I$ be an initial instrumented configuration. Let $\Lemma$, $\Lemma_i$ be two sets of lemmas. Let $\tuplesteps$ and $\tuplesteps_p$ be two tuples of steps. Let $T \in \tracestepIO{\C_I}$. We define the predicate $\HypLemmas{T}{\tuplesteps}{\tuplesteps_p}{\Lemma}{\Lemma_i}$ to hold if and only if 
\begin{itemize}
	\item for all $\query \in \Lemma$, $(\satisfyIO,\satisfyIO,\tracestepIO{\C_I}) \models \query$;
	\item for all $\query \in \Lemma_i$, for all $T' \in \tracestepIO{\C_I}$, for all tuples of steps $\tuplesteps'$, $\tuplesteps'_p$, if $(T',\tuplesteps',\tuplesteps'_p) \orderind (T,\tuplesteps,\tuplesteps_p)$ then $\IndHyp{\query}{T'}{\tuplesteps'}{\tuplesteps'_p}$.
\end{itemize}
\end{definition}

As in~\cite{BCC-sn22-long}, the set $\Lemma$ in the above definition represents the lemmas already proved by ProVerif (or the axioms) whereas $\Lemma_i$ are the query that we are currently trying to prove by induction.


\section{The procedure}


\subsection{Generation of Horn clauses}

The generation of Horn clauses is left unchanged from~\cite{BCC-sn22-long} and are defined by the translation $\GenClause{P,\I}{\Hfacts}{\rho}$ where $P$ is a process, $\I$ is a sequence of session variables and terms, $\Hfacts$ is a conjunction of facts and disequalities for the form $\forall \tilde{x}. M \neq N$, and $\rho$ is a mapping from names to instrumented names. We present in Fig.~\ref{fig:clause generation} the translation fitting our simplified framework. 

\begin{figure}[ht]
\[
\begin{array}{l}
\GenClause{0,\I}{\Hfacts}{\rho} = \emptyset\\[1mm]
\GenClause{P \mid Q,\I}{\Hfacts}{\rho} = \GenClause{P,\I}{\Hfacts}{\rho} \cup \GenClause{Q,\I}{\Hfacts}{\rho}\\[1mm]
\GenClause{! P,\I}{\Hfacts}{\rho} = \GenClause{P,(\I,i)}{\Hfacts}{\rho} \\
\quad\text{where $i$ is a fresh variable session identifier}\\[1mm]
\GenClause{\new a; P,\I}{\Hfacts}{\rho} = \GenClause{P,\I}{\Hfacts}{(\rho[a \mapsto a[\I]])}\\[1mm]
\GenClause{\inn(M,x);P,\I}{\Hfacts}{\rho} = \\
\quad\GenClause{P,(\I,x')}{(\Hfacts \wedge \pmsg{}(M\rho,x'))}{(\rho[x\mapsto x'])}\\[1mm]
\GenClause{\out(M,N);P,\I}{\Hfacts}{\rho} = \\
\quad\GenClause{P,\I}{\Hfacts}{\rho} \cup \{ \Hfacts \rightarrow \pmsg{}(M\rho,N\rho)\}\\[1mm]
\GenClause{\llet x=D \inelse P \eelse Q,\I}{\Hfacts}{\rho} =\\
\quad \bigcup \{ \GenClause{P,\I\sigma}{(\Hfacts\sigma \wedge \phi)}{(\rho\sigma[x\mapsto M])} \mid D\rho \tevalp (M,\sigma,\phi)\}\\
\quad \cup \bigcup \{ \GenClause{Q,\I\sigma}{(\Hfacts\sigma \wedge \phi)}{(\rho\sigma)} \mid D\rho \tevalp (\fail,\sigma,\phi)\}\\[1mm]
\GenClause{\event(ev);P,\I}{\Hfacts}{\rho} = \\
\quad\GenClause{P,\I}{\left(\Hfacts\wedge \psevent(ev\rho)\right)}{\rho} \cup \{ \Hfacts \rightarrow \pmevent(ev\rho)\}\\[1mm]
\GenClause{\llet x_1,\ldots, x_n \suchthat pred \inelse P \eelse Q,\I}{\Hfacts}{\rho} = \\
\quad \GenClause{P,\I}{(\Hfacts \wedge pred\rho')}{(\rho\rho')} \cup \GenClause{Q,\I}{\Hfacts}{\rho}\\
\quad\text{where $\rho' = [x_i \mapsto x'_i]_{i=1}^n$ and $x'_i$ fresh variables.}\\ 
\end{array}
\]
\caption{Generation of clauses}
\label{fig:clause generation}
\end{figure}

Given an initial instrumented configuration $\C_I$, the set of clauses generated from it is denoted $\Cprotocol(\C_I)$. 
We also give back the set of clauses generated for the attacker, denoted $\Cattacker(\C_I)$, as follows:
\begin{align}
&\quad\text{For all $a \in \Att(\C_I)$}, \rightarrow \patt(a[])\label{RInit}\tag{RInit}\\
&\quad\text{$\patt(b_0[i])$}\hfill \text{\quad with $b_0$ not occuring in $\C_I$}\label{RGen}\tag{RGen}\\
&\quad\text{$\patt(\fail)$}\label{RFail}\tag{RFail}\\
\begin{split}
&\quad\text{For all functions $h$,}\\*
&\quad\quad\text{for all $h(U_1,\ldots, U_n)\rightarrow U \with \phi$ in $\defrw(h)$}\\*
&\quad\quad \patt(U_1) \wedge \dots \wedge \patt(U_n) \wedge \phi \rightarrow \patt(U)\\*
\end{split}\tag{Rf}\label{ruleRf}\\
&\quad\pmsg(x,y) \wedge \patt(x) \rightarrow \patt(y)\label{Rl}\tag{Rl}\\
&\quad\patt(x) \wedge \patt(y) \rightarrow \pmsg(x,y)\label{Rs}\tag{Rs}
\end{align}



Given a trace $T$ of an initial instrumented configuration $\C_I$, we denote by $\CblockT{T}$ the set of Horn clauses of the form $\rightarrow \blocking{p}(M_1,\ldots,M_n)$ such that $T,\tau \satisfyIO \blocking{p}(M_1,\ldots,M_n)$ for some step $\tau$. Note that $\CblockT{T}$ contains also $\CblockPred$.


\subsection{Derivation and initial soundness}

We preserve exactly the notion of derivation of Horn clauses used in the saturation but the satisfaction relation of a derivation by a trace had to be extended.


\begin{definition}[\hspace{1sp}{\cite[Definition 17]{BCC-sn22-long}}]
\label{def:derivation}
Let $\Cset$ be a set of clauses. Let $F$ be a closed fact and a step $\tau$. A derivation $\D$\index{derivation} of $F$ at step $\tau$ from $\Cset$ is a finite tree defined as follows:
\begin{itemize}
	\item its nodes (except the root) are labelled by clauses from $\Cset$.
	\item its edges are labelled by ground facts and by a step.
	\item if the tree contains a node labelled by $R$ with one incoming edge labelled by $F_0$ and n outgoing edges labelled by $F_1, \ldots, F_n$ then $R \subsume F_1 \wedge \dots \wedge F_n \rightarrow F_0$.
	\item the root has one outgoing edge, labelled by $F$ and $\tau$.
\end{itemize}
\end{definition}

In the above definition, $R \subsume R'$ denotes the multiset subsumption of $R'$ by $R$.

We define the \emph{blocking-free size of a derivation $\D$} as the number of nodes in $\D$ not labelled by a clause of the form $\rightarrow C$ with $\pred(C) = \blocking{p}$ for some $p$. The notion of satisfaction of a derivation by a trace is also parametrized by a set of predicate $\allowedpreds$ on which we need to have guarantees on their satisfiability in the trace w.r.t. $\satisfyIO$. Typically, $\allowedpreds$ will contain the predicate $\patt$, $\pmsg$ and $p \in \Fp$ if they occur in the conclusion of the query or in the premise of a lemma. It will also contain all the blocking predicates and the user-defined predicate $\Fap$.


\begin{definition}
\label{def:satisfy_derivation}
Let $\D$ be a derivation of $F$ at step $\tau$. Let $\allowedpreds$ be a set of predicates. We say that 
	\emph{a trace $T$ satisfies a derivation $\D$ w.r.t. $\allowedpreds$}, denoted $T, \allowedpreds \satisfyD \D$, when for all nodes $\eta$ of $\D$, if $F_0,\tau_0$ is the label of the incoming edge of $\eta$ then 
\begin{enumerate}
	\item if $\eta$ is labelled with \eqref{Rl} then the outgoing edges of $\eta$ are labelled $\pmsg(N,M), \tau'$ and $\patt(N), \tau''$ for some $N,M,\tau',\tau''$ such that $\tau' \leq \tau_0$ and if $\patt \in \allowedpreds$ then $\tau'' < \tau_0$ else $\tau'' \leq \tau_0$.\label{enum-satisderiv-Rl}
	\item if $\eta$ is not labelled with \eqref{Rl}, then for all outgoing edges of $\eta$ labelled $F',\tau'$,  if $\pred(F_0) \in \Fp$ or $\pred(F') \not\in \Fp$ then
		\begin{itemize}
		\item $\pred(F') \not\in \allowedpreds$ implies $\tau' \leq \tau_0$
		\item $\pred(F') \in \allowedpreds$ implies $\tau' < \tau_0$
		\end{itemize}
		\label{enum-satisderiv-rules}
	\item if $F_0 = \patt(f(M_1,\ldots,M_m))$ with $f \in \Fdata$ then
		\begin{itemize}
		\item either $R$ is the clause $\patt(x_1) \wedge \dots \wedge \patt(x_m) \rightarrow \patt(f(x_1,\allowbreak\ldots,x_m))$ and if $\patt \in \allowedpreds$ then $\tau_0 = \min\{ \tau \mid T,\tau \satisfyIO F_0\}$;
		\item or $\eta$ is not the root and the node $\eta'$ connected to the incoming edge of $\eta$ is labelled with the clause $\patt(f(x_1,\ldots,x_m)) \rightarrow \patt(x_j)$ for some $j$ and if $\patt \in \allowedpreds$ then $f(M_1,\ldots,M_m) \in \Att(T[\tau_0])$;
		\end{itemize}
		else $\pred(F_0) \in \allowedpreds$ implies $\tau_0 = \min\{ \tau \mid T,\tau \satisfyIO F_0\}$\label{enum-satisderiv-satisfy}
	\end{enumerate}
\end{definition}

The main different before our definition and~\cite[Definition 18]{BCC-sn22-long} lies in \Cref{enum-satisderiv-rules} where we relaxed the ordering conditions between incoming and outgoing edges, but only in the particular case where the incoming edge is labelled by standard predicates, $\patt$ or $\pmsg$, and the outgoing edges is labelled by a predicate fact $F$ with $\pred(F)$ being a user-defined predicate. The clause labeling such a node typically only comes from the translation of the process into Horn clauses (recall that $\Cuser$ cannot contain $\patt$ nor $\pmsg$). However, in the satisfaction relation $T,\tau \satisfyIO F$, the step $\tau$ represents an actual step of the trace $T$ when $\pred(F)$ are standard predicates whereas $\tau$ represents the size of the smallest derivation in $\Cuser$ when $\pred(F) \in \Fp$. These two notions being unrelated, we cannot hope to order them. When $\pred(F) \in \Fap \cup \Fbp$, the satisfaction $T,\tau \satisfyIO F$ implies $\tau = 0$ and so we are naturally ordered.

Note, on the order hand, when both incoming and outgoing edge are labelled by user-defined predicates (i.e. the Horn clause labeling the node originated from $\Cuser$), we do require that the step $\tau_0$ of the incoming edge is strictly bigger than the one of the outgoing edge (when required by $\allowedpreds$). This will typically allow us to apply inductive hypotheses from \Cref{def:inductive_hypothesis_main}.

Another difference comes from \Cref{enum-satisderiv-satisfy} where we require $\tau_0$ to be the minimum step that satisfies $F_0$. We can basically always take such minimum by replacing the corresponding sub derivation with one proving the same fact with a smaller step. It is however useful when relating standard predicates and their blocking counterpart.

With this, we can state the first soundness result linking the satisfaction of facts in a trace and the satisfaction of a derivation. The following theorem is an adaption of \cite[Theorem 1]{BCC-sn22-long}.


\begin{theorem}
\label{lem:soundness_correspondence_query}
Let $\C_I$ be an initial instrumented configuration. Let $\allowedpreds$ be a set of predicates.

For all $T \in \tracestepIO{\C_I}$, for all ground facts $F$, for all steps $\tau$, if $T, \tau \satisfyIO F$ then there exists a derivation $\D$ of $F$ at step $\tau$ from $\Cattacker(\C_I) \cup \Cprotocol(\C_I) \cup \CblockT{T}$ such that $T, \allowedpreds \satisfyD \D$.
\end{theorem}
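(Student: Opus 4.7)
The plan is to prove this by generalising the argument of~\cite[Theorem 1]{BCC-sn22-long}, adding cases that handle the new families of facts---user-defined predicates from $\Fp$, abstract predicates from $\Fap$, and blocking counterparts $\blocking{p}$---while checking that the existing cases for $\patt$, $\pmsg$ and $\pevent$ remain correct under the refined ordering imposed by $\allowedpreds$. I will proceed by a well-founded induction on the lexicographic pair whose first component is the length of the trace $T$ (relevant for standard-predicate facts) and whose second component is $\minsize{\Cuser \cup \CblockPred}{F}$ (relevant for user-defined-predicate facts).

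For a standard fact $F \in \{\fatt{M}, \fmsg{M}{N}, \fevent{ev}\}$, I would follow the original argument: inspect the last transition $T[\tau-1] \lstep{\ell} T[\tau]$ (or, for $\fatt{M}$, the applicable constructor or input rule from Fig.~\ref{fig:semantics}), identify the matching Horn clause in $\Cprotocol(\C_I) \cup \Cattacker(\C_I)$ produced by $\GenClause{\cdot}{\cdot}{\cdot}$ of Fig.~\ref{fig:clause generation}, and build sub-derivations for its hypotheses using the induction hypothesis. The only adjustment from~\cite{BCC-sn22-long} is that \Cref{enum-satisderiv-rules} now requires strict inequality on an outgoing edge exactly when its predicate belongs to $\allowedpreds$; this matches the minimality enforced by $\satisfyIO$ together with \Cref{enum-satisderiv-satisfy}, while the non-strict case is automatic.

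For a fact $F = p(M_1, \ldots, M_n)$ with $p \in \Fp$, the hypothesis $T,\tau \satisfyIO F$ forces $\tau = \minsize{\Cuser \cup \CblockPred}{F}$, so a minimal derivation $\D_0$ of $F$ from $\Cuser \cup \CblockPred$ exists by definition. I would relabel $\D_0$ into a derivation $\D$ over $\Cattacker(\C_I) \cup \Cprotocol(\C_I) \cup \CblockT{T}$ by tagging each edge carrying a fact $G$ with the step $\minsize{\Cuser \cup \CblockPred}{G}$; the strict decrease required by \Cref{enum-satisderiv-rules} when both ends are user-defined then follows because every strict sub-derivation of a minimum derivation is strictly smaller. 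Leaves of $\D_0$ come from clauses $\rightarrow F'$ in $\CblockPred$ with $\pred(F') \in \Fbp \cup \Fap$; since $\CblockPred \subseteq \CblockT{T}$, such clauses are available, and the case $F = \blocking{p}(\tilde M)$ itself reduces to a single-node derivation using the clause $\rightarrow F \in \CblockT{T}$.

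The hard part will be the mixed case arising from the translation of the predicate-evaluation construct $\llet x_1,\ldots,x_n \suchthat p(\tilde M) \inelse P \eelse Q$: the Horn clause it produces carries a user-defined fact in its hypothesis but a standard fact in its conclusion, so the trace-step of the conclusion and the derivation-size step of the hypothesis are not comparable in any natural order. The relaxation of \Cref{enum-satisderiv-rules} was added precisely to make such nodes admissible, and the main technical work will be to verify that every case of the inductive construction---in particular this mixed case and its symmetric counterpart where facts about $\Cuser$-derivations are discharged against clauses produced by the process translation---respects the relaxed invariant. A secondary care point is \Cref{enum-satisderiv-satisfy}: standard-predicate leaves must carry the \emph{minimum} satisfying step, which is always achievable by replacing a sub-derivation by a shorter one whenever a non-minimal tagging arises.
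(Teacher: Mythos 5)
Your proposal is correct and follows essentially the same route as the paper's proof sketch: both reuse the structure of the original soundness theorem for standard facts, discharge blocking facts via the unit clauses in $\CblockT{T}$, handle user-defined predicates by taking a minimal derivation from $\Cuser \cup \CblockPred$ (so that strict sub-derivations give the strict decrease required by the invariant), and observe that the node produced by the $\llet \ldots \suchthat$ translation is exactly what the relaxed ordering condition of \Cref{def:satisfy_derivation} was designed to admit. The only cosmetic difference is that you make the lexicographic induction measure explicit, whereas the paper leaves it implicit in its appeal to the original argument.
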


\begin{proof}[Proof sketch]
The proof is typically the same as in~\cite[Theorem 1]{BCC-sn22-long} as we did not modify the core generation of Horn clauses and satisfiability relations but only extended them user-defined predicates. Hence we only need to discuss the new cases arose by them.

Notice that if $\pred(F)$ is a blocking predicate $\blocking{p}$, $T, \tau \satisfyIO F$ directly gives us that $\rightarrow F$ is in $\CblockT{T}$ which allows us to conclude.

In the case where the protocol executes the construct $\llet x_1,\ldots,\allowbreak x_n \suchthat \allowbreak p(M_1,\ldots,M_n) \inelse P \eelse Q$, the semantics rule (\textsc{Pred1}) ensures that $p(M_1\sigma,\ldots,\allowbreak M_n\sigma)$ is derivable from $\Cuser \cup \CblockPred$. 
Thus to satisfy \Cref{enum-satisderiv-rules} of \Cref{def:satisfy_derivation}, we only need to take the smallest derivation $\D$ of $p(M_1,\ldots,M_n)\sigma$ from $\Cuser \cup \CblockPred$.

Notice that for the semantics rule (\textsc{LetP2}) requires for the predicate to not hold; however in the translation of the construct to Horn clauses, this condition is "lost". It is one of the strong abstraction of ProVerif: it behaves as if the else branch of a $\llet \ldots \suchthat$ construct can always be executed. This is due to the fact that ProVerif does not reason with the negation of predicates.

When $\pred(F) \in \Fp$, we once again know that $F \in \sem{p}$ which by definition implies that $F$ is derivable from $\Cuser \cup \CblockPred$. By taking the smallest derivation, we can conclude.
\end{proof}


\subsection{Saturation procedure and its soundness}

As previously mentioned in the body of this paper, the selection function is extended to prevent the selection of any blocking predicate facts (including the ones from $\Fap$) and attacker facts $\fatt{x}$ with $x$ a variable.

We already gave some details on how to apply lemma in our new settings but we provide the complete definition here, as well as the application of inductive lemmas and some updated classic transformation rules.


\paragraph{Lemmas} 
Let us denote $\Cstd$ the set of clauses containing the clauses \eqref{Rl}, $(\text{Rf}_{g})$ and $(\text{Rf}_{\pi^g_i})$ for all $g \in \Fdata$ and $i$. The following transformation rule (\Rlem{\Lemma,\allowedpreds}) describes the application of (non-inductive) lemmas in $\Lemma$ with the set of predicate $\allowedpreds$.

\begin{prooftree}
	\AxiomC{
	\begin{tabular}{@{}c@{}}
	$\Cset \cup \{ R=(H \rightarrow C)\}$\qquad
	$R \not\in \Cstd$\\
	$(\bigwedge_{i=1}^n F_i \Rightarrow \bigvee_{j=1}^m \qconcl_j) \in \Lemma$\\
	$\forall i,\pred(F_i) \in \allowedpreds\text{ and either }\ToB{F_i\sigma} \in \ToB{H}$\\
	or ($F_i\sigma = C$ and $\forall j,\forall F \in \qconcl_j$, $\mgu{F\sigma,C} = \bot$)\\
	or ($F_i\sigma = C$ and $\pred(C) \in \Fp$)
	\end{tabular}
	}
	\UnaryInfC{$\Cset \cup \{ H \wedge \ToB{\qconcl_j\sigma} \rightarrow C \}_{j=1}^m$}
\end{prooftree}

Proved lemmas can be applied on both hypotheses and the conclusion. However, when applied thanks to the conclusion, we want to preserve the invariant that the facts in the hypotheses of a Horn clause occurs strictly before the conclusion of the clause. However, a lemma only guarantees us that the facts in $\bigvee_{j=1}^m \qconcl_j$ occur before or at the same time as $F_i$. Hence the condition on unification of facts from the conclusion of the query with the conclusion of the clause. Note that this does not apply when the conclusion of the clause is rooted by a user-defined predicate as a lemma only adds blocking predicates from $\Fbp \cup \Fap$ which by definition of the satisfaction relation $\satisfyIO$ are always satisfied at step $0$. 


\paragraph{Inductive lemmas}
The application inductive lemmas differ from normal lemmas by the fact that we do not try to match the conclusion of the Horn clause with the premise of the inductive lemma. Recall that an inductive lemma intuitively represents the inductive hypothesis. Hence we can only apply it on facts occurred strictly before the conclusion. For the same reason, we also do not match the premise of the inductive lemma with a fact rooted by a blocking predicate $\blocking{p}$ with $p \in \Fp$. Indeed, when such predicate occurs in the hypothesis of a clause, it means that it was added by a previous application of a lemma (inductive or not). However, our induction for user-defined predicate is done typically on the size of the minimal derivation in $\Cuser \cup \CblockPred$. Queries and lemmas do not guarantee any relation on the size of derivations between facts of their premise and their conclusion. Hence, we must not apply our inductive lemma on these facts. The following transformation rule (\Rind{\Lemma_i,\allowedpreds}) describes the application of inductive lemmas from $\Lemma_i$ with the set of predicates $\allowedpreds$.

\begin{prooftree}
	\AxiomC{
	\begin{tabular}{c}
	$\Cset \cup \{ R=(H \rightarrow C)\}$\qquad
	$R \not\in \Cstd$\\
	$(\bigwedge_{i=1}^n F_i \Rightarrow \bigvee_{j=1}^m \qconcl_j) \in \Lemma_i$\\
	$\forall i,\pred(F_i) \in \allowedpreds$ and either $F_i\sigma \in H$\\
	or ($\pred(F_i) \not\in \Fp$ and $\ToB{F_i\sigma} \in \ToB{H})$
	\end{tabular}
	}
	\RightLabel{}	\UnaryInfC{$\Cset \cup \{ H \wedge \ToB{\qconcl_j\sigma} \rightarrow C \}_{j=1}^m$}
\end{prooftree}


\paragraph{Classic transformation rule}
As previously mentioned in the body of this paper, we can amend some of the classic simplification rules to benefit from the blocking facts introduced by lemmas. Note that all these extended simplifications rules are sound but it does not mean that they should always be applied. Depending on the needs (precision, termination, \ldots), one may prefer to rely on the usual rules.

Consider for the tautology rule:

\begin{prooftree}
	\AxiomC{
	\begin{tabular}{c}
	$\Cset \cup \{ F' \wedge H \rightarrow F\}$\\
	$F = F'$ or ($\pred(F) \not\in \Fp$ and $\ToB{F} = \ToB{F'}$)
	\end{tabular}
	}
	\UnaryInfC{$\Cset$}
\end{prooftree}

This rules does not make the distinction between facts and their blocking counter fact when the predicate rooting $F$ is standard, i.e. not in $\Fp$. Indeed, for these predicates, the application of a lemma preserves the ordering, i.e. we know that $F'$ occurred strictly before $F$ in the sense of trace steps. When $F$ is rooted by $\Fp$ however, its blocking counterpart is not ordered with respect to the conclusion and can be seen as "additional information" that can be propagated later. As such, it does not necessary implies that such clause with $F' = \ToB{F}$ and $\pred(F) \in \Fp$ is useless.  

We update the redundant hypotheses rule as follows.
\begin{prooftree}
	\AxiomC{
	\begin{tabular}{c}
	$\Cset \cup \{ H' \wedge H \wedge \phi \rightarrow C\}$\quad
	$\ToB{H'\sigma} \subseteq \ToB{H}$\\
	$\dom{\sigma} \cap \vars{H,C} = \emptyset$\quad
  $\phi \models \phi\sigma$
	\end{tabular}
	}
	\UnaryInfC{$\Cset \cup \{ H \wedge \phi\sigma \rightarrow C\}$}
\end{prooftree}

We also update the attacker rule:
\begin{prooftree}
	\AxiomC{
	\begin{tabular}{c}
	$\Cset \cup \{ F \wedge H \wedge \phi \rightarrow C\}$\quad
	$\ToB{F} = \blocking{\patt}(x)$\\
	$x$ does not appear in $H$, $C$
	\end{tabular}
	}
	\UnaryInfC{$\Cset \cup \{ H \wedge \phi \rightarrow C\}$}
\end{prooftree}


\paragraph{Subsumption}
We can also adjust the definition of subsumption of clauses to make use of blocking predicates. We expect this modification to have a significant positive impact on the termination of ProVerif but it may also have a negative impact on its precision.


\begin{definition}
\label{def:subsumption}
Let $H_1 \wedge \phi_1 \rightarrow C_1$ and $H_2 \wedge \phi_2
\rightarrow C_2$ be two clauses. We say that $H_1 \wedge \phi_1 \rightarrow C_1$ subsumes $H_2 \wedge \phi_2 \rightarrow C_2$, denoted $(H_1 \wedge \phi_1 \rightarrow C_1) \subsume_{\mathsf{b}} (H_2\wedge \phi_2 \rightarrow C_2)$, when there exists $\sigma$, $n, F_1,\dots, F_n, G_1,\dots, G_n$ such that 
	\begin{itemize}
	\item $C_1\sigma = C_2$
	\item $H_2 = G_1 \wedge \dots \wedge G_n \wedge H_2'$
	\item $H_1 = F_1 \wedge \dots \wedge F_n$
	\item for all $i \in \{1,\ldots,n \}$, either $F_i\sigma = G_i$ or $F_i\sigma = \ToB{G_i}$
	\item $\phi_2 \models \phi_1\sigma$.
	\end{itemize}
\end{definition}

Notice that a fact rooted by $\blocking{p}$ can subsume a fact rooted by $p$. However, the converse is not true, that is a fact rooted by a non-blocking predicate $p$ cannot subsume its blocking counterpart $\blocking{p}$.


\paragraph{Soundness of the saturation}
For the rest of the saturation procedure, we proceed as in the current version of ProVerif. In particular, all other simplification rules remain unchanged (for natural number, data constructor, etc) as well as the general redundancy~\cite[Section 5.3]{BCC-sn22-long}.

We say that a set of clauses $\Cset$ is consistent with $\Fp$ when for all $(H \rightarrow C) \in \Cset$, 
\begin{enumerate}
  \item $\pred(C) \in \Fap \cup \Fbp$ implies $H = \emptyset$;
  \item $\pred(C) \in \Fp$ implies for all $F \in H$, $\pred(F) \in \Fp \cup \Fap \cup \Fbp$.
\end{enumerate}
This property is directly satisfied by $\Cuser$ and the initial clauses generated from the protocol. It is also preserved by the saturation procedure.

\begin{theorem}
\label{th:saturation soundness}
Let $\C_I$ be an initial instrumented configuration. Let $\allowedpreds$ be a set of predicates. Let $\Lemma$, $\Lemma_i$ be two sets of lemmas. Let $T \in \tracestepIO{\C_I}$.

For all set of clauses $\Cset$ consistent with $\Fp$ containing $\Cstd$, for all derivations $\D$ of $F$ at step $\tau$ from $\Cset \cup \CblockT{T}$, if $T, \allowedpreds \satisfyD \D$ and
\begin{itemize}
	\item $\pred(F) \in \Fp$ implies $\HypLemmas{T}{()}{(\tau)}{\Lemma}{\Lemma_i}$
	\item $\pred(F) \not\in \Fp$ implies $\HypLemmas{T}{(\tau)}{()}{\Lemma}{\Lemma_i}$
\end{itemize}
then there exists a derivation $\D'$ of $F$ at step $\tau$ from $\saturate{\Cset}{\allowedpreds}{\Lemma}{\Lemma_i} \cup \CblockT{T}$ such that $T,\allowedpreds \satisfyD \D'$.
\end{theorem}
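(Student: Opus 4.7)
The plan is to prove the theorem by induction on the sequence of transformation rules applied by the saturation procedure, showing that each single application of a rule (resolution, any simplification rule, \Rlem{\Lemma,\allowedpreds}, or \Rind{\Lemma_i,\allowedpreds}) preserves the existence of a derivation of $F$ at step $\tau$ that still satisfies Invariant~\ref{inv:derivation2} w.r.t.\ $T$ and $\allowedpreds$. The base case (no transformations) is immediate with $\D' = \D$. For the inductive step, given a valid derivation from the current set, we build one from the modified set by local surgery on $\D$.

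Classical rules are handled as in~\cite[Theorem 1]{blanchet:hal-03366962}: resolution is simulated by grafting together the relevant subtrees of $\D$ at a node where the resolved fact appears as a hypothesis, and the requirement that the selection function never picks blocking facts is precisely what preserves the strict ordering in clause~(\ref{enum-satisderiv-rules}) of Def.~\ref{def:satisfy_derivation} through the combination. For the new blocking-aware tautology rule that removes $\blocking{F} \wedge H \rightarrow F$ with $\pred(F) \notin \Fp$, no valid derivation can use this clause: its use would force an outgoing edge labeled $\blocking{F}\sigma$ at step $\tau' < \tau_0$ by clause~(\ref{enum-satisderiv-rules}), but clause~(\ref{enum-satisderiv-satisfy}) pins $\tau'$ to the minimum step satisfying $\blocking{F}\sigma$, which our extended definition of $\satisfyIO$ on blocking facts equates to the minimum step of $F\sigma$, contradicting $\tau' < \tau_0$. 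The extended redundant-hypothesis rule is sound because pruning the subderivations corresponding to facts in $H'\sigma$ from $\D$ yields a well-formed derivation from the hypotheses of $H$ alone, using $\ToB{H'\sigma} \subseteq \ToB{H}$ to match each pruned fact to a surviving one of the same ``blocking signature''. Lemma application is justified by $\HypLemmas{T}{\tuplesteps}{\tuplesteps_p}{\Lemma}{\Lemma_i}$: since each $F_i\sigma$ is satisfied by $T$ (either directly or via its blocking counterpart, by our extended $\satisfyIO$), at least one disjunct $\qconcl_j\sigma$ holds in $T$, and each fact of $\ToB{\qconcl_j\sigma}$ is derivable from $\CblockT{T}$ in a single step.

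The main obstacle is the application of inductive lemmas from $\Lemma_i$, where we must verify that the lexicographic measure $(\tuplesteps, \tuplesteps_p)$ strictly decreases when the inductive hypothesis is instantiated with facts appearing as hypotheses of the clause $R = H \rightarrow C$ on which the lemma is applied. Clause~(\ref{enum-satisderiv-rules}) of Def.~\ref{def:satisfy_derivation} supplies the strict inequality $\minstepU{T}{\Cuser}{F'\sigma} < \minstepU{T}{\Cuser}{C\sigma}$ whenever $\pred(F')$ and $\pred(C)$ lie in the same category (both standard, or both user-defined). Combined with the consistency of $\Cset$ with $\Fp$, which forbids standard hypotheses in clauses whose conclusion is user-defined, this yields strict descent of the second component when the root $F$ is user-defined (the first component remaining empty along the entire descent), strict descent of the first component when $F$ is standard and the applied inductive lemma uses only standard premises, and true lexicographic descent in the mixed case: the standard premises of the lemma match standard facts in $H$ that occur strictly before $C$ (shrinking the first component), while the user-defined premises match user-defined facts whose $\minstepU$-measure is bounded by the size of the subderivation of the corresponding hypothesis (feeding the second component). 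The inductive hypothesis then yields the corresponding conclusion $\ToB{\qconcl_j\sigma}$ as derivable from $\CblockT{T}$, and we extend $\D$ accordingly. The final restriction from $\Cset$ to $\Csat$ at the fixed point is handled exactly as in~\cite[Theorem 2]{blanchet:hal-03366962}, since our changes only concern the simplification and lemma-application rules and not the restriction step.
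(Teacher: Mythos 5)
Your proposal is correct and follows essentially the same route as the paper's own (sketched) proof: an induction over the rounds of transformation rules, rebuilding the derivation locally for each rule, with the same minimum-step contradiction for the new tautology rule (via the clause of Definition~\ref{def:satisfy_derivation} that pins edge steps to minima), the same satisfiability-in-$T$ argument showing the added facts of $\ToB{\qconcl_j\sigma}$ lie in $\CblockT{T}$, and the same lexicographic-measure descent for inductive lemmas, with the final restriction to the selection-free clauses deferred to~\cite[Theorem 2]{BCC-sn22-long}. The only point the paper's rule definitions handle that you do not mention is the extra side condition when a lemma is matched against the \emph{conclusion} of a clause (the $\mgu{F\sigma,C}=\bot$ requirement needed to preserve the strict ordering of hypotheses before the conclusion), but this does not change the overall structure of the argument.
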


\begin{proof}[Proof sketch]
The proof follows the proof of Theorem~2 of~\cite{BCC-sn22-long}. The main idea is to first show that if the derivation $\D$ of $F$ uses a Horn clauses that is transformed by one of the transformation rule, then we can build a new derivation of $\D'$ of $F$ that relies on the Horn clauses produced by the transformation rule. 
The main different is of course the addition of the additional conditions related to blocking predicates but the argument necessary for the proof are similar to the current proof. For instance, for the tautology, as previously mentioned, if $\D$ was using a clause $F' \wedge H \rightarrow F''$ with $\pred(F'') \not\in \Fp$ and $\ToB{F''} = \ToB{F'}$, we would deduce from \Cref{def:satisfy_derivation} that there exists $\tau'$ and $\tau''$ such that $T, \tau' \satisfyIO F'$ and $T, \tau'' \satisfyIO F''$ and $\tau' < \tau''$. Moreover, $\tau'$ and $\tau''$ should be the minimum values that satisfy $F'$ and $F''$ respectively (see \Cref{enum-satisderiv-satisfy}). However, as $\ToB{F''} = \ToB{F'}$, we can deduce by definition of the satisfaction relation $\satisfyIO$ that $\tau' = \tau''$ which is a contradiction with $\tau' < \tau''$. Thus, $\D$ cannot use the clause $F' \wedge H \rightarrow F''$ and hence we can discard the clause. 

Adapting the proofs for the other transformation rules follows the same kind of arguments.
\end{proof}


\subsection{Solving procedure.}

The verification of a correspondence query in ProVerif relies on an extended notion of Horn clauses, namely \emph{ordered Horn clauses}, where:
\begin{itemize}
	\item The conclusion of the Horn clause is a special fact representing a conjunction of facts. Typically, $\bigcurlywedge_{i=1}^n F_i$ represents the fact corresponding to the conjunction $\bigwedge_{i=1}^n F_i$. We refer the reader to section 6.1 of ~\cite{BCC-sn22-long} for more details.
	\item The hypothesis of the Horn clauses can be ordered with respect to each fact $F_i$ of the conclusion using a partial ordering function.
\end{itemize}


\begin{definition}[\hspace{1sp}{\cite[Definition 24]{BCC-sn22-long}}]
\label{def:order function and sets}
We call \emph{ordering function} a partial function $\delta : \mathbb{N} \rightarrow \{ \less, \lesseq \}$. We call \emph{ordered facts}, denoted $F^\delta$, a fact $F$ annotated with an ordering function $\delta$. Finally, an \emph{ordered clause}\index{ordered clause} is a clause of the form $H \rightarrow \bigcurlywedge_{i=1}^n F_i$ where each conjunct in $H$ is an ordered fact.
\end{definition}

The notion of derivation of a conjunction fact $\bigcurlywedge_{i=1}^n F_i$ and is satisfiability by a trace and a tuple of steps is extended in \cite[Definition 26]{BCC-sn22-long}. These definitions remain unchanged. In fact the only part that changes in our setting is the resolution rule where we need to take into account the changes in \Cref{enum-satisderiv-rules} of \Cref{def:satisfy_derivation}, as well as the rules for applying lemmas and inductive lemmas.


\begin{definition}
Let $\allowedpreds$ be a set of predicates. We say that an ordered clause $F_1^{\delta_1} \wedge \dots \wedge F_n^{\delta_n} \wedge \phi \rightarrow C$ satisfies $\allowedpreds$ when for all $i \in \{1,\ldots, n\}$, if $\pred(F_i) \not\in \allowedpreds$ then $\less \not\in img(\delta_i)$.
\end{definition}


\paragraph{Resolution rule}
The main difference between the resolution rule during the saturation procedure and the verification procedure in~\cite{BCC-sn22-long} comes from the assignment of ordering function on the hypotheses of the clauses from the saturated set. In \cite{BCC-sn22-long}, they relied on the invariant that all facts rooted by a predicate in $\allowedpreds$ occurred strictly before the conclusion. In our case, this is not always the case (as indicated in \Cref{enum-satisderiv-rules} of \Cref{def:satisfy_derivation}) when the hypothesis is rooted by a user-defined predicate and the conclusion of the Horn clause is rooted by a standard predicate. In that case, the ordering function associated to this fact in the hypothesis should be empty.

Given an ordering function $\delta$, we denote by $\strict{\delta}$ the partial function with the same domain as $\delta$ and such that $\strict{\delta}(i) = \less$ for all $i \in \dom{\delta}$. 
Moreover, given two facts $F,C$ and an ordering function $\delta$, we denote by $\delta_{res}(F,C,\delta)$ the ordering function defined as:
\begin{itemize}
	\item $\delta_{res}(F,C,\delta) = \emptyset$ if ($\pred(F) = \blocking{p}$ and $p \in \Fp\cup \Fap$) or ($\pred(F) \in \Fp$ and $\pred(C) \not\in \Fp$)
	\item otherwise $\delta_{res}(F,C,\delta) = \strict{\delta}$ if $\pred(F) \in \allowedpreds$
	\item otherwise $\delta_{res}(F,C,\delta) = \delta$
\end{itemize}
Typically, $C$ represents the conclusion of the saturated Horn clause, $F$ is a fact from its hypotheses and $\delta$ is the ordering function of ordered fact on which we will apply the resolution. The first bullet point covers the case where $F$ is a user-defined block predicate as we always assign an empty ordering function on them as well as the case where $F$ is a user-defined predicate derived from the protocol (correspond the negation of the condition in \Cref{enum-satisderiv-rules} of \Cref{def:satisfy_derivation}). The second and third bullet point are respectively the consequence of the second and first bullet point in \Cref{enum-satisderiv-rules} of \Cref{def:satisfy_derivation}.

The resolution rule (\RresOrd{\allowedpreds}) is therefore defined as follows:

 \begin{prooftree}
 	 \AxiomC{
 	 	\begin{tabular}{@{}c@{}}
 	 	$F_1 \wedge \dots \wedge F_n \rightarrow C \in \Csat$\quad
 	 	$\sigma = \mgu{F,C}$\\
 	 	$R = (F^\delta \wedge H_o \rightarrow C') \in \Cset$
 	 	 \quad $F \in \select(R)$\\
 	 	for all $i$, $\delta_i = \delta_{res}(F_i,C,\delta)$
 	 	\end{tabular}
 	 }
     \UnaryInfC{$F_1^{\delta_1}\sigma \wedge \dots \wedge F_n^{\delta_n}\sigma \wedge H_o\sigma \rightarrow C'\sigma$}
 \end{prooftree}


\paragraph{Lemma simplification rule}
As explained in~\cite{BCC-sn22-long}, the application conditions for a lemma during the verification procedure are basically the same as in the saturation procedure. The only difference comes from the assignment of ordering functions to the facts of the conclusion of the query. In~\cite{BCC-sn22-long}, such ordering function was applied to all facts since they were events. In our case however, we need to distinguish between standard predicates and user-defined predicates. As previously mentioned, the blocking counterpart of standard predicates can be ordered whereas the ones of user-defined predicates cannot. We take this into the transformation $\ToB{\qconcl}_\delta$. Formally, $\ToB{\qconcl}_\delta$ is the conjunction built from $\qconcl$ where all facts $F$ is replaced by  $\ToB{F}_\delta$ when $\pred(F) \not\in \Fp$ and by $\ToB{F}_\emptyset$ otherwise (here $\emptyset$ refer to the ordering function with empty domain).

We now define what it means for a lemma to match an ordered Horn clause: Given an ordered clause $R = (H \rightarrow \bigcurlywedge_{i=1}^n F'_i)$ and a lemma $\query = (\bigwedge_{i=1}^m F_i \Rightarrow \bigvee_{j=1}^k \qconcl_j)$, we say that $\query$ matches $R$ with ordering function $\delta_1,\ldots, \delta_m$ and substitution $\sigma$ when for all $i \in \{1,\ldots, m\}$, $\pred(F_i) \in \allowedpreds$ and 
\begin{itemize}
	\item either there exists $F''$ such that $F''^{\delta_i}_i \in H$ and $\ToB{F''_i} = \ToB{F_i\sigma}$
	\item or there exists $j \in \{1,\ldots, n\}$ such that $F_i\sigma = F'_j$ and $\delta_i = \{ j \mapsto \lesseq \}$.
\end{itemize}

The application rule (\RlemOrd{\Lemma,\allowedpreds}) is thus defined as follows:

\begin{prooftree}
	\AxiomC{\begin{tabular}{c}
		$\Cset \cup \{ R = (H \rightarrow \bigcurlywedge_{i=1}^n F'_i)\}$ \\
		$\query = (\bigwedge_{i=1}^m F_i \Rightarrow \bigvee_{j=1}^k \qconcl_j) \in \Lemma$\\
		$\query$ matches $R$ with $\delta_1,\ldots, \delta_m$ and $\sigma$\\
		for all $i$, $\pred(F_i) \not\in \Fp$ implies $\delta \subsume_o \delta_i$
	\end{tabular}
	}
	\UnaryInfC{$\Cset \cup \{ H \wedge \ToB{\qconcl_k\sigma}_\delta \rightarrow \bigcurlywedge_{i=1}^n F'_i\}_{j=1}^k$}
\end{prooftree}


\paragraph{Inductive lemma simplification rule}
Applying inductive lemmas consists basically to satisfy the same application conditions as for proved lemmas but we an additional condition showing that we satisfy the order $\orderind$, i.e. $(T,\tuplesteps',\tuplesteps'_p) \orderind (T,\tuplesteps,\tuplesteps_p)$. This can be achieved by analysing the ordering functions in the hypotheses of the clauses.


\begin{definition}
\label{def:equal_strict_ordering_function}
Let $n_a \leq n_b \in \mathbb{N}$. Let $\delta_1,\ldots,\delta_m$ ordering functions with $\dom{\delta_i} \subseteq \{n_a,\ldots,n_b\}$ for all $i \in \{1,\ldots, m\}$. We say that \emph{$\delta_1,\ldots, \delta_m$ are $(n_a,n_b)$-equal for $\{ j_1,\ldots, j_m\} \subseteq \{ n_a,\ldots,n_b\}$} when: $m \leq n_b - n_a + 1$ and
	\begin{enumerate}
		\item for all $k \in \{1,\ldots,m\}$, $\delta_k(j_k)$ is defined;\label{enum-strict-defined}
		\item for all $k,k' \in \{1,\ldots, m\}$, $k \neq k'$ implies $j_k \neq j_{k'}$\label{enum-strict-distinct}
	\end{enumerate} 
We say that \emph{$\delta_1,\ldots, \delta_m$ are $(n_a,n_b)$-strict for $\{ j_1,\ldots, j_m\} \subseteq \{ n_a,\ldots,n_b\}$} when $n_a \leq n_b$ and:
\begin{itemize}
	\item either for all $k \in \{1,\ldots,m\}$, there exists $i \in \{n_a,\ldots,n_b\}$ such that $\delta_k(i) = \less$
	\item or $\delta_1,\ldots, \delta_m$ are $(n_a,n_b)$-equal for $\{ j_1,\ldots, j_m\}$ and either $m \leq n_b - n_a$ or there exists $k \in \{1,\ldots,m\}$ such that $\delta_{k}(j_{k}) = \less$.
\end{itemize}
\end{definition}


\begin{lemma}
\label{lem:n_strict_ordering_function}
Let $n_a \leq n_b \in \mathbb{N}$. Let $\delta_1,\ldots,\delta_m$ be ordering functions. Let $\tau_{n_a},\ldots,\tau_{n_b}$ and $\tau'_1,\ldots,\tau'_m$ steps. Assume that for all $i \in \{n_a,\ldots,n_b\}$, for all $j \in \{1,\ldots,m\}$, $\delta_j(i)$ defined implies $\tau'_j \mathrel{\delta_j(i)} \tau_i$. 

We have: If there exists $\{ j_1,\ldots, j_m\} \subseteq \{ n_a,\ldots,n_b\}$ such that $\delta_1,\ldots,\delta_m$ are $(n_a,n_b)$-equal (resp $(n_a,n_b)$-strict) for $\{ j_1,\ldots, j_m\}$ then $\multiset{\tau'_1,\ldots,\tau'_m} \leq_m \multiset{\tau_1,\ldots,\tau_n}$ (resp. $<_m$).
\end{lemma}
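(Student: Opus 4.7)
The plan is to proceed by case analysis following the structure of \Cref{def:equal_strict_ordering_function}. The common core of the argument is that condition~\ref{enum-strict-distinct} turns $k \mapsto j_k$ into an injection from $\{1,\ldots,m\}$ into $\{n_a,\ldots,n_b\}$, while condition~\ref{enum-strict-defined} together with the global hypothesis gives $\tau'_k \mathrel{\delta_k(j_k)} \tau_{j_k}$; since $\delta_k(j_k) \in \{\less,\lesseq\}$, this always entails $\tau'_k \leq \tau_{j_k}$. I would first extract this as a preliminary remark, because both cases of the lemma reuse it.

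For the equal case, I would then decompose the big multiset along the injection, writing $\multiset{\tau_{n_a},\ldots,\tau_{n_b}} = \multiset{\tau_{j_1},\ldots,\tau_{j_m}} \uplus R$, where $R = \multiset{\tau_i : i \in \{n_a,\ldots,n_b\} \setminus \{j_1,\ldots,j_m\}}$. Pointwise $\tau'_k \leq \tau_{j_k}$ on a pair of equal-size multisets immediately gives $\multiset{\tau'_1,\ldots,\tau'_m} \leq_m \multiset{\tau_{j_1},\ldots,\tau_{j_m}}$, and attaching the residual $R$ on the right preserves the (non-strict) multiset order. This yields the first conclusion.

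For the strict case, I would split according to the two disjuncts in the definition. In the first subcase, every $k$ has some $i_k \in \{n_a,\ldots,n_b\}$ with $\delta_k(i_k) = \less$, so by hypothesis $\tau'_k < \tau_{i_k}$. This means each element of the left-hand multiset is strictly dominated by some element of the right-hand multiset, which is non-empty because $n_a \leq n_b$. I would invoke the Dershowitz--Manna characterisation with $X$ taken to be the entire right-hand multiset and $Y$ the left-hand one to conclude $\multiset{\tau'_1,\ldots,\tau'_m} <_m \multiset{\tau_{n_a},\ldots,\tau_{n_b}}$. In the second subcase, the preliminary remark again gives the injection with $\tau'_k \leq \tau_{j_k}$; strictness then comes either from the residual $R$ being non-empty (when $m \leq n_b - n_a$, so removing one element of $R$ with $X$ a singleton and $Y = \emptyset$ witnesses a strict decrease) or from some $\delta_{k_0}(j_{k_0}) = \less$, which gives $\tau'_{k_0} < \tau_{j_{k_0}}$, so $X = \multiset{\tau_{j_{k_0}}}$ and $Y = \multiset{\tau'_{k_0}}$ witnesses the strict decrease of the multiset step replacing $\tau_{j_{k_0}}$ with $\tau'_{k_0}$; in either case, strict multiset decreases compose with the non-strict inequality from the remaining pairs to yield $<_m$.

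The main obstacle is the first subcase of the strict part: unlike the equal case, there is no guarantee that the strict witnesses $i_k$ are distinct, so one cannot build an injection from them and must instead appeal to the full Dershowitz--Manna characterisation. Once one commits to that characterisation rather than trying to produce a pointwise strict pairing, everything else is bookkeeping about multiset decomposition and the fact that $\delta_k(j_k) = \less$ (resp.\ a non-empty residual) gives exactly the atomic strict multiset step needed to upgrade $\leq_m$ to $<_m$.
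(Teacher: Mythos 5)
Your proof is correct and follows essentially the same route as the paper's: the same injection $k \mapsto j_k$ with $\tau'_k \leq \tau_{j_k}$ for the equal case, and the same two-way split for the strict case (every $\tau'_k$ strictly dominated by some right-hand element in the first disjunct, versus a non-empty residual or a single strict pair in the second). The only cosmetic difference is that in the first strict subcase you invoke the Dershowitz--Manna characterisation explicitly where the paper argues via $\tau'_j < \max(\tau_{n_a},\ldots,\tau_{n_b})$; these are the same argument.
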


\begin{proof}
The proof is a reorganisation of the proof of~\cite[Lemma 14]{BCC-sn22-long}. However, since we extended the notion and is one of the fundamental result to apply inductive lemmas, we show the proof again here.

\medskip

Consider first the case where $\delta_1,\ldots,\delta_m$ are $(n_a,n_b)$-equal for $\{ j_1,\ldots, j_m\}$. Thanks to \Cref{enum-strict-distinct}, we know that the indices $j_1,\ldots, j_m$ are distinct and by \Cref{enum-strict-defined} that $\delta_k(j_k)$ is defined for all $k \in \{1,\ldots,m\}$. By being defined, we deduce that for all $k \in \{1,\ldots, m\}$, $\tau'_k \mathrel{\delta_k(j_k)} \tau_{j_k}$ meaning that $\tau'_k \leq \tau_{j_k}$. Hence, we have shown that the steps $\tau'_1,\ldots,\tau'_m$ are smaller than $m$ distinct steps of the multiset $\multiset{\tau_{n_a},\ldots,\allowbreak\tau_{n_b}}$, which entails $\multiset{\tau'_1,\ldots,\tau'_m} \leq_m \multiset{\tau_{n_a},\ldots,\tau_{n_b}}$.

When $\delta_1,\ldots,\delta_m$ are $(n_a,n_b)$-strict for $\{ j_1,\ldots, j_m\}$, we do a case analysis on which item of \Cref{def:equal_strict_ordering_function} is satisfied by $\delta_1,\ldots,\delta_m$.
\begin{itemize}
	\item Item 1: In such a case, we know that for all $j \in \{1,\ldots,m\}$, there exists $i \in \{n_a,\ldots,n_b\}$ such that $\delta_j(i) = \less$ implying that $\delta_j(i)$ is defined. Thus by hypothesis, $\tau'_j < \tau_i \leq \mathrm{max}(\tau_{n_a},\ldots,\tau_{n_b})$. Since for all $j \in \{1,\ldots, m\}$, $\tau'_j < \mathrm{max}(\tau_{n_a},\ldots,\tau_{n_b})$, we directly obtain that $\multiset{\tau'_1,\ldots,\tau'_m} <_m \multiset{\tau_{n_a},\ldots,\tau_{n_b}}$.
	\item Item 2: In such a case, we know that $\delta_1,\ldots, \delta_m$ are $(n_a,n_b)$-equal for $\{ j_1,\ldots, j_m\}$. Hence we already proved the steps $\tau_1,\ldots,\tau_m$ are smaller than $m$ distinct steps of the multiset $\multiset{\tau_{n_a},\ldots,\tau_{n_b}}$. When $m \leq n_b - n_a$, it directly entails that $\multiset{\tau'_1,\ldots,\tau'_m} <_m \multiset{\tau_{n_a},\ldots,\tau_{n_b}}$. When $m=n_b-n_a + 1$, the fact that there exists $k \in \{1,\ldots,m\}$ such that $\delta_{k}(j_{k}) = \less$ guarantees that at least one of the ordering functions is strict, and so $\tau'_k < \tau_{j_k}$. This allows us to conclude that $\multiset{\tau'_1,\ldots,\tau'_m} <_m \multiset{\tau_{n_a},\ldots,\tau_{n_b}}$.\qedhere
\end{itemize} 
\end{proof}

As previously mentioned, given a query or lemma of the form $\query = (F_1 \wedge \dots \wedge F_n \Rightarrow \qconcl)$, we will always assume the user-defined predicates occur in the premise after the standard predicates $\pevent$, $\patt$ and $\pmsg$. Hence for each query, we can associate an integer indicating the first index where a user-defined predicate occurs in the premise. By denoting $\indexPred{\query}$, we have that for all $1 \leq i < \indexPred{\query}$, $\pred(F_i) \in \{ \pevent, \patt, \pmsg\}$, and for all $\indexPred{\query} \leq i \leq n$, $\pred(F_i) \in \Fp$. When the premise does not contain user-defined predicates, we define $\indexPred{\query} = n+1$.

Similarly, as an ordered clauses $R = (H \rightarrow \bigcurlywedge_{i=1}^n F_i)$ originates from a  query, we can similarly define the first index where a user-defined predicate occurs in $\bigcurlywedge_{i=1}^n F_i$, denoted $\indexPred{R}$.


\begin{definition}
\label{def:strict-ordering}
Let $n_{idx}, n, m_{idx}, m \in \mathbb{N}$ such that $n_{idx} \leq n+1$ and $m_{idx} \leq m+1$. We say that the ordering functions $\delta_1,\ldots, \delta_m$ are strict for $(n_{idx},n)$ and $m_{idx}$ when there exist $\{ j_1,\ldots, j_{m_{idx}-1}\} \subseteq \{ 1,\ldots, n_{idx}-1\}$ and $\{ j_{m_{idx}}, \ldots, j_m\} \subseteq \{ n_{idx}, \ldots, n\}$ such that:
\begin{itemize}
	\item either $\delta_1,\ldots,\delta_{m_{idx}-1}$ are $(1,n_{idx}-1)$-strict for $\{ j_1,\ldots, j_{m_{idx}-1}\}$
	\item or $\delta_1,\ldots,\delta_{m_{idx}-1}$ are $(1,n_{idx}-1)$-equal for $\{ j_1,\ldots, j_{m_{idx}-1}\}$ and $\delta_{m_{idx}},\allowbreak\ldots,\delta_m$ are $(n_{idx},n)$-strict for $\{ j_{m_{idx}}, \ldots, j_m\}$
\end{itemize}
\end{definition}


\begin{lemma}
  \label{lem:ind}
Let $n_{idx}, n, m_{idx}, m \in \mathbb{N}$ such that $n_{idx} \leq n+1$ and $m_{idx} \leq m+1$. Let $\delta_1,\ldots, \delta_m$ be strict for $(n_{idx},n)$ and $m_{idx}$. Let $T$ be a trace.

For all tuples of steps $\tuplesteps = (\tau_1,\ldots, \tau_{n_{idx}-1})$, $\tuplesteps_p = (\tau_{n_{idx}},\ldots, \tau_n)$, $\tuplesteps' = (\tau'_1,\ldots, \tau'_{m_{idx}-1})$ and $\tuplesteps'_p = (\tau'_{m_{idx}},\ldots, \tau'_m)$, if for all $j \in \{1,\ldots,m\}$, for all $i \in \{1,\ldots, n\}$, $\delta_j(i)$ defined implies $\tau'_j \mathrel{\delta_j(i)} \tau_i$ then $(T,\tuplesteps',\tuplesteps'_p) \orderind (T,\tuplesteps,\tuplesteps_p)$
\end{lemma}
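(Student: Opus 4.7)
Since the trace component is identical on both sides, the strict order $\orderind$ reduces to lexicographic comparison of the pairs of multisets $(\multiset{\tuplesteps'},\multiset{\tuplesteps'_p})$ against $(\multiset{\tuplesteps},\multiset{\tuplesteps_p})$, where each component is compared under the multiset ordering $<_m$ on natural numbers. Concretely, it suffices to establish either $\multiset{\tau'_1,\ldots,\tau'_{m_{idx}-1}} <_m \multiset{\tau_1,\ldots,\tau_{n_{idx}-1}}$ (which closes the proof outright), or $\multiset{\tau'_1,\ldots,\tau'_{m_{idx}-1}} \leq_m \multiset{\tau_1,\ldots,\tau_{n_{idx}-1}}$ together with $\multiset{\tau'_{m_{idx}},\ldots,\tau'_m} <_m \multiset{\tau_{n_{idx}},\ldots,\tau_n}$ (since $\leq_m$ is the disjunction of $<_m$ and $=_m$, in either subcase the lexicographic relation holds).

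The proof then proceeds by case analysis on which disjunct of Definition~\ref{def:strict-ordering} is witnessed by the hypothesis that $\delta_1,\ldots,\delta_m$ are strict for $(n_{idx},n)$ and $m_{idx}$. In the first case, $\delta_1,\ldots,\delta_{m_{idx}-1}$ are $(1,n_{idx}-1)$-strict for some $\{j_1,\ldots,j_{m_{idx}-1}\}\subseteq\{1,\ldots,n_{idx}-1\}$. I would simply invoke Lemma~\ref{lem:n_strict_ordering_function} with $n_a=1$, $n_b=n_{idx}-1$, using the restriction of the global assumption on $\delta_j(i)$ to the indices $j\in\{1,\ldots,m_{idx}-1\}$ and $i\in\{1,\ldots,n_{idx}-1\}$ to conclude $\multiset{\tau'_1,\ldots,\tau'_{m_{idx}-1}} <_m \multiset{\tau_1,\ldots,\tau_{n_{idx}-1}}$, which yields $(T,\tuplesteps',\tuplesteps'_p)\orderind (T,\tuplesteps,\tuplesteps_p)$ directly.

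In the second case, $\delta_1,\ldots,\delta_{m_{idx}-1}$ are $(1,n_{idx}-1)$-equal and $\delta_{m_{idx}},\ldots,\delta_m$ are $(n_{idx},n)$-strict. I would apply Lemma~\ref{lem:n_strict_ordering_function} twice: once for the equal part (on indices $j\in\{1,\ldots,m_{idx}-1\}$, $i\in\{1,\ldots,n_{idx}-1\}$) to deduce $\multiset{\tau'_1,\ldots,\tau'_{m_{idx}-1}} \leq_m \multiset{\tau_1,\ldots,\tau_{n_{idx}-1}}$, and once for the strict part (on $j\in\{m_{idx},\ldots,m\}$, $i\in\{n_{idx},\ldots,n\}$) to deduce $\multiset{\tau'_{m_{idx}},\ldots,\tau'_m} <_m \multiset{\tau_{n_{idx}},\ldots,\tau_n}$. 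The split of indices needed by Lemma~\ref{lem:n_strict_ordering_function} is well-formed because Definition~\ref{def:equal_strict_ordering_function} already requires $\dom{\delta_j}\subseteq\{n_a,\ldots,n_b\}$ for the relevant range, so the partial ordering relation $\tau'_j\mathrel{\delta_j(i)}\tau_i$ inherited from the global hypothesis is exactly what the lemma consumes.

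The main (and really the only non-routine) subtlety lies in verifying that the definedness/ordering hypothesis of Lemma~\ref{lem:n_strict_ordering_function} is preserved under restriction to the two index windows $\{1,\ldots,n_{idx}-1\}$ and $\{n_{idx},\ldots,n\}$: this is immediate because the global assumption quantifies over all $i\in\{1,\ldots,n\}$ and all $j\in\{1,\ldots,m\}$ uniformly, and because the domain restriction on each $\delta_j$ (which comes from Definition~\ref{def:equal_strict_ordering_function}) guarantees that no further conditions arise from indices outside the intended window. Once this is observed, the lemma follows by a straightforward composition of the two multiset orderings into the lexicographic order defining $\orderind$.
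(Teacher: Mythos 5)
Your proposal is correct and follows essentially the same route as the paper's proof: reduce $\orderind$ (with identical traces) to the lexicographic comparison of the two multiset pairs, case-split on which disjunct of Definition~\ref{def:strict-ordering} holds, and discharge each case by instantiating Lemma~\ref{lem:n_strict_ordering_function} on the appropriate index window. The only cosmetic difference is that the paper explicitly splits the $\leq_m$ obtained in the second case into its strict and equal subcases before concluding, whereas you fold that into the observation that $\leq_m$ on the first component together with $<_m$ on the second already yields the lexicographic order.
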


\begin{proof}
For $(T,\tuplesteps',\tuplesteps'_p) \orderind (T,\tuplesteps,\tuplesteps_p)$ to hold, we need to prove that either $\multiset{\tau'_1,\ldots,\allowbreak \tau'_{m_{idx}-1}} <_m \multiset{\tau_1,\ldots, \tau_{n_{idx}-1}}$ or else $\multiset{\tau_1,\ldots, \allowbreak\tau_{n_{idx}-1}} = \multiset{\tau'_1,\ldots, \tau'_{m_{idx}-1}}$ and $\multiset{\tau'_{m_{idx}},\ldots, \tau'_m} <_m \multiset{\tau_{n_{idx}},\ldots, \tau_n}$.

We do a case analysis on the item satisfied in \Cref{def:strict-ordering}. In Case 1, we know that $\delta_1,\ldots,\delta_{m_{idx}-1}$ are $(1,n_{idx}-1)$-strict for $\{ j_1,\ldots, j_{m_{idx}-1}\}$, hence we conclude by applying \Cref{lem:n_strict_ordering_function} to obtain $\multiset{\tau'_1,\ldots, \tau'_{m_{idx}-1}} <_m \multiset{\tau_1,\ldots, \tau_{n_{idx}-1}}$

In Case 2, we know that $\delta_1,\ldots,\delta_{m_{idx}-1}$ are $(1,n_{idx}-1)$-equal for $\{ j_1,\ldots, \allowbreak j_{m_{idx}-1}\}$. Hence by \Cref{lem:n_strict_ordering_function}, we deduce that $\multiset{\tau_1,\ldots, \tau_{n_{idx}-1}} \leq_m \multiset{\tau'_1,\ldots, \allowbreak\tau'_{m_{idx}-1}}$. If $\multiset{\tau_1,\ldots, \tau_{n_{idx}-1}} <_m \multiset{\tau'_1,\ldots, \tau'_{m_{idx}-1}}$ then we conclude; otherwise $\multiset{\tau_1,\ldots, \tau_{n_{idx}-1}} = \multiset{\tau'_1,\ldots, \tau'_{m_{idx}-1}}$. However, since we also know that $\delta_{m_{idx}},\allowbreak\ldots,\delta_m$ are $(n_{idx},n)$-strict for $\{ j_{m_{idx}}, \ldots, j_m\}$, then we can apply \Cref{lem:n_strict_ordering_function} to obtain $\multiset{\tau'_{m_{idx}},\ldots, \tau'_m} <_m \multiset{\tau_{n_{idx}},\ldots, \tau_n}$, which allows us to conclude.
\end{proof}

We can now state our extended transformation rule for applying inductive lemmas, denoted (\RindOrd{\Lemma_i,\allowedpreds}).

\begin{prooftree}
	\AxiomC{\begin{tabular}{c}
		$\Cset \cup \{ R = (H \rightarrow \bigcurlywedge_{i=1}^n F'_i)\}$\\
		$\query = (\bigwedge_{i=1}^m F_i \Rightarrow \bigvee_{j=1}^k \qconcl_j) \in \Lemma_i$\\
		$\query$ matches $R$ with $\delta_1,\ldots, \delta_m$ and $\sigma$\\
		$\delta_1,\ldots,\delta_m$ are strict for $(\indexPred{R},n)$ and $\indexPred{\query}$\\
		for all $i$, $\pred(F_i) \not\in \Fp$ implies $\delta \subsume_o \delta_i$
	\end{tabular}
	}
	\UnaryInfC{$\Cset \cup \{ H \wedge \ToB{\qconcl_k\sigma}_\delta \rightarrow \bigcurlywedge_{i=1}^n F'_i\}_{j=1}^k$}
\end{prooftree}


\paragraph{Soundness of the solving procedure}
The soundness result of the solving procedure is given by the following theorem that is an adaptation of \cite[Theorem 4]{BCC-sn22-long}.

\begin{theorem}
\label{th:solving_correspondence}
Let $\C_I$ be an initial instrumented configuration. Let $\allowedpreds$ be a set of predicates containing the predicate $\pevent$, all predicates in $\Fbp$ and all blocking predicates. Let $\Lemma$, $\Lemma_i$ be two sets of lemmas. Let $T \in \tracestepIO{\C_I}$. Let $\Cset$ be a set of simplified and selection free clauses Horn clauses consistent with $\Fp$ containing the selection free clauses of $\Cstd$.

For all ordered clauses $R$ satisfying $\allowedpreds$, for all ordered derivations $\D$ of $\bigcurlywedge_{i=1}^m F_i$ at steps $(\tau_1,\ldots,\tau_m)$ from $\{R\}$ and $\Cset \cup \CblockT{T}$ such that $\HypLemmas{T}{(\tau_1,\allowbreak\ldots,\tau_{\indexPred{R}-1})}{(\tau_{\indexPred{R}},\ldots,\tau_m)}{\Lemma}{\Lemma_i}$ and $T,\allowedpreds \satisfyD \D$, there exists an ordered derivation $\D'$ of $\bigcurlywedge_{i=1}^m F_i$ at steps $\tuplesteps$ from $\saturateS{\{R\}}{\Cset}{\allowedpreds}{\Lemma}{\Lemma_i}$ and $\Cset \cup \CblockT{T}$ such that $T,\allowedpreds \satisfyD \D'$.
\end{theorem}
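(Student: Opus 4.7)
The plan is to prove the theorem by induction on the blocking-free size of the ordered derivation $\D$, following the structure of \cite[Theorem 4]{BCC-sn22-long} but extending each case to accommodate user-defined predicates, their blocking counterparts, and the refined ordering conventions of \Cref{def:satisfy_derivation}. For each transformation rule that $\saturateS{\{R\}}{\Cset}{\allowedpreds}{\Lemma}{\Lemma_i}$ may have applied to a clause labeling a node of $\D$, I construct an equivalent ordered derivation $\D'$ using the transformed clauses, preserving both the satisfaction invariant $T, \allowedpreds \satisfyD \D'$ and the tuple of concrete steps $(\tau_1,\ldots,\tau_m)$ witnessing the top conclusion. The consistency of $\Cset$ with $\Fp$ and the assumption that $\allowedpreds$ contains $\pevent$ together with all blocking predicates are used throughout to keep the transformations within the scope of the procedure.

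For the resolution rule \RresOrd{\allowedpreds}, I splice the premises of the selected saturated clause $F_1 \wedge \dots \wedge F_n \rightarrow C$ into $\D$ at every node whose outgoing edge is labelled by the selected fact, under the unifier $\sigma = \mgu{F,C}$. The refined ordering function $\delta_{res}(F_i,C,\delta)$ is designed precisely to match \Cref{enum-satisderiv-rules}: when $\pred(F_i) \in \Fp$ and $\pred(C) \not\in \Fp$, no strict ordering can be propagated (since the step associated with $F_i$ is a derivation size in $\Cuser \cup \CblockPred$ rather than a trace step), and when $F_i$ is rooted by a blocking user-defined predicate its satisfaction step is forced to $0$ and ordering is dropped. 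For the classical simplification rules (tautology, redundant hypotheses, attacker, subsumption), I handle the new variants involving blocking predicates. The crucial observation for the extended tautology rule is that a node labelled by $F' \wedge H \rightarrow F$ with $\ToB{F} = \ToB{F'}$ and $\pred(F) \not\in \Fp$ cannot appear in $\D$: by \Cref{enum-satisderiv-satisfy} the incoming and outgoing steps must both be the minimum satisfying step of their fact in $T$, hence equal, contradicting the strict inequality from \Cref{enum-satisderiv-rules}.

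For the lemma-application rule \RlemOrd{\Lemma,\allowedpreds}, when $\query = (\bigwedge_{i=1}^m F_i \Rightarrow \bigvee_{j=1}^k \qconcl_j) \in \Lemma$ matches a clause $R$ at a node $\eta$ with substitution $\sigma$, each premise $F_i\sigma$ is satisfied by $T$, either because its blocking counterpart appears in the hypothesis of $R$ (hence in $\CblockT{T}$ by $T, \allowedpreds \satisfyD \D$) or because $F_i\sigma = F'_j$ for some conjunct of the conclusion of $R$. Since $\query$ holds in $T$, at least one disjunct $\qconcl_j\sigma'$ is satisfied, so every fact of $\ToB{\qconcl_j\sigma'}_\delta$ is derivable by a trivial blocking clause from $\CblockT{T}$. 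These leaf derivations are spliced in with the ordering $\delta$ inherited from the matched fact, which is compatible with the concrete satisfaction steps by the matching condition $\delta \subsume_o \delta_i$ and the soundness of the lemma in $T$.

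The main obstacle is the inductive-lemma rule \RindOrd{\Lemma_i,\allowedpreds}, whose soundness rests on the new ordering machinery. Here, $\query \in \Lemma_i$ matches $R$ with ordering functions $\delta_1,\ldots,\delta_m$ that are strict for $(\indexPred{R},n)$ and $\indexPred{\query}$ in the sense of \Cref{def:strict-ordering}. Letting $\tau'_1,\ldots,\tau'_m$ be steps witnessing $T,\tau'_i \satisfyIO F_i\sigma$ and $\tau_1,\ldots,\tau_n$ the steps of the conclusion, the strictness condition together with \Cref{lem:ind} yields $(T,\tuplesteps',\tuplesteps'_p) \orderind (T,\tuplesteps,\tuplesteps_p)$, which is exactly the premise needed to invoke the second clause of $\HypLemmas{T}{\tuplesteps}{\tuplesteps_p}{\Lemma}{\Lemma_i}$ and conclude $\IndHyp{\query}{T}{\tuplesteps'}{\tuplesteps'_p}$. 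This delivers a disjunct $\qconcl_j\sigma'$ satisfied by $T$, whose blocking counterpart is again derivable from $\CblockT{T}$. The subtlety is that \Cref{def:strict-ordering} splits its requirement according to $\indexPred{\query}$, because the lexicographic component of $\orderind$ must decrease either on the standard-predicate multiset or, when that multiset is merely equal, on the user-defined-predicate multiset; this matches the separation of $\tuplesteps$ from $\tuplesteps_p$ in \Cref{def:inductive_hypothesis_main} and ensures that no inductive lemma is applied at an index that fails to decrease the well-founded measure. Once every transformation is shown to preserve derivability with the invariant, the restriction to the selection-free subset of $\Csat$, imported unchanged from \cite[Theorem 2]{blanchet:hal-03366962}, produces the required ordered derivation $\D'$ from $\saturateS{\{R\}}{\Cset}{\allowedpreds}{\Lemma}{\Lemma_i} \cup \CblockT{T}$ with $T,\allowedpreds \satisfyD \D'$.
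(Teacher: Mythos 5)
Your proposal follows essentially the same route as the paper's (much terser) proof sketch: adapt the argument of Theorem~4 of the prior work by showing, case by case, that each transformation rule of the solving procedure preserves derivability together with the invariant $T,\allowedpreds \satisfyD \D$, with the only genuinely new ingredient being the soundness of the inductive-lemma rule via \Cref{def:strict-ordering} and \Cref{lem:ind} in place of the old multiset-ordering lemma. Your treatment of the extended tautology rule and of the $\delta_{res}$ ordering functions matches the arguments the paper gives elsewhere (in the sketch of \Cref{th:saturation soundness} and in the discussion of \RresOrd{\allowedpreds}), so the proposal is correct and consistent with the paper's intended proof.
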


\begin{proof}[Sketch proof]
The proof can be directly adapted from the one of \cite[Theorem 4]{BCC-sn22-long}. It also relies on showing how derivations can be transformed and use produce Horn clauses generated by the transformation rules. The main change comes the proof of soudness of the application of inductive lemmas which now relies on \Cref{lem:ind} instead of \cite[Lemma 14]{BCC-sn22-long}.
\end{proof}


\subsection{Verification of the correspondence query}

In~\cite[Section 7]{BCC-sn22-long}, a general algorithm is described to verify correspondence queries, including queries containing nested and injective queries. For the main query to solve, we keep these algorithms exactly as they are. The key difference comes from the way we prove attacker facts in a lemma. In a simple query, say $F \Rightarrow \patt(M)$, the semantics of ProVerif allows $M$ to be decucible from the set of attacker clauses, namely $\{(\ref{RInit}),(\ref{RGen}),(\ref{RFail}),(\ref{ruleRf})\}$ (see~\cite[Definition 30]{BCC-sn22-long}). Thus, $M$ may not appear in the attacker knowledge in the trace executing $F$. To apply our lemmas with attacker facts $\patt(M)$ in their conclusion, we need  $M$ to appear in the attacker knowledge before the step where $F$ is satisfied. Otherwise, we would break the soundness of lemma simplification rule. 

Thus, when proving a Lemma, instead of requesting that $\patt(M)$ is deducible from the restricted set of attacker clauses, we will request that $\patt(M)$ occurs directly in the hypothesis of the ordered clauses. Thanks to \Cref{enum-satisderiv-rules} of \Cref{def:satisfy_derivation}, this will ensure that $M$ is already in the attacker knowledge.

As an interesting consequence, ProVerif may fail to prove a
correspondence query when it is declared as a lemma but succeed when is is declared as a standard query. 

}{}

\end{document}